\DeclareTextFontCommand{\emph}{\slshape}
\renewcommand{\paragraph}{%
	\@startsection{paragraph}{4}%
	{\z@}{1.75ex \@plus 1ex \@minus .2ex}{-0.7em}%
	{\normalfont\normalsize\bfseries}%
}
\let\originalleft\left
\let\originalright\right
\renewcommand{\left}{\mathopen{}\mathclose\bgroup\originalleft}
\renewcommand{\right}{\aftergroup\egroup\originalright}
\pgfplotsset{compat=1.10}
\setlist[enumerate,1]{label=(\arabic*)}
\setlist[itemize,1]{label=--}
\setlist[itemize,2]{label=--}
\setlist[itemize,3]{label=--}
\setlist[itemize,4]{label=--}
\theoremstyle{definition}
\newtheorem{proposition}{Proposition}
\newtheorem{lemma}{Lemma}
\newtheorem{corollary}{Corollary}
\newtheorem{remark}{Remark}
\newtheorem{observation}{Observation}
\newtheorem{example}{Example}
\newtheorem{definition}{Definition}
\newtheoremstyle{named}
	{\topsep}					
	{\topsep}					
	{}							
	{0pt}						
	{\bfseries}					
	{}							
	{5pt plus 1pt minus 1pt}	
	{\thmnote{#3}}				
\theoremstyle{named}
\newtheorem{namedthm}{}
\renewcommand{\qedsymbol}{$\blacksquare$}
\xpatchcmd{\proof}{\itshape}{\proofheadfont}{}{}
\newcommand{\proofheadfont}{\slshape}
\crefname{page}{p.}{pp.}
\crefname{equation}{equation}{equations}
\crefname{section}{section}{sections}
\crefname{subsection}{section}{sections}
\crefname{subsubsection}{section}{sections}
\crefname{appsec}{appendix}{appendices}
\crefname{supplsec}{supplemental appendix}{supplemental appendices}
\crefname{footnote}{footnote}{footnotes}
\crefname{figure}{figure}{figures}
\crefname{table}{table}{tables}
\crefname{theorem}{theorem}{theorems}
\crefname{proposition}{proposition}{propositions}
\crefname{lemma}{lemma}{lemmata}
\crefname{corollary}{corollary}{corollaries}
\crefname{remark}{remark}{remarks}
\crefname{observation}{observation}{observations}
\crefname{example}{example}{examples}
\crefname{fact}{fact}{facts}
\crefname{definition}{definition}{definitions}
\crefname{assumption}{assumption}{assumptions}
\crefname{exercise}{exercise}{exercises}
\crefname{notation}{notation}{notation}
\crefname{claim}{claim}{claims}
\crefname{conjecture}{conjecture}{conjectures}
\newcommand{\eps}{\varepsilon}
\newcommand{\dd}{\mathrm{d}}
\DeclareMathOperator*{\supp}{supp}
\newcommand{\R}{\mathbf{R}}
\newcommand{\N}{\mathbf{N}}
\newcommand{\join}{\vee}
\newcommand{\meet}{\wedge}
\newcommand{\union}{\cup}
\newcommand{\Union}{\bigcup}
\newcommand{\Intersect}{\bigcap}
\DeclarePairedDelimiter\abs{\lvert}{\rvert}
\newcommand*{\xslant}[2][76]{%
	\begingroup
	\sbox0{#2}%
	\pgfmathsetlengthmacro\wdslant{\the\wd0 + cos(#1)*\the\wd0}%
	\leavevmode
	\hbox to \wdslant{\hss
		\tikz[
			baseline=(X.base),
			inner sep=0pt,
			transform canvas={xslant=cos(#1)},
		] \node (X) {\usebox0};%
		\hss
		\vrule width 0pt height\ht0 depth\dp0 %
	}%
	\endgroup
}
\newcommand*{\xslantmath}{}
\def\xslantmath#1#{%
	\@xslantmath{#1}%
}
\newcommand*{\@xslantmath}[2]{%
	\ensuremath{%
		\mathpalette{\@@xslantmath{#1}}{#2}%
	}%
}
\newcommand*{\@@xslantmath}[3]{%
	\xslant#1{$#2#3\m@th$}%
}
\def\namedlabel#1#2{\begingroup
	#2%
	\def\@currentlabel{#2}%
	\phantomsection\label{#1}\endgroup
}
\let\save@mathaccent\mathaccent
\newcommand*\if@single[3]{%
	\setbox0\hbox{${\mathaccent"0362{#1}}^H$}%
	\setbox2\hbox{${\mathaccent"0362{\kern0pt#1}}^H$}%
	\ifdim\ht0=\ht2 #3\else #2\fi
	}
\newcommand*\rel@kern[1]{\kern#1\dimexpr\macc@kerna}
\newcommand*\widebar[1]{\@ifnextchar^{{\wide@bar{#1}{0}}}{\wide@bar{#1}{1}}}
\newcommand*\wide@bar[2]{\if@single{#1}{\wide@bar@{#1}{#2}{1}}{\wide@bar@{#1}{#2}{2}}}
\newcommand*\wide@bar@[3]{%
	\begingroup
	\def\mathaccent##1##2{%
	  \let\mathaccent\save@mathaccent
	  \if#32 \let\macc@nucleus\first@char \fi
	  \setbox\z@\hbox{$\macc@style{\macc@nucleus}_{}$}%
	  \setbox\tw@\hbox{$\macc@style{\macc@nucleus}{}_{}$}%
	  \dimen@\wd\tw@
	  \advance\dimen@-\wd\z@
	  \divide\dimen@ 3
	  \@tempdima\wd\tw@
	  \advance\@tempdima-\scriptspace
	  \divide\@tempdima 10
	  \advance\dimen@-\@tempdima
	  \ifdim\dimen@>\z@ \dimen@0pt\fi
	  \rel@kern{0.6}\kern-\dimen@
	  \if#31
	    \overline{\rel@kern{-0.6}\kern\dimen@\macc@nucleus\rel@kern{0.4}\kern\dimen@}%
	    \advance\dimen@0.4\dimexpr\macc@kerna
	    \let\final@kern#2%
	    \ifdim\dimen@<\z@ \let\final@kern1\fi
	    \if\final@kern1 \kern-\dimen@\fi
	  \else
	    \overline{\rel@kern{-0.6}\kern\dimen@#1}%
	  \fi
	}%
	\macc@depth\@ne
	\let\math@bgroup\@empty \let\math@egroup\macc@set@skewchar
	\mathsurround\z@ \frozen@everymath{\mathgroup\macc@group\relax}%
	\macc@set@skewchar\relax
	\let\mathaccentV\macc@nested@a
	\if#31
	  \macc@nested@a\relax111{#1}%
	\else
	  \def\gobble@till@marker##1\endmarker{}%
	  \futurelet\first@char\gobble@till@marker#1\endmarker
	  \ifcat\noexpand\first@char A\else
	    \def\first@char{}%
	  \fi
	  \macc@nested@a\relax111{\first@char}%
	\fi
	\endgroup
}
\def\namedlabel#1#2{\begingroup
	#2%
	\def\@currentlabel{#2}%
	\phantomsection\label{#1}\endgroup
}
\newcommand{\SC}{\mathrel{S}}
	\newcommand{\nSC}{\mathrel{\centernot{S}}}
\newcommand{\pref}{\succ}
	\newcommand{\prefeq}{\succeq}
	\newcommand{\npref}{\nsucc}
	\newcommand{\nprefeq}{\nsucceq}
\newcommand{\primgeq}{\mathrel{\unrhd}}
	\newcommand{\primg}{\mathrel{\rhd}}
	\newcommand{\primleq}{\mathrel{\unlhd}}
	\newcommand{\nprimgeq}{\mathrel{\centernot{\unrhd}}}
\newcommand{\abstrgeq}{\sqsupseteq}
	\newcommand{\abstrg}{\sqsupset}
	\newcommand{\abstrleq}{\sqsubseteq}
	\newcommand{\abstrl}{\sqsubset}
\def\comp{\mathrel{%
	\mathchoice{\COMP}{\COMP}{\scriptstyle\COMP}{\scriptscriptstyle\COMP}%
}}
\def\COMP{{%
	\setbox0\hbox{$\abstrgeq\hspace{2.3pt}$}%
	\rlap{\hbox to \wd0{\hss$\hspace{2.3pt}\abstrleq$\hss}}\box0
}}
\def\compsim{\mathrel{%
	\mathchoice{\COMPSIM}{\COMPSIM}{\scriptstyle\COMPSIM}{\scriptscriptstyle\COMPSIM}%
}}
\def\COMPSIM{{%
	\setbox0\hbox{$\primgeq\hspace{1.5pt}$}%
	\rlap{\hbox to \wd0{\hss$\hspace{1.5pt}\primleq$\hss}}\box0
}}
\title{\scshape Single-crossing dominance:\\A preference lattice%
\thanks{We are grateful to
Eddie Dekel,
Péter Es\H{o},
Alessandro Pavan,
John Quah and
Bruno Strulovici
for guidance and comments,
and to
Pierpaolo Battigalli,
Paweł Dziewulski,
Matteo Escudé,
Peter Klibanoff,
Massimo Marinacci,
Meg Meyer,
Efe Ok,
Daniel Rappoport,
Kevin Reffett,
Todd Sarver,
Eddie Schlee,
Eran Shmaya,
Marciano Siniscalchi,
Lorenzo Stanca,
Tomasz Strzalecki,
audiences at Northwestern,
and the editor, an associate editor and two anonymous referees
for helpful comments and suggestions.
Curello acknowledges support from the German Research Foundation (DFG) through CRC TR 224 (Project B02).}}
\author{Gregorio Curello \\ University of Mannheim
\and Ludvig Sinander \\ University of Oxford}
\date{6 October 2025}
\begin{document}

\maketitle

\begin{abstract}
	Most comparisons of preferences are instances of \emph{single-crossing dominance.}
	We examine the lattice structure of single-crossing dominance, proving characterisation, existence and uniqueness results for minimum upper bounds of arbitrary sets of preferences.
	We apply these theorems
	to derive new comparative statics theorems for collective choice and under analyst uncertainty, and
	to characterise a general `maxmin' class of uncertainty-averse preferences over Savage acts.
\end{abstract}

\section{Introduction}
\label{sec:introduction}

Comparisons of preferences are ubiquitous in economics: examples include
`more risk-averse/uncertainty-averse than' (in decision theory),%
	\footnote{\textcite{Yaari1969,Epstein1999,GhirardatoMarinacci2002}.}
`takes larger actions than' (in monotone comparative statics),%
	\footnote{\textcite{Topkis1978,MilgromShannon1994,QuahStrulovici2009}.}
`more aligned than' (in agency and social-choice contexts),%
	\footnote{\textcite{Kemeny1959,Grandmont1978,PuppeSlinko2019,CurelloSinander2023,CurelloSinander2024,KartikKleiner2024}.}
`more impatient than' (in dynamic problems),%
	\footnote{\textcite{Horowitz1992,BenoitOk2007}.}
and `more self-controlled than' (in models of temptation).%
	\footnote{\textcite{GulPesendorfer2001,DekelLipman2012}.}
All of these preference comparisons, and many others besides, are special cases of \emph{single-crossing dominance,} a general unified way of comparing preferences.

In this paper, we investigate the lattice structure of single-crossing dominance.
Our results characterise the minimum upper bounds of arbitrary sets of preferences, and furnish necessary and sufficient conditions for their existence and uniqueness.

We use these theorems to derive new insights in two economic settings that feature some of the aforementioned preference comparisons.
First, we derive comparative-statics theorems for collective choice
and for problems in which the analyst has only partial knowledge of a decision-maker's preference.
Second, we characterise a general class of maxmin preferences over Savage acts as minimum upper bounds with respect to `more uncertainty-averse than'.

\subsection{Overview of the theory}
\label{sec:introduction:theory}

The abstract environment consists of a non-empty set $\mathcal{X}$ of alternatives equipped with a partial order $\primgeq$.%
	\footnote{Standard order-theoretic concepts and notation are reviewed in \cref{app:appendix:definitions}.}
A \emph{preference} is a complete and transitive binary relation on $\mathcal{X}$.
We write $\mathcal{P}$ for the set of all preferences.
\emph{Single-crossing dominance}
captures a greater appetite for $\primgeq$-larger alternatives:
\begin{definition}
	\label{definition:SC}
	For two preferences $\mathord{\prefeq},\mathord{\prefeq'} \in \mathcal{P}$,
	we write $\mathord{\prefeq'} \SC \mathord{\prefeq}$ iff for any pair $x \primgeq y$ in $\mathcal{X}$, $x \prefeq \mathrel{(\pref)} y$ implies $x \prefeq' \mathrel{(\pref')} y$.
	We say that $\prefeq'$ \emph{single-crossing dominates} $\prefeq$.%
		\footnote{This definition is from \textcite{MilgromShannon1994}, in the context of monotone comparative statics.
		This type of preference comparison has been used at least since Yaari's (\citeyear{Yaari1969}) definition of `more risk-averse than' (see §\ref{sec:maxmin:risk}) and Kemeny's (\citeyear{Kemeny1959}) `betweenness'. The name `single-crossing' comes from the fact that given a set $\Theta \subseteq \R$ and, for each $\theta \in \Theta$, a preference $\mathord{\prefeq_\theta} \in \mathcal{P}$ and a utility representation $u(\cdot,\theta) : \mathcal{X} \to \R$ of $\prefeq_\theta$, the family $(\mathord{\prefeq}_\theta)_{\theta \in \Theta}$ is $\SC$-increasing (i.e. $\mathord{\prefeq}_{\theta'} \SC \mathord{\prefeq}_\theta$ for any pair $\theta' > \theta$ in $\Theta$) iff for any pair $x \primgeq y$ in $\mathcal{X}$, the function $\theta \mapsto u(x,\theta) - u(y,\theta)$ crosses zero at most once, and if so then from below: that is, $u(x,\theta) - u(y,\theta) \geq \mathrel{(>)} 0$ implies $u(x,\theta') - u(y,\theta') \geq \mathrel{(>)} 0$ for $\theta'>\theta$ in $\Theta$.}
\end{definition}

$\SC$ is a binary relation on $\mathcal{P}$. It is reflexive and transitive, but not necessarily anti-symmetric.%
	\footnote{Consider $\mathcal{X} = \{x,y\}$ with the empty partial order $\primgeq$, i.e. $x \nprimgeq y \nprimgeq x$, and let $\mathord{\prefeq},\mathord{\prefeq'} \in \mathcal{P}$ satisfy $x \pref y$ and $y \pref' x$. Then $\mathord{\prefeq} \SC \mathord{\prefeq'} \SC \mathord{\prefeq}$.}

Given a set $P \subseteq \mathcal{P}$ of preferences, a preference $\mathord{\prefeq'} \in \mathcal{P}$ is an \emph{upper bound} of $P$ iff $\mathord{\prefeq'} \SC \mathord{\prefeq}$ for every $\mathord{\prefeq} \in P$,
and a \emph{minimum} upper bound iff in addition $\mathord{\prefeq''} \SC \mathord{\prefeq'}$ for every (other) upper bound $\mathord{\prefeq''}$ of $P$.
(Minimum upper bounds are also known as `least upper bounds', `suprema' or `joins'.)
Intuitively, a minimum upper bound of $P$ is a preference that likes large alternatives more than does any preference in $P$, but only just.
Maximum lower bounds (also known as `greatest lower bounds', `infima' or `meets') are defined analogously.

In §\ref{sec:theory}, we study the \emph{lattice structure} of $(\mathcal{P},\mathord{\SC})$ by developing characterisation, existence and uniqueness results for minimum upper bounds.
Our \hyperref[theorem:characterisation]{characterisation theorem} (§\ref{sec:theory:characterisation}) describes the minimum upper bounds of arbitrary sets $P \subseteq \mathcal{P}$ of preferences.
Our \hyperref[theorem:existence]{existence theorem} (§\ref{sec:theory:existence}) identifies the condition on $\primgeq$, called \emph{crown- and diamond-freeness,} that is necessary and sufficient for every set $P \subseteq \mathcal{P}$ to possess a minimum upper bound.
Finally, our \hyperref[proposition:uniqueness]{uniqueness proposition} (§\ref{sec:theory:uniqueness}) asserts that every set $P \subseteq \mathcal{P}$ has a \emph{unique} minimum upper bound precisely if $\primgeq$ is \emph{complete.}
We extend our results to maximum lower bounds in \cref{app:appendix:meets}.

\subsection{Overview of the applications}
\label{sec:introduction:applications}

We employ our theorems to two economic domains.

\begin{namedthm}[Application to monotone comparative statics {\normalfont (§\ref{sec:compstat})}.]
	In monotone comparative statics, an agent chooses an alternative from a set $\mathcal{X} \subseteq \R$.
	The canonical result states that if the agent's preference $\mathord{\prefeq} \in \mathcal{P}$ increases in the sense of single-crossing dominance $\SC$,
	then her optimal choices $X(\mathord{\prefeq}) = \left\{ x \in \mathcal{X} : \text{$x \prefeq y$ for every $y \in \mathcal{X}$} \right\}$
	increase in the the strong set order.

	We first study collective choice:
	there is a group of agents,
	with preferences $P \subseteq \mathcal{P}$.
	We prove that when the set $P$ increases in the strong set order, so does the \emph{consensus:} the set $C(P) = \Intersect_{ \mathord{\prefeq} \in P } X(\mathord{\prefeq})$ of alternatives that every agent considers optimal.
	Generalising, we characterise comparative statics for the set $C_k(P)$ of alternatives that every individual considers at least $k^\text{th}$-best:
	it increases in the strong set order whenever $P$ does also if $k=2$ or (trivially) if $k = \abs*{\mathcal{X}}$,
	but can strictly \emph{decrease} if $2 < k < \abs*{\mathcal{X}}$.
	The proofs of these results make extensive use of the \hyperref[theorem:existence]{existence} and \hyperref[theorem:characterisation]{characterisation} theorems.%
		\footnote{Furthermore, the very possibility of comparing sets $P,P' \subseteq \mathcal{P}$ by the strong set order relies on the \hyperref[proposition:uniqueness]{uniqueness proposition}, since the strong set order is defined in terms of minimum upper bounds and maximum lower bounds.}

	Secondly, we consider comparative-statics predictions by an analyst who knows only that the agent's preference belongs to a set $P \subseteq \mathcal{P}$.
	The possibly-optimal choices from a menu $M \subseteq \mathcal{X}$
	are $X_M(P) \coloneqq \Union_{\mathord{\prefeq} \in P} X_M(\mathord{\prefeq})$, where $X_M(\mathord{\prefeq})$ are the optimal choices for $\mathord{\prefeq} \in \mathcal{P}$.
	Under a richness assumption on $P$,
	possible choices are sharply bounded by
	the choices of the minimum upper bound $\prefeq^\star$ of $P$:
	$\max X_M(P) = \max X_M(\mathord{\prefeq^\star})$ for any menu $M \subseteq \mathcal{X}$.
	The proof turns on the \hyperref[theorem:characterisation]{characterisation theorem}, and the existence of $\mathord{\prefeq^\star}$ is ensured by the \hyperref[theorem:existence]{existence theorem}.
	Comparative statics follow:
	for a shift of the uncertainty $P$ to increase $\min X_M(P)$ and $\max X_M(P)$ whatever the menu $M \subseteq \mathcal{X}$,
	it is necessary and sufficient that $P$'s minimum upper bound and maximum lower bound both increase.
\end{namedthm}

\begin{namedthm}[Application to uncertainty- and risk-aversion {\normalfont (§\ref{sec:maxmin})}.]
	In the Savage framework, there are (potentially payoff-relevant) \emph{consequences} and possible \emph{states of the world,}
	and a decision-maker has preferences over \emph{acts,} meaning maps from states to consequences.
	Let $\mathcal{X}$ be the set of all acts, and $\mathcal{P}$ the set of all preferences over acts.

	One preference is called \emph{more uncertainty-averse than} another iff whenever the latter (strictly) prefers an unambiguous act to some other act, so does the former.
	`Unambiguous act' can mean a constant act,
	or more generally one that is measurable with respect to an (exogenously-given) collection of events deemed unambiguous.
	For the rest of this summary, assume that only constant acts are unambiguous, and that consequences are monetary prizes.

	We consider preferences that are \emph{monotone} (more money for sure is better than less money for sure)
	and \emph{solvable} (every act has a certainty equivalent).
	For a monotone and solvable preference $\mathord{\prefeq} \in \mathcal{P}$,
	write $e(\mathord{\prefeq},x) \in \R$ for the (by monotonicity, unique) certainty equivalent of an act $x \in \mathcal{X}$.

	A \emph{maxmin preference} is one that is represented by $x \mapsto \min_{\mathord{\prefeq} \in P} e(\mathord{\prefeq},x)$ for some set $P \subseteq \mathcal{P}$ of monotone and solvable preferences.
	Intuitively, such a preference cautiously values each act $x \in \mathcal{X}$ according to its worst certainty equivalent among the preferences in $P$.
	Maxmin expected utility \parencite{GilboaSchmeidler1989} is the special case in which $P$ comprises only expected-utility preferences with common risk attitude (but different beliefs).

	We characterise maxmin preferences by proving that $P$ is a maxmin representation of $\prefeq^\star$ iff $\prefeq^\star$ is a minimum upper bound of $P$ with respect to `more uncertainty-averse than'.
	The key insight underlying the proof is that if $\mathord{\prefeq},\mathord{\prefeq'} \in \mathcal{P}$ are monotone, then $\prefeq'$ is more uncertainty-averse than $\prefeq$ exactly if $\mathord{\prefeq'} \SC \mathord{\prefeq}$, where $\SC$ is the single-crossing dominance relation induced by the partial order $\primgeq$ on $\mathcal{X}$ defined as follows: $x \primgeq y$ iff either (i)~$x=y$, (ii)~$x$ is constant and $y$ is not, or (iii)~$x,y$ are both constant and $x>y$.

	This result carries over to risk-aversion:
	\emph{cautious} preferences over lotteries are precisely minimum upper bounds with respect to `more risk-averse than'. The key here is that minimum upper bounds with respect to `more risk-averse than' are exactly minimum upper bounds with respect to $\SC$ induced by the partial order $\primgeq$ defined by (i)--(iii) above.
\end{namedthm}

\subsection{Related literature}
\label{sec:introduction:lit}

Our work relates to the combinatorics literature on permutation lattices \parencite[e.g.][]{BennettBirkhoff1994,Markowsky1994,DuquenneCherfouh1994}.
Here the alternatives are $\mathcal{X} = \{1,\dots,n\}$ for some $n \in \N$,
and the partial order $\primgeq$ is the ordinary inequality.
Thus $\primgeq$ is complete and $\mathcal{X}$ is finite.

In this context, anti-symmetric (i.e. never-indifferent) preferences may be thought of as \emph{permutations,}
and single-crossing dominance is known as the \emph{weak order} (or \emph{permutohedron order}).
It has been known since \textcite{GuilbaudRosenstiehl1963} and \textcite{YanagimotoOkamoto1969} that the set of all permutations equipped with the weak order is a (complete) lattice.
Our \hyperref[proposition:uniqueness]{uniqueness proposition} is a result along these lines.

Since this literature assumes that $\primgeq$ is complete, it certainly contains no analogue of our \hyperref[theorem:existence]{existence theorem}.
We are not aware of any analogue of our \hyperref[theorem:characterisation]{characterisation theorem}, either.
Besides avoiding the restrictive assumption that $\primgeq$ is complete, we differ from this literature by allowing for preferences with indifferences and by permitting $\mathcal{X}$ to be of unrestricted cardinality.

Crown- and diamond-freeness are standard concepts in combinatorics \parencite[e.g.][]{BakerFishburnRoberts1972,GriggsLiLu2012,Lu2014}.
\textcite{BallPultrSichler2006} show that crown- and diamond-freeness together are \emph{nearly} equivalent to the absence of weak cycles from the Hasse diagram.
This latter property appears in the probability literature,
where it characterises those posets $(\mathcal{X},\mathord{\primgeq})$ on which every first-order stochastically increasing family of probability measures
may be realised by an a.s. increasing process on $\mathcal{X}$ \parencite{FillMachida2001}.
\textcite{BrooksFrankelKamenica2022} prove the analogous result for second-order stochastic dominance, and apply this to beliefs and information structures.

\section{Theory}
\label{sec:theory}

In this section, we develop our general results about the lattice structure of single-crossing: our \hyperref[theorem:characterisation]{characterisation theorem} (§\ref{sec:theory:characterisation}), our \hyperref[theorem:existence]{existence theorem} (§\ref{sec:theory:existence}) and our \hyperref[proposition:uniqueness]{uniqueness proposition} (§\ref{sec:theory:uniqueness}).
Recall from §\ref{sec:introduction:theory} the abstract environment and basic definitions.

\subsection{Characterisation of minimum upper bounds}
\label{sec:theory:characterisation}

Our characterisation will be in terms of \emph{$P$-chains,} defined as follows.

\begin{definition}
	\label{definition:Pchains}
	For a set $P \subseteq \mathcal{P}$ of preferences and two alternatives $x \primgeq y$ in $\mathcal{X}$, a \emph{$P$-chain} from $x$ to $y$ is a finite sequence $(w_k)_{k=1}^K$ in $\mathcal{X}$ such that
	\begin{enumerate}[label=(\roman*)]

		\item $w_1 = x$ and $w_K = y$,

		\item for every $k<K$, $w_k \primgeq w_{k+1}$, and

		\item for every $k<K$, $w_k \prefeq w_{k+1}$ for some $\mathord{\prefeq} \in P$.

	\end{enumerate}
	A \emph{strict $P$-chain} is a $P$-chain with $w_k \pref w_{k+1}$ for some $k<K$ and $\mathord{\prefeq} \in P$.
\end{definition}

In words, a $P$-chain is a $\primgeq$-decreasing sequence of alternatives along which, at each juncture (each $k<K$), some preference in $P$ prefers the previous ($\primgeq$-larger) alternative to the subsequent ($\primgeq$-smaller) one.
Clearly a $P$-chain of length $K \geq 3$ is simply the concatenation of $K-1$ $P$-chains of length 2.

\setcounter{example}{0}
\begin{example}
	\label{example:hook_Pchain}
	Consider $\mathcal{X} = \{x,y,z,w\}$, with $\primgeq$ such that
	$x \primg w$ and $x \primg y \primg z$ (so $x \primg z$), and $w,y$ and $w,z$ are $\primg$-incomparable.
	The partial order $\primgeq$ may be depicted graphically as
	\begin{equation*}
		\vcenter{\hbox{%
		\begin{tikzpicture}[scale=0.5]

			\fill ( 0,2) circle[radius=2.5pt];
			\fill (-1,1) circle[radius=2.5pt];
			\fill ( 0,0) circle[radius=2.5pt];
			\fill ( 1,1) circle[radius=2.5pt];

			\draw[-] ( 0,0)--(-1,1)--( 0,2)--( 1,1);

			\draw ( 0,2) node[anchor=south] {$x$};
			\draw (-1,1) node[anchor=east ] {$y$};
			\draw ( 0,0) node[anchor=north] {$z$};
			\draw ( 1,1) node[anchor=west ] {$w$};

		\end{tikzpicture}%
		}}
	\end{equation*}
	In this (`Hasse') diagram, there is path from $a$ \emph{down} to $b$ iff $a \primgeq b$. We will use diagrams of this sort throughout.

	Consider $P = \{ \mathord{\prefeq_1}, \mathord{\prefeq_2} \}$,
	$z \pref_1 w \pref_1 x \pref_1 y$
	and
	$y \pref_2 z \pref_2 w \pref_2 x$.
	By inspection, $(x,y)$ and $(y,z)$ are strict $P$-chains.
	Thus there is a strict $P$-chain from $x$ to $z$, namely $(x,y,z)$.
	Note, however, that $(x,z)$ is \emph{not} a $P$-chain, since neither preference favours $x$ over $z$.

	Although $x \primgeq w$, there is no $P$-chain from $x$ to $w$: the only candidate is $(x,w)$, and it fails to be a $P$-chain since neither preference favours $x$ over $w$.
\end{example}

The following asserts that an upper bound of $P \subseteq \mathcal{P}$ is precisely a preference that (strictly) prefers a larger alternative to a smaller one whenever there is a (strict) $P$-chain between them:

\begin{lemma}
	\label{lemma:UB_characterisation}
	For a preference $\mathord{\prefeq'} \in \mathcal{P}$ and a set $P \subseteq \mathcal{P}$ of preferences, the following are equivalent:
	\begin{enumerate}

		\item \label{bullet:UB1}
		$\prefeq'$ is an upper bound of $P$.

		\item \label{bullet:UB2}
		$\prefeq'$ satisfies: for any $x,y \in \mathcal{X}$ such that $x \primgeq y$,
		\begin{enumerate}[label=(\roman*)]

			\item \label{bullet:UB_weak}
			$x \prefeq' y$ if there is a $P$-chain from $x$ to $y$, and

			\item \label{bullet:UB_strict}
			$x \pref' y$ if there is a strict $P$-chain from $x$ to $y$.

		\end{enumerate}

	\end{enumerate}
\end{lemma}

\setcounter{example}{0}
\begin{example}[continued]
	\label{example:hook_UB}
	The $P$-chains, all of them strict, are $(x,y)$, $(y,z)$ and $(x,y,z)$.
	Thus by \Cref{lemma:UB_characterisation}, a preference $\mathord{\prefeq'} \in \mathcal{P}$ is an upper bound of $P$ iff $x \pref' y \pref' z$ (and $x \pref' z$).
	Thus
	$x \pref'_a y \pref'_a z \pref'_a w$
	and
	$w \pref'_b x \pref'_b y \pref'_b z$
	are both upper bounds.
\end{example}

\begin{proof}
	\emph{\ref{bullet:UB2} implies \ref{bullet:UB1}:}
	Let $\mathord{\prefeq'} \in \mathcal{P}$ satisfy condition \ref{bullet:UB2}; we wish to show that $\mathord{\prefeq'} \SC \mathord{\prefeq}$ for any $\mathord{\prefeq} \in P$. To that end, fix a pair $x,y \in \mathcal{X}$ such that $x \primgeq y$, and suppose that $x \prefeq \mathrel{(\pref)} y$ for some $\mathord{\prefeq} \in P$; we must show that $x \prefeq' \mathrel{(\pref')} y$. This is immediate since $(x,y)$ is a (strict) $P$-chain.

	\emph{\ref{bullet:UB1} implies \ref{bullet:UB2}:}
	Let $\prefeq'$ be an upper bound of $P$.
	Fix a pair $x,y \in \mathcal{X}$ such that $x \primgeq y$; we must show that if there is a (strict) $P$-chain from $x$ to $y$, then $x \prefeq' \mathrel{(\pref')} y$.

	Suppose that there exists a $P$-chain $( w_k )_{k=1}^K$ from $x$ to $y$. For each $k < K$, we have $w_k \primgeq w_{k+1}$ as well as $w_k \prefeq_k w_{k+1}$ for some $\mathord{\prefeq_k} \in P$.
	Because $\prefeq'$ is an upper bound of $P$, it must be that $w_k \prefeq' w_{k+1}$ for each $k < K$.
	Since $\prefeq'$ is transitive (because it lives in $\mathcal{P}$), it follows that $x \prefeq' y$.

	Suppose there is a strict $P$-chain $( w_k )_{k=1}^K$ from $x$ to $y$. As in the weak case, we must have $w_k \prefeq' w_{k+1}$ for each $k < K$. Moreover, since the $P$-chain is strict, we have $w_k \pref w_{k+1}$ for some $k < K$ and $\mathord{\prefeq} \in P$; hence $w_k \pref' w_{k+1}$ since $\prefeq'$ is an upper bound of $P$.
	Thus $x \pref' y$ by the transitivity of $\prefeq'$.
\end{proof}

\Cref{lemma:UB_characterisation} says that an upper bound must have a (strict) `upward' preference whenever there is a (strict) $P$-chain. Our \hyperref[theorem:characterisation]{characterisation theorem} says that the \emph{minimum} upper bounds are those which have a (strict) `upward' preference \emph{only} when there is a (strict) $P$-chain:

\begin{namedthm}[Characterisation theorem.]
	\label{theorem:characterisation}
	For a preference $\mathord{\prefeq^\star} \in \mathcal{P}$ and a set $P \subseteq \mathcal{P}$ of preferences, the following are equivalent:
	\begin{enumerate}

		\item \label{bullet:charac1}
		$\prefeq^\star$ is a minimum upper bound of $P$.

		\item \label{bullet:charac2}
		$\prefeq^\star$ satisfies: for any $x,y \in \mathcal{X}$ such that $x \primgeq y$,
		\begin{enumerate}[label=(2\alph*)]

			\item \label{bullet:join_weak}
			$x \prefeq^\star y$ iff there is a $P$-chain from $x$ to $y$, and

			\item \label{bullet:join_strict}
			$x \pref^\star y$ iff there is a strict $P$-chain from $x$ to $y$.

		\end{enumerate}

	\end{enumerate}
\end{namedthm}

The analogous result for maximum lower bounds is given in \cref{app:appendix:meets}.

\setcounter{example}{0}
\begin{example}[continued]
	\label{example:hook_MUB}
	A minimum upper bound ${\prefeq^\star} \in \mathcal{P}$ must satisfy $x \pref^\star y \pref^\star z$ (and $x \pref^\star z$) since it is an upper bound.
	Since $x \primgeq w$ but there is no $P$-chain from $x$ to $w$, minimumhood requires that $w \pref^\star x$ by the \hyperref[theorem:characterisation]{characterisation theorem}.
	In sum, ${\prefeq^\star} \in \mathcal{P}$ is a minimum upper bound iff
	$w \pref^\star x \pref^\star y \pref^\star z$.
\end{example}

One direction of the proof is straightforward:

\begin{proof}[Proof that \ref{bullet:charac2} implies \ref{bullet:charac1}]
	\label{proof:thm1_21_join}
	Fix a subset $P$ of $\mathcal{P}$ and a $\mathord{\prefeq^\star} \in \mathcal{P}$ that satisfies \ref{bullet:join_weak}--\ref{bullet:join_strict}.
	It is immediate from \Cref{lemma:UB_characterisation} that $\prefeq^\star$ is an upper bound of $P$.
	To see that $\prefeq^\star$ is a minimum of the upper bounds of $P$, let $\prefeq'$ be any upper bound of $P$. Fix a pair $x,y \in \mathcal{X}$ such that $x \primgeq y$, and suppose that $x \prefeq^\star \mathrel{(\pref^\star)} y$. By property \ref{bullet:join_weak} (property \ref{bullet:join_strict}), there must be a (strict) $P$-chain from $x$ to $y$. Since $\prefeq'$ is an upper bound of $P$, it follows by \Cref{lemma:UB_characterisation} that $x \prefeq' \mathrel{(\pref')} y$. Since $x,y \in \mathcal{X}$ were arbitrary, this establishes that $\mathord{\prefeq'} \SC \mathord{\prefeq^\star}$.
\end{proof}

The other direction relies on the following lemma, whose proof is given in \cref{app:appendix:lemma_other_UB_pf}.

\begin{lemma}
	\label{lemma:other_UB}
	Let $P$ be a set of preferences, and let $x,y \in \mathcal{X}$ satisfy $x \primgeq y$. If there is no (strict) $P$-chain from $x$ to $y$, then there exists an upper bound $\prefeq''$ of $P$ with $x \nprefeq'' \mathrel{(\npref'')} y$.
\end{lemma}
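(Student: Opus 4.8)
The plan is to obtain $\succeq''$ by a Szpilrajn-style extension. Write $a \trianglerighteq b$ to mean that $a=b$ or there is a $P$-chain from $a$ to $b$, and $a \vartriangleright b$ to mean that there is a strict $P$-chain from $a$ to $b$. First I would record the structural facts that make these relations usable: concatenating $P$-chains shows $\trianglerighteq$ is transitive, and it is obviously reflexive, so $\trianglerighteq$ is a preorder; similarly $\vartriangleright \subseteq \trianglerighteq$, the relation $\vartriangleright$ is irreflexive (a would-be strict $P$-chain from $a$ to $a$ must, by antisymmetry of $\gtrsim$, be the constant sequence, hence contains no strict step), and concatenation also yields the ``mixed transitivity'' $\mathord{\trianglerighteq}\circ\mathord{\vartriangleright}\subseteq\mathord{\vartriangleright}$ and $\mathord{\vartriangleright}\circ\mathord{\trianglerighteq}\subseteq\mathord{\vartriangleright}$. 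The point of these is the following consequence of the \hyperref[lemma:UB_characterisation]{UB characterisation lemma}: \emph{any} $\mathord{\succeq''}\in\mathcal{P}$ with $\mathord{\trianglerighteq}\subseteq\mathord{\succeq''}$ and with $\vartriangleright$ contained in the asymmetric part $\succ''$ of $\succeq''$ is automatically an upper bound of $P$ (condition (i) of the lemma follows from $\mathord{\trianglerighteq}\subseteq\mathord{\succeq''}$, and condition (ii) from $\mathord{\vartriangleright}\subseteq\mathord{\succ''}$).

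Next, given the $\gtrsim$-comparable pair $x \gtrsim y$ of the statement, I would adjoin the single pair $(y,x)$ to $\trianglerighteq$ and close under transitivity. A short case analysis on how often the new edge is traversed in a closure-path shows that the resulting preorder is
\[
	\mathord{\bar R} \;=\; \mathord{\trianglerighteq} \;\cup\; \bigl\{ (a,b) : a \trianglerighteq y \text{ and } x \trianglerighteq b \bigr\},
\]
and one checks directly that $\bar R$ is reflexive and transitive. The crucial step—the one I expect to be the main obstacle—is to show that $\bar R$ remains \emph{compatible} with $\vartriangleright$, i.e.\ that there is no pair with $a \vartriangleright b$ and $b \mathrel{\bar R} a$. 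This is exactly where the hypothesis enters. If $b \trianglerighteq a$, then mixed transitivity gives $a \vartriangleright a$, which is impossible. If instead $b \trianglerighteq y$ and $x \trianglerighteq a$, then $x \trianglerighteq a \vartriangleright b \trianglerighteq y$ gives $x \vartriangleright y$, i.e.\ a strict $P$-chain from $x$ to $y$—contradicting the hypothesis. (Note that a strict $P$-chain from $x$ to $y$ suffices for the contradiction, which is why both cases of the lemma—``no $P$-chain'' and ``no strict $P$-chain''—are handled by the same construction.)

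Finally I would pass to the quotient. Let $\equiv$ be the symmetric part of $\bar R$; then $\bar R$ descends to a genuine partial order on $\mathcal{X}/{\equiv}$, which by Szpilrajn's extension theorem (valid for posets of arbitrary cardinality) admits a linear extension. Pulling this linear order back along $x \mapsto [x]$ gives a total preorder $\mathord{\succeq''}\in\mathcal{P}$ with $\mathord{\trianglerighteq}\subseteq\mathord{\bar R}\subseteq\mathord{\succeq''}$; moreover, if $a \vartriangleright b$ then $a \mathrel{\bar R} b$ while $a \not\equiv b$ by the compatibility step, so $a \succ'' b$, whence $\mathord{\vartriangleright}\subseteq\mathord{\succ''}$ and $\succeq''$ is an upper bound of $P$ by the remark in the first paragraph. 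It then remains only to read off the conclusion: $y \mathrel{\bar R} x$ gives $y \succeq'' x$, i.e.\ $x \nsucc'' y$ (the parenthetical version); and if there is no $P$-chain from $x$ to $y$ at all, then $x \not\trianglerighteq y$, hence $x \not\mathrel{\bar R} y$ (the ``new'' part of $\bar R$ relating $x$ to $y$ would itself require $x \trianglerighteq y$), so $x \not\equiv y$ and therefore $y \succ'' x$, i.e.\ $x \nsucceq'' y$ (the non-parenthetical version).
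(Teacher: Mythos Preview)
Your proof is correct and rests on the same core idea as the paper's---build the weak/strict $P$-chain relations, adjoin the desired reversal, and invoke an extension theorem (you use Szpilrajn on the quotient; the paper uses Suzumura's theorem and its corollary, which amounts to the same thing).

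Where you differ is in the handling of the \emph{strict} case. The paper splits the two cases. For ``no $P$-chain'' it does exactly what you do: adjoin $(y,x)$ and extend. For ``no strict $P$-chain'' it takes a more elaborate route: it defines a relation that, in addition to the weak $P$-chain order, explicitly declares $z \succeq^\vartriangle z'$ whenever $(z',z)$ is a segment of some $P$-chain from $x$ to $y$; it then verifies Suzumura-consistency via a fairly long argument that constructs auxiliary $P$-chains $C$ and $C'$ and concatenates them. Your single construction---adjoin only $(y,x)$ to $\trianglerighteq$, close transitively to get $\bar R$, and check that $\vartriangleright$ stays asymmetric with respect to $\bar R$---handles both cases at once, and the compatibility check collapses to one line of mixed transitivity ($x \trianglerighteq a \vartriangleright b \trianglerighteq y \Rightarrow x \vartriangleright y$). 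This is genuinely shorter and more uniform than the paper's treatment; the price is only that you must be a little careful that the closure formula $\bar R = \mathord{\trianglerighteq} \cup \{(a,b): a \trianglerighteq y,\ x \trianglerighteq b\}$ is already transitive even when the edge $(y,x)$ can be traversed more than once (i.e.\ when $x \trianglerighteq y$), which it is.
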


\begin{proof}[Proof that \ref{bullet:charac1} implies \ref{bullet:charac2}]
	\label{proof:thm1_join_21}
	Fix a set $P \subseteq \mathcal{P}$ of preferences and a preference $\mathord{\prefeq'} \in \mathcal{P}$.
	We will prove the contra-positive: if $\prefeq'$ violates \ref{bullet:join_weak}--\ref{bullet:join_strict}, then it cannot be a minimum upper bound of $P$.
	If a preference $\prefeq'$ violates the `if' part of \ref{bullet:join_weak} or the `only if' part of \ref{bullet:join_strict},
	then it fails to be an upper bound of $P$ by \Cref{lemma:UB_characterisation}.

	Suppose that a preference $\prefeq'$ violates the `only if' part of \ref{bullet:join_weak} (the `if' part of \ref{bullet:join_strict}):
	there are $x,y \in \mathcal{X}$ with $x \primgeq y$ such that there is no (strict) $P$-chain from $x$ to $y$, and yet $x \prefeq' \mathrel{(\pref')} y$. By \Cref{lemma:other_UB}, there is an upper bound $\prefeq''$ of $P$ such that $x \nprefeq'' \mathrel{(\npref'')} y$. Then $\mathord{\prefeq''} \nSC \mathord{\prefeq'}$, so $\prefeq'$ fails to be a minimum of the upper bounds of $P$.
\end{proof}

\subsection{Existence of minimum upper bounds}
\label{sec:theory:existence}

In this section, we provide a necessary and sufficient condition on $\primgeq$ for minimum upper bounds to exist for every set of preferences.
This condition rules out two special subposets: \emph{crowns} and \emph{diamonds.}

In §\ref{sec:theory:existence:crowns}, we define crowns and show that $\primgeq$ must be free of them if every set of preferences is to possess a minimum upper bound.
In §\ref{sec:theory:existence:diamonds}, we do the same for diamonds.
In §\ref{sec:theory:existence:theorem}, we give the \hyperref[theorem:existence]{existence theorem}, which asserts in addition that crown- and diamond-freeness of $\primgeq$ is \emph{sufficient} for existence.

\subsubsection{Crowns}
\label{sec:theory:existence:crowns}

The following example shows how existence can fail.

\begin{namedthm}[Crown example.]
	\label{example:crown_no_join}
	Consider $\mathcal{X} = \{x,y,z,w\}$ with the following order $\primgeq$:
	\begin{equation*}
		\vcenter{\hbox{%
		\begin{tikzpicture}[scale=0.5]

			\fill (0,2) circle[radius=2.5pt];
			\fill (0,0) circle[radius=2.5pt];
			\fill (2,2) circle[radius=2.5pt];
			\fill (2,0) circle[radius=2.5pt];

			\draw[-] (0,2) -- (0,0) -- (2,2) -- (2,0) -- cycle;

			\draw (0,2) node[anchor=south east] {$x$};
			\draw (0,0) node[anchor=north east] {$y$};
			\draw (2,2) node[anchor=south west] {$z$};
			\draw (2,0) node[anchor=north west] {$w$};

		\end{tikzpicture}
		}}
	\end{equation*}
	(That is, each of $x$ and $z$ $\primgeq$-dominates each of $y$ and $w$, but $x,z$ and are $\primgeq$-incomparable, as are $y,w$.)
	Let $P = \{ \mathord{\prefeq_1}, \mathord{\prefeq_2} \} \subseteq \mathcal{P}$, where
	$w \pref_1 x \pref_1 y \pref_1 z$
	and
	$y \pref_2 z \pref_2 w \pref_2 x$.
	We have $x \primgeq y$ and $z \primgeq w$, and there is a strict $P$-chain from $x$ to $y$ and from $z$ to $w$.
	On the other hand, $x \primgeq w$ and $z \primgeq y$, but there is no $P$-chain from $x$ to $w$ or from $z$ to $y$.
	So by the \hyperref[theorem:characterisation]{characterisation theorem}, a minimum upper bound $\prefeq^\star$ of $P$ must have
	$x \pref^\star y \pref^\star z \pref^\star w \pref^\star x$.
	Such a $\prefeq^\star$ cannot be transitive, so cannot live in $\mathcal{P}$. It follows that no minimum upper bound exists.
	(To illustrate,
	consider the preferences $\mathord{\prefeq'},\mathord{\prefeq''} \in \mathcal{P}$ given by $x \pref' y \pref' z \pref' w$ and $z \pref'' w \pref'' x \pref'' y$.
	Both are upper bounds, but neither is minimum since $\mathord{\prefeq'} \nSC \mathord{\prefeq''} \nSC \mathord{\prefeq'}$.)
\end{namedthm}

The problem is that $\primgeq$ features a \emph{crown,} defined as follows.

\begin{definition}
	\label{definition:crown}
	Let $\abstrgeq$ be a binary relation on a set $\mathcal{A}$.
	For $K \geq 4$ even, a \emph{$K$-crown}
	is a sequence $( a_k )_{k=1}^K$ in $\mathcal{A}$ such that
	non-adjacent $a_k,a_{k'}$ are $\abstrgeq$-incomparable,
	and $a_{k-1} \abstrg a_k \abstrl a_{k+1}$ for each even $k \in \{2,\dots,K\}$ (where $a_{K+1} \coloneqq a_1$ by convention).
	A \emph{crown} is a $K$-crown for some $K \geq 4$ even.
	The relation $\abstrgeq$ is \emph{crown-free} iff it features no crowns.
\end{definition}

Some crowns are drawn in \Cref{fig:crowns}.

\begin{figure}
	\begin{subfigure}{0.32\textwidth}
		\centering
		\begin{tikzpicture}[scale=0.5]

			\fill (0,2) circle[radius=2.5pt];
			\fill (0,0) circle[radius=2.5pt];
			\fill (2,2) circle[radius=2.5pt];
			\fill (2,0) circle[radius=2.5pt];

			\draw[-] (0,2) -- (0,0) -- (2,2) -- (2,0) -- cycle;

			\draw (0,2) node[anchor=south] {$a_1$};
			\draw (0,0) node[anchor=north] {$a_2$};
			\draw (2,2) node[anchor=south] {$a_3$};
			\draw (2,0) node[anchor=north] {$a_4$};

		\end{tikzpicture}
		\caption{A $4$-crown.}
		\label{fig:crowns:4crown}
	\end{subfigure}
	\begin{subfigure}{0.32\textwidth}
		\centering
		\begin{tikzpicture}[scale=0.5]

			\fill (0,2) circle[radius=2.5pt];
			\fill (0,0) circle[radius=2.5pt];
			\fill (2,2) circle[radius=2.5pt];
			\fill (2,0) circle[radius=2.5pt];
			\fill (4,2) circle[radius=2.5pt];
			\fill (4,0) circle[radius=2.5pt];

			\draw[-] (0,2) -- (0,0) -- (2,2) -- (2,0)
				-- (4,2) -- (4,0) -- cycle;

			\draw (0,2) node[anchor=south] {$a_1$};
			\draw (0,0) node[anchor=north] {$a_2$};
			\draw (2,2) node[anchor=south] {$a_3$};
			\draw (2,0) node[anchor=north] {$a_4$};
			\draw (4,2) node[anchor=south] {$a_5$};
			\draw (4,0) node[anchor=north] {$a_6$};

		\end{tikzpicture}
		\caption{A $6$-crown.}
		\label{fig:crowns:6crown}
	\end{subfigure}
	\begin{subfigure}{0.32\textwidth}
		\centering
		\begin{tikzpicture}[scale=0.5]

			\fill (0,2) circle[radius=2.5pt];
			\fill (0,0) circle[radius=2.5pt];
			\fill (2,2) circle[radius=2.5pt];
			\fill (2,0) circle[radius=2.5pt];
			\fill (4,2) circle[radius=2.5pt];
			\fill (4,0) circle[radius=2.5pt];
			\fill (6,2) circle[radius=2.5pt];
			\fill (6,0) circle[radius=2.5pt];

			\draw[-] (0,2) -- (0,0) -- (2,2) -- (2,0)
				-- (4,2) -- (4,0)
				-- (6,2) -- (6,0) -- cycle;

			\draw (0,2) node[anchor=south] {$a_1$};
			\draw (0,0) node[anchor=north] {$a_2$};
			\draw (2,2) node[anchor=south] {$a_3$};
			\draw (2,0) node[anchor=north] {$a_4$};
			\draw (4,2) node[anchor=south] {$a_5$};
			\draw (4,0) node[anchor=north] {$a_6$};
			\draw (6,2) node[anchor=south] {$a_7$};
			\draw (6,0) node[anchor=north] {$a_8$};

		\end{tikzpicture}
		\caption{An $8$-crown.}
		\label{fig:crowns:8crown}
	\end{subfigure}
	\caption{Crowns.}
	\label{fig:crowns}
\end{figure}

Crown-freeness rules out a specific form of incompleteness. A strong sufficient condition is completeness. A weaker sufficient condition is that the comparability relation $\comp$ be transitive.%
	\footnote{For a binary relation $\abstrgeq$, comparability $\comp$ is defined by $a \comp b$ iff either $a \abstrgeq b$ or $b \abstrgeq a$.}
Neither is necessary:
\begin{namedthm}[Diamond example.]
	\label{example:diamond_nocrown}
	Consider $\mathcal{X}$ with the partial order $\primgeq$ given by
	\begin{equation*}
		\vcenter{\hbox{%
		\begin{tikzpicture}[scale=0.5]

			\fill (1,2) circle[radius=2.5pt];
			\fill (0,1) circle[radius=2.5pt];
			\fill (2,1) circle[radius=2.5pt];
			\fill (1,0) circle[radius=2.5pt];

			\draw[-] (1,2) -- (0,1) -- (1,0) -- (2,1) -- (1,2);

			\draw (1,2) node[anchor=south] {$x$};
			\draw (0,1) node[anchor=east ] {$y$};
			\draw (2,1) node[anchor=west ] {$z$};
			\draw (1,0) node[anchor=north] {$w$};

		\end{tikzpicture}
		}}
	\end{equation*}
	(That is, $x \primgeq y \primgeq w$ and $x \primgeq z \primgeq w$, but $y,z$ are $\primgeq$-incomparable.)
	$\primgeq$ is not complete since $y,z$ are $\primgeq$-incomparable.
	Nor is comparability $\compsim$ transitive, since $y \compsim x \compsim z$ but $y,z$ are $\primgeq$-incomparable.
	But $\primgeq$ is manifestly crown-free.
\end{namedthm}

\begin{lemma}[necessity of crown-freeness]
	\label{lemma:crownfree_necessary}
	If every pair of preferences possesses a minimum upper bound, then $\primgeq$ is crown-free.
\end{lemma}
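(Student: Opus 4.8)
The plan is to argue by contraposition: supposing $\gtrsim$ contains a $K$-crown $(a_k)_{k=1}^K$, I will produce two preferences with no minimum upper bound, generalising \Cref{example:crown_no_join}. Put $m \coloneqq K/2 \geq 2$ and relabel the crown's odd terms as \emph{peaks} $p_j \coloneqq a_{2j-1}$ and its even terms as \emph{valleys} $v_j \coloneqq a_{2j}$, $j = 1, \dots, m$, with cyclic indexing ($p_{m+1} \coloneqq p_1$ and $v_0 \coloneqq v_m$). The crown axioms then say exactly: each peak $p_j$ strictly $\gtrsim$-exceeds the two valleys $v_{j-1}, v_j$ and no other crown element; peaks are pairwise $\gtrsim$-incomparable; valleys are pairwise $\gtrsim$-incomparable. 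Two consequences I will use: these are \emph{all} the $\gtrsim$-comparabilities inside $\{a_1, \dots, a_K\}$, and for each $j$ the only crown elements $c$ with $p_{j+1} \gtrsim c \gtrsim v_j$ are $c \in \{v_j, p_{j+1}\}$.

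Next I would set $P \coloneqq \{\mathord{\succeq_1}, \mathord{\succeq_2}\}$, where $\succeq_1, \succeq_2$ are linear orders on $\mathcal{X}$ such that (i) every non-crown alternative ranks strictly above every crown alternative (in both), and (ii) on the crown $\succeq_1$ is the descending order $v_m, p_1, v_1, p_2, v_2, \dots, p_{m-1}, v_{m-1}, p_m$ while $\succeq_2$ is the descending order $v_1, p_2, v_2, p_3, v_3, \dots, p_{m-1}, v_{m-1}, p_m, v_m, p_1$. (For $m = 2$ these are $v_2 \succ_1 p_1 \succ_1 v_1 \succ_1 p_2$ and $v_1 \succ_2 p_2 \succ_2 v_2 \succ_2 p_1$, i.e.\ the preferences of \Cref{example:crown_no_join}.) A direct check against these two patterns shows that, for every $j$, \emph{some} preference in $P$ has $p_j \succ v_j$ while \emph{both} have $v_j \succ p_{j+1}$. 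Hence $(p_j, v_j)$ is a strict $P$-chain, and the pair $(p_{j+1}, v_j)$ fails to be a $P$-chain.

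The crux is to strengthen the last point to: there is \emph{no} $P$-chain at all from $p_{j+1}$ to $v_j$. In such a chain $p_{j+1} = w_1 \gtrsim \dots \gtrsim w_N = v_j$, transitivity of $\gtrsim$ gives $p_{j+1} \gtrsim w_k \gtrsim v_j$ for every $k$; and no $w_k$ can be a non-crown alternative, for otherwise, taking the first such $k$, the (crown) term $w_{k-1}$ would have to be weakly preferred to $w_k$ by some member of $P$, contradicting (i). So all $w_k$ are crown elements with $p_{j+1} \gtrsim w_k \gtrsim v_j$, hence $w_k \in \{v_j, p_{j+1}\}$; since $v_j \not\gtrsim p_{j+1}$, after deleting any consecutive repeats the chain collapses to the single step $(p_{j+1}, v_j)$, which is not a $P$-chain — a contradiction. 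Now the \hyperref[theorem:characterisation]{characterisation theorem} forces any minimum upper bound $\succeq^\star$ of $P$ to satisfy, for every $j$, $p_j \succ^\star v_j$ (there is a strict $P$-chain from $p_j$ to $v_j$) and $v_j \succ^\star p_{j+1}$ (there is no $P$-chain from $p_{j+1}$ to $v_j$). Chaining these around the crown gives $p_1 \succ^\star v_1 \succ^\star p_2 \succ^\star v_2 \succ^\star \dots \succ^\star v_m \succ^\star p_1$, whence $p_1 \succ^\star p_1$ by transitivity of $\succeq^\star$ — impossible. Thus $P$ has no minimum upper bound, as required.

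I expect the delicate step to be the absence of $P$-chains from $p_{j+1}$ to $v_j$: unlike in \Cref{example:crown_no_join}, here $\mathcal{X}$ may be far larger than the crown, so one must rule out detours through outside alternatives; parking every non-crown alternative at the top of both $\succeq_1$ and $\succeq_2$ is precisely the device that achieves this. The rest is routine bookkeeping against the two interleaving patterns together with an appeal to the \hyperref[theorem:characterisation]{characterisation theorem}.
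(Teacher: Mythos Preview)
Your proof is correct and follows essentially the same route as the paper: after your relabelling, your $\succeq_1,\succeq_2$ on the crown are exactly the paper's $\succeq_a,\succeq_b$, and the contradiction via the \hyperref[theorem:characterisation]{characterisation theorem} is identical. You are, however, more careful on one point the paper glosses over: when $\mathcal{X}$ strictly contains the crown, one must extend the two preferences to all of $\mathcal{X}$ and then rule out $P$-chains from $p_{j+1}$ to $v_j$ that detour through non-crown alternatives. The paper leaves the extension implicit and jumps directly from ``$x_{k+1}\nsucceq x_k$ for all $\mathord{\succeq}\in P$'' to ``there is no $P$-chain from $x_{k+1}$ to $x_k$'', which is only immediate if all intermediate $w_\ell$ must lie in $\{x_k,x_{k+1}\}$. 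Your device of parking every non-crown alternative strictly above every crown alternative in both orders is precisely what makes that step airtight.
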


\begin{proof}
	We prove the contra-positive.
	Suppose that $\primgeq$ features a crown $(x_1,\dots,x_K)$.
	Consider $P = \{ \mathord{\prefeq_a}, \mathord{\prefeq_b} \} \subseteq \mathcal{P}$, where
	\begin{align*}
		x_K \pref_a x_1 \pref_a x_2 &\pref_a \cdots
		\pref_a x_{K-2} \pref_a x_{K-1}
		\\
		x_2 &\pref_b x_3 \pref_b \cdots
		\pref_b x_{K-1} \pref_b x_K \pref_b x_1 .
	\end{align*}
	For even $k \in \{2,\dots,K\}$, $x_{k-1} \primg x_k$ and $x_{k-1} \pref x_k$ for some $\mathord{\prefeq} \in P$ (in particular, $\prefeq_a$ for $k<K$ even, $\prefeq_b$ for $k>1$ even). Hence $(x_{k-1},x_k)$ is a strict $P$-chain, so by the \hyperref[theorem:characterisation]{characterisation theorem}, $x_{k-1} \pref^\star x_k$ for any minimum upper bound $\prefeq^\star$ of $P$.

	Moreover, for even $k \in \{2,\dots,K\}$, $x_{k+1} \primg x_k$, and $x_{k+1} \nprefeq x_k$ for all $\mathord{\prefeq} \in P$. (This is apparent, separately, for $k < K$ even and for $k=K$.) Hence there is no $P$-chain from $x_{k+1}$ to $x_k$, so by the \hyperref[theorem:characterisation]{characterisation theorem}, $x_k \pref^\star x_{k+1}$ for any minimum upper bound of $\prefeq^\star$ of $P$.

	It follows that any minimum upper bound $\prefeq^\star$ of $P$ must satisfy
	$x_1 \pref^\star x_2 \pref^\star \cdots
	\pref^\star x_{K-1} \pref^\star x_K \pref^\star x_1$.
	Such a $\prefeq^\star$ cannot be transitive, so cannot live in $\mathcal{P}$; hence $P$ admits no minimum upper bound.
\end{proof}

\subsubsection{Diamonds}
\label{sec:theory:existence:diamonds}

Existence can fail even in the absence of crowns:

\begin{namedthm}[Diamond example \textnormal{(continued)}.]
	\label{example:diamond_no_join}
	Let $P = \{ \mathord{\prefeq_1}, \mathord{\prefeq_2} \} \subseteq \mathcal{P}$, where
	$y \pref_1 w \pref_1 z \pref_1 x$
	and
	$w \pref_2 z \pref_2 x \pref_2 y$.
	Evidently $(x,y)$ and $(y,w)$ are strict $P$-chains.
	We have $z \primgeq w$ but no $P$-chain from $z$ to $w$, and $x \primgeq z$ but no $P$-chain from $x$ to $z$. So by the \hyperref[theorem:characterisation]{characterisation theorem}, a minimum upper bound $\prefeq^\star$ of $P$ must satisfy
	$x \pref^\star y \pref^\star w \pref^\star z \pref^\star x$.
	Since such a relation $\prefeq^\star$ cannot be transitive, it follows that $P$ admits no minimum upper bound.
\end{namedthm}

The trouble is that $\primgeq$ features a \emph{diamond:}

\begin{definition}
	\label{definition:diamond-free}
	Let $\abstrgeq$ be a partial order on a set $\mathcal{A}$.
	A \emph{diamond} is a length-4 sequence $(a,b,c,d)$ in $\mathcal{A}$ such that $a \abstrgeq b \abstrgeq d$ and $a \abstrgeq c \abstrgeq d$, but $b,c$ are $\abstrgeq$-incomparable.
	The order $\abstrgeq$ is \emph{diamond-free} iff it features no diamonds.
\end{definition}

Diamond-freeness is strong; for example, a diamond-free lattice is necessarily totally ordered.%
	\footnote{\label{footnote:diamond_lattice}If $(\mathcal{X},\primgeq)$ is a lattice but not totally ordered, then there are $\primgeq$-incomparable $x,y \in \mathcal{X}$, in which case $( x \meet y, x, y, x \join y)$ is a diamond, where `$x \meet y$' (`$x \join y$') denotes the maximum lower bound (minimum upper bound) of $\{x,y\}$.}
Like crown-freeness, diamond-freeness rules out a specific form of incompleteness,
and is implied by strong forms of `limited incompleteness' such as completeness of $\abstrgeq$ or transitivity of the comparability relation $\comp$.
Neither of these conditions is necessary for diamond-freeness, nor is crown-freeness:

\begin{namedthm}[Crown example \textnormal{(continued)}.]
	\label{example:crown_diamond-free}
	By inspection, there are no diamonds.
	But there is a crown,
	and $\primgeq$ is not complete,
	nor is $\compsim$ is transitive.
\end{namedthm}

\begin{lemma}[necessity of diamond-freeness]
	\label{lemma:diamondfree_necessary}
	If every pair of preferences possesses a minimum upper bound, then $\primgeq$ is diamond-free.
\end{lemma}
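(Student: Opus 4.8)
The plan is to prove the contrapositive, following the template of \Cref{lemma:crownfree_necessary}: starting from a diamond in $\gtrsim$, I would exhibit a two-element set $P \subseteq \mathcal{P}$ with no join. So suppose $(a,b,c,d)$ is a diamond, i.e. $a \gtrsim b \gtrsim d$ and $a \gtrsim c \gtrsim d$ with $b,c$ $\gtrsim$-incomparable; since the four alternatives are distinct, the order induced on $\{a,b,c,d\}$ has $a$ at the top, $d$ at the bottom and $b,c$ incomparable in between, exactly as in \Cref{example:diamond_no_join}.

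I would take $P = \{\mathord{\succeq_1},\mathord{\succeq_2}\}$ defined so that in \emph{both} preferences every element of $\mathcal{X} \setminus \{a,b,c,d\}$ is ranked strictly above each of $a,b,c,d$ (with the alternatives in $\mathcal{X} \setminus \{a,b,c,d\}$ mutually indifferent, say), while on $\{a,b,c,d\}$
\[
  b \succ_1 d \succ_1 c \succ_1 a
  \qquad\text{and}\qquad
  d \succ_2 c \succ_2 a \succ_2 b ,
\]
imitating the preferences of \Cref{example:diamond_no_join}. Each $\succeq_i$ is then a weak order on $\mathcal{X}$, so $P \subseteq \mathcal{P}$. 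Placing the remaining alternatives on top is what stops a $P$-chain from detouring through $\mathcal{X}\setminus\{a,b,c,d\}$: a $P$-chain is $\gtrsim$-decreasing with each juncture witnessed by some preference ranking the $\gtrsim$-larger element weakly above the smaller, and no $\succeq_i$ ranks any of $a,b,c,d$ weakly above an element of $\mathcal{X}\setminus\{a,b,c,d\}$.

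I would then verify three facts. (i) $(a,b,d)$ is a strict $P$-chain from $a$ to $d$: it is $\gtrsim$-decreasing, and its junctures $(a,b)$ and $(b,d)$ are witnessed strictly by $a \succ_2 b$ and $b \succ_1 d$. (ii) There is no $P$-chain from $c$ to $d$: the first genuine step of such a chain would go to a $\gtrsim$-smaller element, which by the previous paragraph cannot lie outside $\{a,b,c,d\}$ and so must be $d$ (the only element of $\{a,b,c,d\}$ strictly $\gtrsim$-below $c$), but neither $\succeq_1$ nor $\succeq_2$ ranks $c$ above $d$. (iii) There is no $P$-chain from $a$ to $c$: arguing as in (ii), the first genuine step from $a$ can only be to $b$ (among the $\gtrsim$-smaller alternatives $b,c,d$ of $a$ in $\{a,b,c,d\}$, only $b$ is ranked below $a$ by some $\succeq_i$, namely $\succeq_2$), the next only to $d$ (the unique element of $\{a,b,c,d\}$ strictly $\gtrsim$-below $b$, the step witnessed by $b \succ_1 d$), and from $d$ the chain cannot proceed, so it never reaches $c$.

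Finally I would invoke the \hyperref[theorem:characterisation]{characterisation theorem}: by (i)--(iii), any minimum upper bound $\succeq^\star$ of $P$ must satisfy $a \succ^\star d$ (since $a \gtrsim d$ with a strict $P$-chain), $c \succ^\star a$ (since $a \gtrsim c$ with no $P$-chain) and $d \succ^\star c$ (since $c \gtrsim d$ with no $P$-chain). But transitivity of $\succeq^\star \in \mathcal{P}$ gives $a \succeq^\star c$ from $a \succeq^\star d \succeq^\star c$, contradicting $c \succ^\star a$; hence $P$ has no join. I expect the delicate part to be steps (ii) and especially (iii), where one must rule out \emph{all} $P$-chains — including long ones wandering through $\mathcal{X}\setminus\{a,b,c,d\}$ — which is exactly what the `remaining alternatives on top' device is designed to handle and which the proof of \Cref{lemma:crownfree_necessary} does not need to confront.
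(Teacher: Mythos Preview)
Your argument is correct and follows the paper's own route: the paper omits the proof, pointing to \Cref{example:diamond_no_join}, and your two preferences on $\{a,b,c,d\}$ are (up to relabelling) exactly those of that example, yielding the same intransitive cycle $a \succ^\star d \succ^\star c \succ^\star a$ via the \hyperref[theorem:characterisation]{characterisation theorem}.

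Your explicit handling of alternatives outside the diamond---placing every element of $\mathcal{X}\setminus\{a,b,c,d\}$ strictly above the diamond in both preferences, so that no $P$-chain can escape $\{a,b,c,d\}$---is a genuine improvement in rigour over the paper, which leaves this extension implicit both here and in the proof of \Cref{lemma:crownfree_necessary}. One small correction to your closing remark: the crown argument in \Cref{lemma:crownfree_necessary} \emph{does} face the same issue (a $P$-chain from $x_{k+1}$ to $x_k$ could in principle detour through $\mathcal{X}$ outside the crown), and the paper simply glosses over it there too; your device resolves it equally well in both cases.
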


The proof is almost exactly the \hyperref[example:diamond_no_join]{diamond example}, so we omit it.

\subsubsection{The \texorpdfstring{\hyperref[theorem:existence]{existence theorem}}{existence theorem}}
\label{sec:theory:existence:theorem}

Our \hyperref[theorem:existence]{existence theorem} asserts that crown- and diamond-freeness are not only necessary,
but also sufficient, for the existence of minimum upper bounds.

\begin{namedthm}[Existence theorem.]
	\label{theorem:existence}
	The following are equivalent:
	\begin{enumerate}

		\item \label{bullet:ex_ub}
		Every \emph{set} of preferences has a minimum upper bound.

		\item \label{bullet:ex_ub2}
		Every \emph{pair} of preferences has a minimum upper bound.

		\item \label{bullet:ex_red}
		$\primgeq$ is crown- and diamond-free.

	\end{enumerate}
\end{namedthm}

By way of illustration, crown- and diamond-freeness fails in
the \hyperref[example:crown_no_join]{crown}
and \hyperref[example:diamond_nocrown]{diamond}
examples,
but is satisfied in \Cref{example:hook_Pchain}.
More generally, it holds whenever there are three or fewer alternatives, and fails for any lattice $(\mathcal{X},\primgeq)$ that is not totally ordered (see \cref{footnote:diamond_lattice}).
Crown- and diamond-freeness is arguably a strong property, but it does hold in some applications (see §\ref{sec:compstat}--\ref{sec:maxmin}).

In \cref{app:appendix:meets}, we show further that these properties are equivalent to every set (or two-element set) of preferences possessing a maximum lower bound.
By analogy with complete lattices, call $(\mathcal{P},\mathord{\SC})$ a \emph{pre-lattice} iff every two-element set $P \subseteq \mathcal{P}$ possesses a minimum upper bound and maximum lower bound,
and a \emph{complete pre-lattice} if this holds for \emph{every} set $P \subseteq \mathcal{P}$.%
	\footnote{$(\mathcal{P},\mathord{\SC})$ is not a complete lattice
	because $\SC$ is a mere pre-order, i.e. it need not be anti-symmetric.
	The failure of anti-symmetry means that minimum upper bounds and maximum lower bounds need not be unique; we study this issue in §\ref{sec:theory:uniqueness}.}	
The \hyperref[corollary:P_pre-lattice_join_meet]{elaborated existence theorem} in \cref{app:appendix:meets} implies that $(\mathcal{P},\mathord{\SC})$ is a complete pre-lattice iff it is a pre-lattice iff $\primgeq$ features no crowns or diamonds.

As for the proof, it is immediate that \ref{bullet:ex_ub} implies \ref{bullet:ex_ub2}, and we have already established in \Cref{lemma:crownfree_necessary,lemma:diamondfree_necessary} that \ref{bullet:ex_ub2} implies \ref{bullet:ex_red}.
Proving that \ref{bullet:ex_red} implies \ref{bullet:ex_ub} is more difficult; we do this in \cref{app:appendix:existence_pf}.
The idea is as follows.
Let $\prefeq^\circ$ be the minimal binary relation (in general incomplete) that satisfies properties \ref{bullet:join_weak}--\ref{bullet:join_strict} in the \hyperref[theorem:characterisation]{characterisation theorem}.
We show first that absent diamonds in $\primgeq$, $\prefeq^\circ$ must be `weakly transitive'.
We then show that when there are no crowns in $\primgeq$, weak transitivity of $\prefeq^\circ$ implies that it satisfies a stronger transitivity-type property called \emph{Suzumura consistency.}
This permits us to invoke an extension theorem due to \textcite{Richter1966,Suzumura1976} to conclude that $\prefeq^\circ$ may be extended to a complete and transitive relation (i.e. a preference).
This preference is a minimum upper bound by the \hyperref[theorem:characterisation]{characterisation theorem}.

\subsection{Uniqueness of minimum upper bounds}
\label{sec:theory:uniqueness}

When minimum upper bounds exist, they need not be unique:

\setcounter{example}{1}
\begin{example}
	\label{example:xy_not_unique}
	Consider $\mathcal{X} = \{x,y\}$ with the empty partial order $\primgeq$, so that $x,y$ are $\primgeq$-incomparable.
	Let $P = \{ \mathord{\prefeq_1}, \mathord{\prefeq_2} \} \subseteq \mathcal{P}$, where $x \pref_1 y$ and $y \pref_2 x$.

	Since all alternatives are $\primgeq$-incomparable, there are no $P$-chains. Conditions \ref{bullet:join_weak}--\ref{bullet:join_strict} in the \hyperref[theorem:characterisation]{characterisation theorem} are therefore (vacuously) satisfied by any preference. So by the \hyperref[theorem:characterisation]{characterisation theorem}, every preference is a minimum upper bound of $P$.
\end{example}

The message of \Cref{example:xy_not_unique} is that preferences that disagree only on $\primgeq$-incomparable pairs of alternatives $\SC$-dominate each other, leading to a multiplicity of minimum upper bounds. This is a general lesson:

\begin{namedthm}[Uniqueness proposition.]
	\label{proposition:uniqueness}
	The following are equivalent:
	\begin{enumerate}

		\item \label{bullet:uniqueness_unique}
		Every set of preferences has \emph{at most} one minimum upper bound.

		\item \label{bullet:uniqueness_exist}
		Every set of preferences has \emph{exactly} one minimum upper bound.

		\item \label{bullet:uniqueness_compl}
		$\primgeq$ is complete.

	\end{enumerate}
\end{namedthm}

The analogue for maximum lower bounds is given in \cref{app:appendix:meets}. Together, these two results imply that $(\mathcal{P},\mathord{\SC})$ is a complete lattice iff $\primgeq$ is complete.

\begin{proof}[Proof that \ref{bullet:uniqueness_compl} implies \ref{bullet:uniqueness_exist}]
	Suppose that $\primgeq$ is complete, and fix a set $P \subseteq \mathcal{P}$.
	Since completeness implies crown- and diamond-freeness, $P$ has at least one minimum upper bound by the \hyperref[theorem:existence]{existence theorem}.

	To show uniqueness, let $\mathord{\prefeq'},\mathord{\prefeq''} \in \mathcal{P}$ be minimum upper bounds of $P \subseteq \mathcal{P}$.
	Then by the \hyperref[theorem:characterisation]{characterisation theorem}, $\mathord{\prefeq'}$ and $\mathord{\prefeq''}$ must agree on all $\primgeq$-comparable pairs of alternatives.
	Since $\primgeq$ is complete, it follows that $\mathord{\prefeq'}$ and $\mathord{\prefeq''}$ agree on all pairs of alternatives, i.e. that they are identical.
\end{proof}

It is immediate that \ref{bullet:uniqueness_exist} implies \ref{bullet:uniqueness_unique}.
Our proof that \ref{bullet:uniqueness_unique} implies \ref{bullet:uniqueness_compl} relies on the following lemma, proved in \cref{app:appendix:2_UBs_pf}.

\begin{lemma}
	\label{lemma:2_UBs}
	Let $x,y \in \mathcal{X}$ be $\primgeq$-incomparable.
	Then any set $P \subseteq \mathcal{P}$ has upper bounds $\mathord{\prefeq'}, \mathord{\prefeq''} \in \mathcal{P}$ such that $x \pref' y$ and $y \pref'' x$.
\end{lemma}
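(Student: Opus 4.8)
The plan is to build the two upper bounds as \emph{linear extensions} of the partial order $\gtrsim$. The crucial observation is that \emph{any} linear extension $\mathord{\succeq'}$ of $\gtrsim$ is automatically an upper bound of \emph{every} $P \subseteq \mathcal{P}$: for every $\gtrsim$-comparable pair $a \gtrsim b$ with $a \neq b$ the linear extension satisfies $a \succ' b$, so clause \ref{bullet:UB_weak} of the \hyperref[lemma:UB_characterisation]{UB characterisation lemma} holds trivially and clause \ref{bullet:UB_strict} holds vacuously because $b \nsucceq' a$; the degenerate case $a = b$ is handled by noting that $\gtrsim$ is antisymmetric and each strict relation $\succ$ (for $\mathord{\succeq} \in P$) is irreflexive, so there is no strict $P$-chain from $a$ to $a$. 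Hence it suffices to exhibit a linear extension of $\gtrsim$ that ranks $x$ above $y$, and another that ranks $y$ above $x$.

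To obtain the first, I would let $\mathord{\gtrsim'}$ be the transitive closure of $\mathord{\gtrsim} \cup \{(x,y)\}$. Since $x$ and $y$ are $\gtrsim$-incomparable, adjoining the single pair $(x,y)$ creates no cycle --- any cycle would have to traverse the new edge, and running around it back from $y$ to $x$ along $\gtrsim$-edges would yield $y \gtrsim x$, a contradiction --- so $\gtrsim'$ is again a partial order, with $x >' y$. By Szpilrajn's extension theorem (valid for a partial order on a set of arbitrary cardinality, via Zorn's lemma), $\gtrsim'$ extends to a linear order $\mathord{\succeq'}$ on $\mathcal{X}$; this $\succeq'$ is a linear extension of $\gtrsim$ with $x \succ' y$. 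Swapping the roles of $x$ and $y$ produces in the same way a linear extension $\mathord{\succeq''}$ of $\gtrsim$ with $y \succ'' x$. By the observation of the previous paragraph, both $\succeq'$ and $\succeq''$ are upper bounds of $P$, which is what is required.

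There is no real obstacle here: the statement follows from two standard ingredients, Szpilrajn's theorem and the \hyperref[lemma:UB_characterisation]{UB characterisation lemma}. The only points that need a moment's care are (a) checking that the transitive closure $\gtrsim'$ remains antisymmetric, as indicated above, and (b) the bookkeeping for the $a = b$ case in verifying clause \ref{bullet:UB_strict}, which is where irreflexivity of $\succ$ and antisymmetry of $\gtrsim$ are used. An alternative, if one prefers to avoid transitive closures, is to apply Szpilrajn directly twice, once to $\gtrsim$ together with the extra comparison $x \succ y$ and once with $y \succ x$; the content is identical.
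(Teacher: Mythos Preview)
Your proof is correct and follows essentially the same route as the paper's: both build the two upper bounds as complete transitive extensions of $\gtrsim$ (you via Szpilrajn, the paper via its \hyperref[corollary:suzumura]{Suzumura corollary}, which amounts to the same construction), and both observe that any such extension is automatically an upper bound of every $P\subseteq\mathcal{P}$. The only cosmetic difference is that the paper verifies the upper-bound property directly from the definition of $\SC$ rather than via the \hyperref[lemma:UB_characterisation]{UB characterisation lemma}.
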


\begin{proof}[Proof that \ref{bullet:uniqueness_unique} implies \ref{bullet:uniqueness_compl}]
	We prove the contra-positive.
	Suppose that $\primgeq$ is incomplete.
	Then the grand set $\mathcal{P}$ has multiple upper bounds by \Cref{lemma:2_UBs},
	and clearly each of these is a minimum upper bound.
\end{proof}

\section{Application to monotone comparative statics}
\label{sec:compstat}

In this section, we use our theorems to extend the theory of monotone comparative statics%
	\footnote{See e.g. \textcite{Topkis1978,MilgromShannon1994,QuahStrulovici2009},
	and the textbook treatment by \textcite{Topkis1998}.}
in two directions.

In §\ref{sec:compstat:consensus},
we study the comparative statics of collective choice.
We focus on the \emph{consensus:} 
the set of alternatives that every individual in a group $P \subseteq \mathcal{P}$ considers optimal.
We show that when $P$ increases in the strong set order, so does the consensus.
We go on fully to characterise the comparative-statics properties of the set of of alternatives that each individual considers at least $k^\text{th}$-best:
this set increases in the strong set order as $P$ does also
if $k=2$ or (trivially) if $k=\abs*{\mathcal{X}}$,
but can strictly \emph{decrease} if $2 < k < \abs*{\mathcal{X}}$.

In §\ref{sec:compstat:robust},
we consider the problem of an analyst who wishes to predict how an agent will choose from any given menu $M \subseteq \mathcal{X}$ of alternatives, but is uncertain about the agent's preference: all she knows is that it belongs to a set $P \subseteq \mathcal{P}$.
We show that the minimum upper bound and maximum lower bound of $P$ provide tight bounds on possible choice, and sharply characterise how possible choices vary with the analyst's uncertainty $P$.

\subsection{The canonical theory}
\label{sec:compstat:canonical}

We begin with a brief recap of the canonical theory of monotone comparative statics.
An agent chooses an alternative $x$ from a set $\mathcal{X} \subseteq \R$ ordered by the usual inequality $\geq$.
The agent chooses optimally with respect to her preference $\mathord{\prefeq} \in \mathcal{P}$.
Denote by $X(\mathord{\prefeq})$ the (possibly empty) set of optimal alternatives for preference $\mathord{\prefeq} \in \mathcal{P}$:
\begin{equation*}
	X(\mathord{\prefeq}) \coloneqq
	\left\{
	x \in \mathcal{X} : \text{$x \prefeq y$ for every $y \in \mathcal{X}$}
	\right\} .
\end{equation*}

\begin{definition}
	\label{definition:SSO}
	Consider a lattice $(\mathcal{A},\mathord{\abstrgeq})$ and two subsets $A,B \subseteq \mathcal{A}$. $A$ dominates $B$ in the \emph{($\abstrgeq$-induced) strong set order} iff for any $a \in A$ and $b \in B$, the minimum upper bound of $\{a,b\}$ belongs to $A$ and the maximum lower bound of $\{a,b\}$ belongs to $B$.
\end{definition}

Note that $A$ dominates $B$ in the strong set order if either set is empty. The canonical monotone comparative statics (`MCS') result is the following.

\begin{namedthm}[MCS theorem.\footnote{Due to \textcite{MilgromShannon1994,LicalziVeinott1992}.}]
	\label{theorem:MS}
	Let $\mathord{\prefeq},\mathord{\prefeq'} \in \mathcal{P}$ be preferences.
	If $\mathord{\prefeq'} \SC \mathord{\prefeq}$, then $X(\mathord{\prefeq'})$ dominates $X(\mathord{\prefeq})$ in the ($\geq$-induced) strong set order.
\end{namedthm}

That is,
when the agent's preference increases in the sense of $\SC$, the set of optimal alternatives increases in the sense of the strong set order.

\begin{remark}
	\label{remark:diamond-free_lattice}
	The most general \hyperref[theorem:MS]{MCS theorem} allows the set of actions to be any lattice $(\mathcal{X},\mathord{\primgeq})$.
	This added generality is not useful for our purposes because in order to apply the \hyperref[theorem:existence]{existence theorem}, we shall require that $\primgeq$ be crown- and diamond-free, and any diamond-free lattice is necessarily totally ordered (see \cref{footnote:diamond_lattice}).
\end{remark}

\subsection{Comparative statics for collective choice}
\label{sec:compstat:consensus}

There is a group of agents, each with a preference $\mathord{\prefeq} \in \mathcal{P}$.
Write $P \subseteq \mathcal{P}$ for the set of preferences in the group.
The \emph{consensus} $C(P)$ is the set of alternatives that \emph{every} individual in the group finds optimal:
$C(P)
\coloneqq \Intersect_{ \mathord{\prefeq} \in P }
X(\mathord{\prefeq})$.

Since $\geq$ is complete, $(\mathcal{P},\SC)$ is a lattice by the \hyperref[proposition:uniqueness]{uniqueness proposition}. We may therefore use the ($\SC$-induced) strong set order to compare sets of preferences.

\begin{proposition}[consensus comparative statics]
	\label{proposition:compstat_consensus}
	Let $\mathcal{X}$ be a subset of $\R$ ordered by inequality $\geq$,
	and let $P,P' \subseteq \mathcal{P}$ be sets of preferences.
	If $P'$ dominates $P$ in the ($\SC$-induced) strong set order, then $C(P')$ dominates $C(P)$ in the ($\geq$-induced) strong set order.
\end{proposition}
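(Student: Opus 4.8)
The plan is to verify the two defining inclusions of the ($\geq$-induced) strong set order directly, reducing each to one application of the \hyperref[theorem:MS]{MCS theorem} routed through the lattice operations of $(\mathcal{P},\SC)$. Take $P$ and $P'$ non-empty (a group of agents). If $C(P)$ or $C(P')$ is empty the conclusion holds by the remark after \cref{definition:SSO}, so assume both are non-empty, fix $x\in C(P')$ and $y\in C(P)$, and recall that in the chain $(\mathcal{X},\geq)$ the join $\join$ and meet $\meet$ are $\max$ and $\min$. It suffices to show $x\join y\in C(P')$ and $x\meet y\in C(P)$.

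The crux is that, although the hypothesis does \emph{not} say that some preference in $P'$ single-crossing dominates some preference in $P$, it does permit interpolation through $(\mathcal{P},\SC)$: for every $\mathord{\succeq'}\in P'$ and $\mathord{\succeq}\in P$, domination of $P$ by $P'$ in the ($\SC$-induced) strong set order gives that the maximum lower bound $\mathord{\succeq'}\meet\mathord{\succeq}$ lies in $P$ and the minimum upper bound $\mathord{\succeq'}\join\mathord{\succeq}$ lies in $P'$, while by definition $\mathord{\succeq'}\SC(\mathord{\succeq'}\meet\mathord{\succeq})$ and $(\mathord{\succeq'}\join\mathord{\succeq})\SC\mathord{\succeq}$. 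These preference meets and joins exist because completeness of $\gtrsim$ implies crown- and diamond-freeness, so that the \hyperref[theorem:existence]{existence theorem} applies, and they are unique by the \hyperref[proposition:uniqueness]{uniqueness proposition}; this is both why the $\SC$-induced strong set order is meaningful here and why the argument "hinges" on the existence theorem.

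Granting this, the two inclusions follow symmetrically. For $x\join y\in C(P')$: fix an arbitrary $\mathord{\succeq'}\in P'$ and any $\mathord{\succeq}\in P$; applying the \hyperref[theorem:MS]{MCS theorem} to $\mathord{\succeq'}\SC(\mathord{\succeq'}\meet\mathord{\succeq})$ gives that $X(\mathord{\succeq'})$ dominates $X(\mathord{\succeq'}\meet\mathord{\succeq})$ in the ($\geq$-induced) strong set order; since $x\in C(P')\subseteq X(\mathord{\succeq'})$ and $y\in C(P)\subseteq X(\mathord{\succeq'}\meet\mathord{\succeq})$ (because $\mathord{\succeq'}\meet\mathord{\succeq}\in P$), this forces $x\join y\in X(\mathord{\succeq'})$; as $\mathord{\succeq'}\in P'$ was arbitrary, $x\join y\in C(P')$. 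For $x\meet y\in C(P)$: fix an arbitrary $\mathord{\succeq}\in P$ and any $\mathord{\succeq'}\in P'$; applying the \hyperref[theorem:MS]{MCS theorem} to $(\mathord{\succeq'}\join\mathord{\succeq})\SC\mathord{\succeq}$ gives that $X(\mathord{\succeq'}\join\mathord{\succeq})$ dominates $X(\mathord{\succeq})$; since $x\in X(\mathord{\succeq'}\join\mathord{\succeq})$ (because $\mathord{\succeq'}\join\mathord{\succeq}\in P'$) and $y\in C(P)\subseteq X(\mathord{\succeq})$, this forces $x\meet y\in X(\mathord{\succeq})$; as $\mathord{\succeq}\in P$ was arbitrary, $x\meet y\in C(P)$.

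I do not expect a genuine obstacle: this is the standard pattern by which the strong set order is inherited by intersections, the only new ingredient being that the single-crossing comparison must pass through the meet and join in $(\mathcal{P},\SC)$ rather than be applied to bare elements of $P'$ and $P$. The only places needing a little care are the appeal to the \hyperref[theorem:existence]{existence theorem} and \hyperref[proposition:uniqueness]{uniqueness proposition} to legitimise those preference meets and joins, and the degenerate empty-set cases, which are dispatched by the remark after \cref{definition:SSO}.
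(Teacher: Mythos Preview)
Your proof is correct and follows essentially the same approach as the paper's: handle the empty cases trivially, then for each target preference in the relevant set, route through the join (or meet) in $(\mathcal{P},\SC)$ to obtain a single-crossing comparison to which the \hyperref[theorem:MS]{MCS theorem} applies. The paper proves only the $x\meet x'\in C(P)$ direction explicitly (using the join, exactly as you do) and declares the other similar; you spell out both directions, using the meet for the $x\join y\in C(P')$ inclusion, which is the natural dual.
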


In other words, when agents' preferences shift up in the sense of the strong set order, so does the consensus.
Note that it may be that either $C(P)$ or $C(P')$ is empty, in which case the conclusion holds automatically.

\begin{proof}
	Fix $P,P' \subseteq \mathcal{P}$ such that $P'$ dominates $P$ in the $\SC$-induced strong set order.
	The conclusion is immediate if either $C(P)$ or $C(P')$ is empty, so suppose not.
	Take $x \in C(P)$ and $x' \in C(P')$, and let $x \meet x'$ ($x \join x'$) denote the maximum lower bound (minimum upper bound) of $\{x,x'\}$; we must show that $x \meet x'$ lies in $C(P)$ and that $x \join x'$ lies in $C(P')$.
	We will prove the former; the proof of the latter is similar.

	Take any $\mathord{\prefeq} \in P$ and $\mathord{\prefeq'} \in P'$.
	Since the order $\geq$ on $\mathcal{X}$ is complete, the set $\{ \mathord{\prefeq}, \mathord{\prefeq'} \}$ possesses a minimum upper bound $\prefeq^\star$ by the \hyperref[theorem:existence]{existence theorem}.
	Since $P'$ dominates $P$ in the $\SC$-induced strong set order, the minimum upper bound $\prefeq^\star$ lies in $P'$.
	Because $\mathord{\prefeq^\star} \SC \mathord{\prefeq}$, the \hyperref[theorem:MS]{MCS theorem} implies that $X(\mathord{\prefeq^\star})$ dominates $X(\mathord{\prefeq})$ in the $\geq$-induced strong set order.
	Since $x \in C(P) \subseteq X(\mathord{\prefeq})$ and $x' \in C(P') \subseteq X(\mathord{\prefeq^\star})$, it follows that $x \meet x' \in X(\mathord{\prefeq})$.
	Since $\mathord{\prefeq} \in P$ was arbitrary, this shows that $x \meet x' \in C(P)$.
\end{proof}

\Cref{proposition:compstat_consensus} can be used to study comparative statics for social choice.
A \emph{preference profile} is an element $\pi = (\mathord{\prefeq_1},\dots,\mathord{\prefeq_k})$ of $\Union_{n \in \N} \mathcal{P}^n$, and its \emph{support} is the set $\supp \pi \coloneqq \{ \mathord{\prefeq_1},\dots,\mathord{\prefeq_k} \}$ of all preferences represented in it. A \emph{social choice function (SCF)} is a map $\phi : \Union_{n \in \N} \mathcal{P}^n \to \mathcal{X}$ that picks an alternative for each preference profile. It is \emph{monotone} iff $\phi(\pi') \geq \phi(\pi)$ whenever $\supp \pi'$ dominates $\supp \pi$ in the ($\SC$-induced) strong set order,
and \emph{respects unanimity} iff $\phi(\pi) \in C(\supp \pi)$ whenever the latter is non-empty.
The following corollary of \Cref{proposition:compstat_consensus} is proved in \cref{app:supp_appendix:consensus_selection}:

\begin{corollary}
	\label{corollary:consensus_selection}
	Let $\mathcal{X} \subseteq \R$ be compact.
	Then there exists a monotone SCF that respects unanimity.
\end{corollary}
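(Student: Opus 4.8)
The plan is to produce $\phi$ as a \emph{monotone selection} from the consensus correspondence $P \mapsto C(P)$. The one substantive input is \Cref{proposition:compstat_consensus}: it says that $C(\cdot)$ is ascending for the two strong set orders, i.e.\ $C(P')$ dominates $C(P)$ in the $\geq$-induced strong set order whenever $P'$ dominates $P$ in the $\SC$-induced one (recall $(\mathcal{P},\SC)$ is a complete lattice by the \hyperref[proposition:uniqueness]{uniqueness proposition}, so the latter order is available). Compactness of $\mathcal{X}$ enters only through the elementary facts that $\mathcal{X}$ has a least and a greatest element and that the supremum and infimum of any non-empty subset of $\mathcal{X}$ again lie in $\mathcal{X}$.

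I would build $\phi$ by transfinite recursion (equivalently, by Zorn's lemma on partial monotone selections ordered by extension). Well-order $2^{\mathcal{P}}\setminus\{\varnothing\}$, placing every $P$ with $C(P)\neq\varnothing$ before every $P$ with $C(P)=\varnothing$, and define $\phi$ stage by stage. When a set $P$ is reached, the values already assigned restrict $\phi(P)$ precisely via the monotonicity requirements ``$\phi(Q)\leq\phi(P)$ for already-handled $Q$ below $P$'' and ``$\phi(P)\leq\phi(Q)$ for already-handled $Q$ above $P$''; these carve out a closed interval $[\ell,u]\subseteq\mathcal{X}$, with $\ell$ the supremum of the lower constraints and $\min\mathcal{X}$, and $u$ the infimum of the upper constraints and $\max\mathcal{X}$. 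Transitivity of the strong set order together with the monotonicity maintained so far give $\ell\leq u$, and closedness of $\mathcal{X}$ gives $\ell,u\in\mathcal{X}$. If $C(P)=\varnothing$, set $\phi(P):=\ell$. If $C(P)\neq\varnothing$, set $\phi(P)$ to be a smallest feasible element of $C(P)\cap[\ell,u]$ (near-minimal if no minimum exists). The resulting $\phi$ is monotone by construction, and it respects unanimity because $\phi(P)\in C(P)$ whenever $C(P)\neq\varnothing$.

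The only real work is to show the recursion never stalls: that $C(P)\cap[\ell,u]\neq\varnothing$ at every stage with $C(P)\neq\varnothing$ (the empty case being immediate, since $\ell\in[\ell,u]\cap\mathcal{X}$). Here I would use the rigidity of strong-set-order domination on a chain: for non-empty $A,B\subseteq\R$ with $A$ dominating $B$, either $A\cap B\neq\varnothing$ or every element of $A$ weakly exceeds every element of $B$; moreover $\inf A\geq\inf B$, and if some element of $B$ is at least $\inf A$ and $\inf A$ is attained, then $\inf A\in B$ (and dually at the top). Applying this with $A=C(P)$ against the consensuses $C(Q)$ of the handled sets below $P$ (which $C(P)$ dominates) and above $P$ (which dominate $C(P)$), and using that earlier values were chosen minimal --- so the $\phi(Q)$ for $Q$ below $P$ stay near their own infima, all of which are at most $\inf C(P)$ --- one checks that a point of $C(P)$ survives inside $[\ell,u]$. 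I expect the genuine delicacy to be exactly the bookkeeping around \emph{non-attained} infima and suprema, which is unavoidable because $C(P)$ need not be a closed set even when $\mathcal{X}$ is compact.
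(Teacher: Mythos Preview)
Your plan has the right architecture---\Cref{proposition:compstat_consensus} does give an increasing correspondence $P\mapsto C(P)$, and the task really is to extract a monotone selection and then extend it off the nonempty-consensus domain---but the transfinite recursion you sketch has a genuine gap at exactly the point you flag. The ``near-minimal'' device cannot be made precise when the index set is uncountable: $2^{\mathcal{P}}\setminus\{\varnothing\}$ has enormous cardinality, so no $\varepsilon$-accounting can control the accumulated slack across stages. And without such control you cannot rule out $\ell$ overshooting every point of $C(P)$. The strong-set-order facts you quote give only that $\phi(Q)\le\sup C(P)$ for each \emph{individual} handled $Q\sqsubseteq P$, and that $\inf C(Q)\le\inf C(P)$; they do \emph{not} give $\sup_{Q}\phi(Q)\le x$ for some $x\in C(P)$, because $C(P)$ need not be closed and the supremum runs over an uncountable family. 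Transfinite monotone-selection arguments into non-closed-valued correspondences are known to be delicate; your proposal identifies the obstacle but supplies no mechanism to overcome it.

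The paper sidesteps this entirely. It first restricts attention to $\Pi\coloneqq\{P:C(P)\neq\varnothing\}$ and invokes an existing monotone-selection theorem---Theorem~2.7 of Kukushkin (2013)---to obtain an increasing $\phi:\Pi\to\mathcal{X}$; the paper explicitly notes this step is non-trivial precisely because $\mathcal{X}$ and $\mathcal{P}$ may be uncountable. It then extends $\phi$ to all of $2^{\mathcal{P}}\setminus\{\varnothing\}$ by the closed-form rule
\[
\psi(P)\coloneqq
\begin{cases}
\inf\bigl\{\phi(P''):P''\in\Pi,\ P''\sqsupseteq P\bigr\} & \text{if such $P''$ exist,}\\
\sup\mathcal{X} & \text{otherwise.}
\end{cases}
\]
Compactness enters only here, via the Frink--Birkhoff theorem, to guarantee the infimum lies in $\mathcal{X}$. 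Monotonicity of $\psi$ is immediate because $P'\sqsupseteq P$ shrinks the set over which the infimum is taken, and $\psi$ respects unanimity because $\psi=\phi$ on $\Pi$ (as $\phi$ is increasing). So the paper trades your bespoke recursion for a black-box selection theorem plus a two-line extension; if you want a self-contained argument, what you actually need to prove is Kukushkin's theorem, not to finesse it.
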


When the consensus is empty,
it is natural to consider individuals' \emph{second-}favourite alternatives.
Let the \emph{second consensus} be
$C_2(P)
\coloneqq \Intersect_{\mathord{\prefeq} \in P}
X_2(\mathord{\prefeq})$,
where $X_2(\mathord{\prefeq})$ are
the alternatives that $\mathord{\prefeq} \in \mathcal{P}$
considers at least second-best:
\begin{equation*}
	X_2(\mathord{\prefeq})
	= \left\{
	x \in \mathcal{X} :
	\text{$y \pref x$
	for at most one $y \in \mathcal{X}$}
	\right\} .
\end{equation*}

\begin{proposition}[second consensus comparative statics]
	\label{proposition:compstat_consensus_2}
	Let $\mathcal{X}$ be a subset of $\R$ ordered by inequality $\geq$,
	and let $P,P' \subseteq \mathcal{P}$ be sets of preferences.
	If $P'$ dominates $P$ in the ($\SC$-induced) strong set order
	and $C(P)=C(P')$,
	then $C_2(P') \setminus C(P')$ dominates $C_2(P) \setminus C(P)$ in the ($\geq$-induced) strong set order.
\end{proposition}

In particular, the second consensus increases ($C_2(P')$ dominates $C_2(P)$ in the strong set order)
whenever the consensus is empty ($C(P) = \varnothing = C(P')$).
Outside of that case, however, the second consensus need not increase.%
	\footnote{If $\mathcal{X} = \{1,2,3\}$,
	$P = \{\mathord{\prefeq}\}$ where $1 \pref 2 \pref 3$
	and $P' = \{\mathord{\prefeq},\mathord{\prefeq'}\}$
	where $1 \pref' 3 \pref' 2$,
	then $C(P) = \{1\} = C(P')$, $C_2(P) = \{1,2\}$ and $C_2(P') = \{1\}$.}

The proof is in \cref{app:supp_appendix:compstat_consensus_2}.
The argument is rather intricate, and makes extensive use of minimum upper bounds and maximum lower bounds.

Surprisingly, for $2 < k < \abs*{\mathcal{X}}$, there is no analogous comparative-statics result for the set $C_k(P)$ of alternatives that are at least $k^\text{th}$-best according to every $\mathord{\prefeq} \in P$.
The following counter-example is for $k=3$ and $\abs*{\mathcal{X}}=4$,
but extends easily to any $k$ and $\abs*{\mathcal{X}}$ such that $2 < k < \abs*{\mathcal{X}}$.

\setcounter{example}{2}
\begin{example}
	\label{example:top3}
	Let $\mathcal{X} = \{1,2,3,4\}$.
	For any labelling $\{x,y,z,w\} = \mathcal{X}$,
	write `$xyzw$' for the preference $\mathord{\prefeq} \in \mathcal{P}$ satisfying $x \pref y \pref z \pref w$.
	Let
	\begin{align*}
		P &\coloneqq \{ 1234, 1324, 1342, 2134, 2314, 2341, 3214, 3241, 3421 \} \quad \text{and}
		\\
		P' &\coloneqq \{ 4321, 4231, 4213, 3421, 3241, 2341, 3214, 2314, 2134 \} .
	\end{align*}
	Then $P'$ dominates $P$ in the ($\SC$-induced) strong set order,%
		\footnote{Moreover, $P$ is a \emph{sublattice} (it dominates itself in the strong set order), and so is $P'$.}
	and $C(P) = C(P') = C_2(P) = C_2(P') = \varnothing$,
	but $C_3(P) = \{3\}$ and $C_3(P') = \{2\}$.
\end{example}

\begin{remark}
	\label{remark:compstat_consensus_menu}
	The results of this section remain true if agents are constrained to choose from a (possibly small) menu $M \subseteq \mathcal{X}$, provided the menu is \emph{order-convex} in the sense that $x,z \in M$, $y \in \mathcal{X}$ and $x \leq y \leq z$ imply $y \in M$.
	For example, \Cref{proposition:compstat_consensus} generalises as follows: under the same hypothesis, for any non-empty and order-convex $M \subseteq \mathcal{X}$, $C_M(P')$ dominates $C_M(P)$ in the ($\geq$-induced) strong set order, where $C_M(P) \coloneqq \Intersect_{ \mathord{\prefeq} \in P }
	X_M(\mathord{\prefeq})$ and
	$X_M(\mathord{\prefeq}) \coloneqq 
 	\left\{
 	x \in M : \text{$x \prefeq y$ for every $y \in M$}
 	\right\}$.
	To see why, write $\prefeq_M$ for the restriction of a preference $\mathord{\prefeq} \in \mathcal{P}$ to $M$,%
		\footnote{That is, $\prefeq_M$ is the binary relation on $M$ such that for any $x,y \in M$, $x \prefeq_M y$ iff $x \prefeq y$.}
	and for any $P \subseteq \mathcal{P}$ let $P_M \coloneqq \{ \mathord{\abstrgeq} : \text{$\mathord{\abstrgeq} = \mathord{\prefeq_M}$ for some $\mathord{\prefeq} \in P$} \}$. Since $M$ is order-convex, the hypothesis of \Cref{proposition:compstat_consensus} implies that $P'_M$ dominates $P_M$ in the strong set order. Thus applying \Cref{proposition:compstat_consensus} (with $M$ in place of $\mathcal{X}$) yields that
	\begin{align*}
		&\{ x \in M : \text{$x \abstrgeq y$ for all $y \in M$ and $\mathord{\abstrgeq} \in P'_M$} \}
		\quad \text{dominates}
		\\
		&\{ x \in M : \text{$x \abstrgeq y$ for all $y \in M$ and $\mathord{\abstrgeq} \in P_M$} \}
		\quad \text{in the strong set order.}
	\end{align*}
	By inspection, the former set equals $C_M(P')$, and the latter is $C_M(P)$.
\end{remark}

\subsection{Robust comparative statics}
\label{sec:compstat:robust}

Consider an analyst who knows only that the agent's preference belongs to a set $P \subseteq \mathcal{P}$,
and wishes to predict choice across menus $M \subseteq \mathcal{X}$.
For a non-empty menu $M \subseteq \mathcal{X}$ of alternatives,
the agent's possible choices are
$X_M(P)
\coloneqq \Union_{ \mathord{\prefeq} \in P }
X_M(\mathord{\prefeq})$,
where
\begin{equation*}
	X_M(\mathord{\prefeq}) \coloneqq 
	\left\{
	x \in M : \text{$x \prefeq y$ for every $y \in M$}
	\right\} 
\end{equation*}
are the $\prefeq$-best alternatives in the menu $M$.
The analyst seeks to bound the agent's choices across menus $M$,
and to predict how these choices vary with the uncertainty $P$.
We focus on \emph{rich} uncertainty, formalised as follows.

\begin{definition}
	\label{definition:rich}
	A set $P \subseteq \mathcal{P}$ is \emph{rich}
	iff for any $x_0,x_1,\dots,x_K \in \mathcal{X}$,
	if there are preferences $\mathord{\prefeq_1},\dots,\mathord{\prefeq_K} \in P$
	such that $x_0 \prefeq_1 x_1 \prefeq_2 x_2 \prefeq_3 \cdots \prefeq_K x_K$,
	then there is a preference $\mathord{\prefeq} \in P$
	such that $x_0 \prefeq x_k$ for every $k \in \{1,\dots,K\}$.
\end{definition}

Since $\geq$ is complete,
the \hyperref[theorem:existence]{existence theorem}
guarantees that the (arbitrary) set $P \subseteq \mathcal{P}$ of preferences
has a minimum upper bound and a maximum lower bound.
By the \hyperref[proposition:uniqueness]{uniqueness proposition}, they are unique.

\begin{proposition}
	\label{proposition:bottomtop_weak}
	Let $\mathcal{X}$ be a finite subset of $\R$ ordered by inequality $\geq$,
	and let $P \subseteq \mathcal{P}$ be non-empty and rich.
	Then $\max X_M(P) = \max X_M(\mathord{\prefeq^\star})$ and
	$\min X_M(P) = \min X_M(\mathord{\prefeq_\star})$
	for every non-empty menu $M \subseteq \mathcal{X}$,
	where $\prefeq^\star$ is the minimum upper bound of $P$
	and $\prefeq_\star$ is the maximum lower bound.
\end{proposition}

In other words,
the choices made by $\prefeq^\star$ and $\prefeq_\star$
are tight bounds on the set of possible choices $X_M(P)$,
across all non-empty menus $M \subseteq \mathcal{X}$.

\begin{proof}
	We prove the claim about maxima, omitting the symmetric argument for minima.
	Fix a non-empty menu $M \subseteq \mathcal{X}$.
	We first show that $\max X_M(P) \leq \max X_M(\mathord{\prefeq^\star})$.
	For every $\mathord{\prefeq} \in P$,
	we have $\mathord{\prefeq^\star} \SC \mathord{\prefeq}$ since $\mathord{\prefeq^\star}$ is an upper bound of $P$,
	so that $X_M(\mathord{\prefeq^\star})$ dominates $X_M(\mathord{\prefeq})$ in the $\geq$-induced strong set order by the \hyperref[theorem:MS]{MCS theorem}, implying that
	$\max X_M(\mathord{\prefeq^\star}) \geq \max X_M(\mathord{\prefeq})$.
	Hence $\max X_M(\mathord{\prefeq^\star}) \geq \max \Union_{\mathord{\prefeq} \in P} X_M(\mathord{\prefeq}) = \max X_M(P)$.

	For the reverse inequality, it suffices to exhibit a preference $\mathord{\prefeq} \in P$ such that $\max X_M(\mathord{\prefeq^\star}) \in X_M(\mathord{\prefeq})$.
	To that end, enumerate the elements of the menu as $M = \{x_0,\dots,x_N\}$
	where for each $n<N$,
	we have either $x_n \pref^\star x_{n+1}$
	or $x_n \prefeq^\star x_{n+1} \prefeq^\star x_n$ and $x_n > x_{n+1}$.
	Evidently $x_0 = \max X_M(\mathord{\prefeq^\star})$.

	\begin{namedthm}[Claim.]
		\label{claim:n_n1_rich}
		For each $n<N$,
		$P$ contains a preference $\prefeq_n$
		such that $x_n \prefeq_n x_{n+1}$.
	\end{namedthm}

	\begin{proof}[Proof of the \protect{\hyperref[claim:n_n1_rich]{claim}}]%
		\renewcommand{\qedsymbol}{$\square$}
		Fix an $n<N$.
		Suppose first that $x_{n+1} > x_n$.
		Then by the \hyperref[theorem:characterisation]{characterisation theorem}, there is no strict $P$-chain from $x_{n+1}$ to $x_n$;
		in particular, $(x_{n+1},x_n)$ is not a strict $P$-chain.
		Thus we may choose $\prefeq_n$ to be any preference in $P$, since all of them prefer $x_{n+1}$ to $x_n$.

		Suppose instead that $x_n > x_{n+1}$.
		Then by the \hyperref[theorem:characterisation]{characterisation theorem}, there must be a $P$-chain $(y^k)_{k=1}^K$ from $x_n$ to $x_{n+1}$,
		so $x_n = y^0 \prefeq^1 y^1 \prefeq^2 \cdots \prefeq^K y^K = x_{n+1}$ for some preferences $\mathord{\prefeq^1},\dots,\mathord{\prefeq^K} \in P$.
		Since $P$ is rich,
		it follows that there is a $\mathord{\prefeq_n} \in P$
		such that $x_n \prefeq_n x_{n+1}$.
	\end{proof}%
	\renewcommand{\qedsymbol}{$\blacksquare$}

	Since $n<N$ was arbitrary, we have shown that there are preferences $\mathord{\prefeq_1},\dots,\mathord{\prefeq_N} \in P$
	such that $x_0 \prefeq_1 x_1 \prefeq_2 x_2 \prefeq_3 \cdots \prefeq_N x_N$.
	Thus since $P$ is rich,
	it contains a preference $\prefeq$
	such that $x_0 \prefeq x_n$
	for every $n \in \{1,\dots,N\}$,
	which is to say that $x_0 \in X_M(\mathord{\prefeq})$.
\end{proof}

\Cref{proposition:bottomtop_weak} delivers a characterisation of comparative statics:

\begin{corollary}[robust comparative statics]
	\label{corollary:mcs_robust}
	Let $\mathcal{X}$ be a finite subset of $\R$ ordered by inequality $\geq$,
	and let $P,P' \subseteq \mathcal{P}$ be non-empty and rich.
	Write $\prefeq^\star$ ($\prefeq^{\star\prime}$) for the minimum upper bound of $P$ (of $P'$), and $\prefeq_\star$ ($\prefeq_\star'$) for the maximum lower bound.
	The following are equivalent:
	\begin{enumerate}

		\item \label{bullet:mcs_robust:mcs}
		$\mathord{\prefeq^{\star\prime}} \SC \mathord{\prefeq^\star}$ and $\mathord{\prefeq_\star'} \SC \mathord{\prefeq_\star}$.
	
		\item \label{bullet:mcs_robust:shift}
		$\max X_M(P') \geq \max X_M(P)$ and $\min X_M(P') \geq \min X_M(P)$ for every non-empty menu $M \subseteq \mathcal{X}$.
	
	\end{enumerate}
\end{corollary}

In other words, the shifts of the analyst's uncertainty $P$ which lead her to predict higher choices from every menu
are exactly those that increase the minimum upper bound and maximum lower bound.

\begin{proof}
	If $\mathord{\prefeq^{\star\prime}} \SC \mathord{\prefeq^\star}$
	and $\mathord{\prefeq_\star'} \SC \mathord{\prefeq_\star}$,
	then for any non-empty $M \subseteq \mathcal{X}$,
	$X_M(\mathord{\prefeq^{\star\prime}})$ dominates $X_M(\mathord{\prefeq^\star})$ and $X_M(\mathord{\prefeq_\star'})$ dominates $X_M(\mathord{\prefeq_\star})$ in the $\geq$-induced strong set order by the \hyperref[theorem:MS]{MCS theorem},
	whence
	\begin{equation*}
		\max X_M(P')
		= \max X_M(\mathord{\prefeq^{\star\prime}})
		\geq \max X_M(\mathord{\prefeq^\star})
		= \max X_M(P)
	\end{equation*}
	by \Cref{proposition:bottomtop_weak},
	and similarly $\min X_M(P') \geq \min X_M(P)$.
	
	For the converse, suppose that $\mathord{\prefeq^{\star\prime}} \nSC \mathord{\prefeq^\star}$ or $\mathord{\prefeq_\star'} \nSC \mathord{\prefeq_\star}$; without loss of generality, the former.
	Then by definition of $\SC$, there is a binary menu $M = \{x,y\} \subseteq \mathcal{X}$ with $x>y$ such that
	either $X_M(\mathord{\prefeq^{\star\prime}}) \not\ni x \in X_M(\mathord{\prefeq^\star})$
	or $X_M(\mathord{\prefeq^{\star\prime}}) \ni y \not\in X_M(\mathord{\prefeq^\star})$.
	So by \Cref{proposition:bottomtop_weak}, we have either
	\begin{equation*}
		\max X_M(P')
		= \max X_M(\mathord{\prefeq^{\star\prime}})
		= y
		< x
		= \max X_M(\mathord{\prefeq^\star})
		= \max X_M(P)
	\end{equation*}
	or (similarly) $\min X_M(P') < \min X_M(P)$.
\end{proof}

Our analysis has focussed on rich preference uncertainty $P \subseteq \mathcal{P}$.
The following refinement of \Cref{proposition:bottomtop_weak} shows that richness is indispensable:
without it, choice cannot be bounded across menus, whether by $\prefeq^\star$ and $\prefeq_\star$ or by any other pair of preferences.

\begin{namedthm}[\Cref*{proposition:bottomtop_weak}$\boldsymbol{^\star}$\textbf{.}]
	\label{proposition:bottomtop}
	For a finite set $\mathcal{X}$ of alternatives
	and a non-empty set $P \subseteq \mathcal{P}$ of preferences over $\mathcal{X}$, the following are equivalent:
	\begin{enumerate}
		
		\item \label{item:rich}
		$P$ is rich.
		
		\item \label{item:consistent}
		For any total order $\primgeq$ on $\mathcal{X}$,
		there is a preference $\mathord{\prefeq} \in \mathcal{P}$ such that $\max X_M(P) = \max X_M(\mathord{\prefeq})$ for every non-empty $M \subseteq \mathcal{X}$.
		
		\item \label{item:tight_bound}
		For any total order $\primgeq$ on $\mathcal{X}$,
		$\max X_M(P) = \max X_M(\mathord{\prefeq^\star})$ for every non-empty $M \subseteq \mathcal{X}$,
		where $\prefeq^\star$ is the minimum upper bound of $P$.
		
	\end{enumerate}
\end{namedthm}

The proof is in \cref{app:supp_appendix:proposition:bottomtop}.
\ref{item:rich}--\ref{item:tight_bound} are also equivalent to the analogues of \ref{item:consistent} and \ref{item:tight_bound} for $\min X_M(\cdot)$ and the maximum lower bound $\prefeq_\star$ of $P$.

\section{Application to uncertainty- and risk-aversion}
\label{sec:maxmin}

In this section, we apply our results to uncertainty-aversion.%
	\footnote{See e.g. \textcite{Ellsberg1961,Schmeidler1989,GilboaSchmeidler1989,KlibanoffMarinacciMukerji2005,MaccheroniMarinacciRustichini2006}.}
We identify a natural compactness condition
under which `maxmin' preferences are precisely minimum upper bounds with respect to `more uncertainty-averse than',
and without which the equivalence generally fails.

In §\ref{sec:maxmin:def}, we introduce general \emph{maxmin preferences,} which nest maxmin expected utility \parencite{GilboaSchmeidler1989}.
We then (§\ref{sec:maxmin:charac}) use our theorems to characterise maxmin preferences as minimum upper bounds, and deduce uncertainty-aversion comparative statics for maxmin preferences.
Finally, in §\ref{sec:maxmin:risk}, we translate these insights to risk-aversion in choice among lotteries.

\subsection{Environment}
\label{sec:maxmin:environment}

There is a state space $(\mathcal{S},\mathcal{E})$ comprising a non-empty set $\mathcal{S}$ of \emph{states of the world} and a $\sigma$-algebra $\mathcal{E}$ of subsets of $\mathcal{S}$, whose members are called \emph{events.}
There is also a non-empty set $\mathcal{C}$ of (potentially payoff-relevant) \emph{consequences.}
A \emph{(Savage) act} is a finite-ranged $\mathcal{E}$-measurable map $\mathcal{S} \to \mathcal{C}$.

\begin{remark}
	\label{remark:AA}
	This is the \textcite{Savage1954} framework. It would be harmless (but unnecessary) to restrict attention to the Anscombe--Aumann (\citeyear{AnscombeAumann1963}) special case, in which $\mathcal{C}$ is the set of all finite-support lotteries over a set $\Pi$ of prizes.
\end{remark}

Following \textcite{Epstein1999}, we suppose that there is an exogenously-given collection $\mathcal{E}^\circ \subseteq \mathcal{E}$ of \emph{unambiguous} events. The unambiguous events are those to which a decision-maker is able to assign probabilities; they are `understood'.
The collection $\mathcal{E}^\circ$ is assumed to contain the universal event $\mathcal{S}$ and to be closed under complementation and countable disjoint union.%
	\footnote{$\mathcal{S} \setminus E \in \mathcal{E}^\circ$ for any $E \in \mathcal{E}^\circ$,
	and $\Union_{n=1}^\infty E_n \subseteq \mathcal{E}^\circ$ for any pairwise disjoint $(E_n)_{n=1}^\infty \subseteq \mathcal{E}^\circ$.}
An \emph{unambiguous act} is one that is $\mathcal{E}^\circ$-measurable;
all other acts are called \emph{ambiguous.}

Let $\mathcal{X}$ be the set of all acts,
with typical elements $x,y \in \mathcal{X}$.
Write $\mathcal{X}^\circ \subseteq \mathcal{X}$ for those that are unambiguous.
Let $\mathcal{P}$ be the set of all preferences over $\mathcal{X}$.

\begin{definition}
	\label{definition:A}
	For two preferences $\mathord{\prefeq},\mathord{\prefeq'} \in \mathcal{P}$, we say that $\prefeq'$ is \emph{more uncertainty-averse than} $\prefeq$
	iff for any unambiguous act $x^\circ \in \mathcal{X}^\circ$ and any act $x \in \mathcal{X}$,
	$x^\circ \prefeq \mathrel{(\pref)} x$ implies $x^\circ \prefeq' \mathrel{(\pref')} x$.
\end{definition}

This definition is from \textcite{Epstein1999}.
\textcite{GhirardatoMarinacci2002} studied the case in which only trivial events are unambiguous ($\mathcal{E}^\circ = \{\mathcal{S},\varnothing\}$), so that the unambiguous acts are precisely the constant ones.

Given the interpretation of the `unambiguous' events $\mathcal{E}^\circ$ as those to which a decision-maker is able to assign probabilities,
it only makes sense to consider preferences that are consistent with a probabilistic belief about $\mathcal{E}^\circ$.
That is, we must restrict attention to the set $\mathcal{P}^\circ$ of preferences $\mathord{\prefeq} \in \mathcal{P}$
that are \emph{probabilistically sophisticated} on $\mathcal{X}^\circ$,
meaning that their restriction $\mathord{\prefeq}|_{\mathcal{X}^\circ}$ to the unambiguous acts $\mathcal{X}^\circ$
is ordinally represented by $x \mapsto U\left( \mu \circ x^{-1} \right)$
for some map $U : \Delta(\mathcal{C}) \to \R$ and some probability measure $\mu : \sigma\left( \mathcal{E}^\circ \right) \to [0,1]$,%
	\footnote{A relation $\sqsupseteq$ on a set $\mathcal{A}$ is \emph{ordinally represented} by $f : \mathcal{A} \to \R$ iff for any $a,b \in \mathcal{A}$, $a \prefeq b$ iff $f(a) \geq f(b)$.}
where $\sigma\left( \mathcal{E}^\circ \right)$ denotes the $\sigma$-algebra generated by $\mathcal{E}^\circ$, $\Delta(\mathcal{C})$ denotes the set of finite-support lotteries over $\mathcal{C}$,
and $\mu \circ x^{-1} \in \Delta(\mathcal{C})$ is the pushforward lottery: $\left(\mu \circ x^{-1}\right)(C) \coloneqq \mu\left( \left\{ s \in \mathcal{S} : x(s) \in C \right\} \right)$ for each finite $C \subseteq \mathcal{C}$.
The interpretation is that the decision-maker
has a probabilistic \emph{belief} $\mu$ about the state
and cares only about the distribution of consequences,
so evaluates each act $x$ purely on the basis of its induced (subjective) lottery $\mu \circ x^{-1}$ over consequences.
The shape of $U$ captures \emph{risk attitude;}
subjective expected utility is the special case in which $U$ is linear.

An \emph{unambiguous equivalent} for $\mathord{\prefeq} \in \mathcal{P}^\circ$ of an act $x \in \mathcal{X}$ is an unambiguous act $x^\circ \in \mathcal{X}^\circ$ such that $x \prefeq x^\circ \prefeq x$.
A preference is called \emph{solvable} iff it has an unambiguous equivalent for every act.
When considering solvable preferences, it is natural (though not necessary) to assume that either the space $\mathcal{C}$ of consequences or the collection $\mathcal{E}^\circ$ of unambiguous events is `rich', since otherwise solvability is stringent.
Arbitrarily fix a map $e : \mathcal{P}^\circ \times \mathcal{X} \to \mathcal{X}^\circ$ such that $e(\mathord{\prefeq},x)$ is an unambiguous equivalent for $\prefeq$ of $x$ for each solvable preference $\mathord{\prefeq} \in \mathcal{P}^\circ$ and each act $x \in \mathcal{X}$,
with $e(\mathord{\prefeq},x) = x$ in case $x \in \mathcal{X}^\circ$.

Let $\mathcal{P}^{\mu,U}$
be the set of all solvable preferences $\mathord{\prefeq} \in \mathcal{P}^\circ$ such that $\mathord{\prefeq} |_{\mathcal{X}^\circ}$ has belief $\mu$ and risk attitude $U$.
Each preference $\mathord{\prefeq} \in \mathcal{P}^{\mu,U}$ may be viewed as a map $\mathcal{X} \to \R$, viz. the canonical utility representation $x \mapsto U\left( \mu \circ e(\mathord{\prefeq},x)^{-1} \right)$,
and so we may speak of \emph{pointwise compact} sets $P \subseteq \mathcal{P}^{\mu,U}$ of preferences:%
	\footnote{Recall that a set $\mathcal{F}$ of maps $\mathcal{A} \to \R$ is called pointwise compact exactly if for every $a \in \mathcal{A}$, $\left\{ f(a) : f \in \mathcal{F} \right\}$ is a compact subset of $\R$.}
explicitly, $P \subseteq \mathcal{P}^{\mu,U}$ is called pointwise compact iff for every act $x \in \mathcal{X}$, the set $\left\{ U\left( \mu \circ e(\mathord{\prefeq},x)^{-1} \right) : \mathord{\prefeq} \in P \right\}$ is a compact subset of $\R$.

\subsection{Maxmin preferences}
\label{sec:maxmin:def}

\begin{definition}
	\label{definition:maxmin}
	Given a belief $\mu : \mathcal{E}^\circ \to [0,1]$ and a risk attitude $U : \Delta(\mathcal{C}) \to \R$,
	a pointwise compact set $P \subseteq \mathcal{P}^{\mu,U}$
	is a \emph{maxmin representation} of a preference $\mathord{\prefeq^\star} \in \mathcal{P}$ iff
	$x \mapsto
	\min_{ \mathord{\prefeq} \in P } U\left( \mu \circ e(\mathord{\prefeq},x)^{-1} \right)$
	ordinally represents $\prefeq^\star$.
\end{definition}

Intuitively, such a decision-maker is unsure which preference $\mathord{\prefeq} \in P$ to evaluate acts according to, so cautiously values acts at their worst unambiguous equivalent among $\mathord{\prefeq} \in P$. An alternative interpretation is that there is a group of agents with preferences $P$, and that collective decisions are made according to the `Rawlsian' maxmin criterion.

In case no non-trivial event is unambiguous ($\mathcal{E}^\circ = \{\mathcal{S},\varnothing\}$),
unambiguous acts are precisely constants acts,
and thus unambiguous equivalents are \emph{certainty equivalents;}
then after we identify each consequence $c \in \mathcal{C}$ with the act $x_c \in \mathcal{X}^\circ$ constant at $c$ and with the lottery $\delta_c \in \Delta(\mathcal{C})$ degenerate at $c$,
a maxmin representation becomes $x \mapsto \min_{\mathord{\prefeq} \in P} U( e(\mathord{\prefeq},x) )$.
If in addition consequences are monetary prizes ($\mathcal{C} \subseteq \R$), then provided $U$ is strictly increasing, we may assume without loss that it is the identity $c \mapsto c$,
so that a maxmin representation is simply $x \mapsto \min_{\mathord{\prefeq} \in P} e(\mathord{\prefeq},x)$.

A special case of a maxmin representation is \emph{maxmin expected utility} \parencite{GilboaSchmeidler1989},
where $P$ consists entirely of preferences that are probabilistically sophisticated (across all acts $\mathcal{X}$, not just the unambiguous ones) with a common, linear risk attitude.
Precisely: each $\mathord{\prefeq} \in P$ is ordinally represented by
$x
\mapsto \int_{\mathcal{C}} u
\dd \left( \mu_{\prefeq} \circ x^{-1} \right)
= \int_{\mathcal{S}} ( u \circ x ) \dd \mu_{\prefeq}$
for some belief $\mu_{\prefeq} : \mathcal{E} \to [0,1]$,
where $u : \mathcal{C} \to \R$ is bounded
and $\left\{ \mu_{\mathord{\prefeq}} : \mathord{\prefeq} \in P \right\}$ is closed in the topology of setwise convergence (also known as `strong convergence').%
	\footnote{This implies that $P$ is pointwise compact:
	for any act $x \in \mathcal{X}$,
	the set $\left\{ \smallint_{\mathcal{S}} ( u \circ x ) \dd \mu_{\prefeq} : \mathord{\prefeq} \in P \right\}$
	is bounded since $u$ is,
	and is closed because
	$\smallint_{\mathcal{S}} (u \circ x) \dd \mu_n \to \smallint_{\mathcal{S}} (u \circ x) \dd \mu$
	whenever
	$\mu_n \to \mu$ setwise,
	since $u \circ x$ is bounded and measurable \parencite[see e.g. Theorem~2.1 in][]{JackaRoberts1997}.}

\subsection{Characterisation of maxmin preferences}
\label{sec:maxmin:charac}

\begin{proposition}[maxmin characterisation]
	\label{proposition:maxmin_join}
	Fix a belief $\mu : \mathcal{E}^\circ \to [0,1]$ and a risk attitude $U : \Delta(\mathcal{C}) \to \R$.
	For a pointwise compact set $P \subseteq \mathcal{P}^{\mu,U}$ and a preference $\mathord{\prefeq^\star} \in \mathcal{P}$, the following are equivalent:
	\begin{enumerate}

		\item \label{bullet:maxmin_maxmin}
		$P$ is a maxmin representation of $\prefeq^\star$.

		\item \label{bullet:maxmin_join}
		$\prefeq^\star$ is a minimum upper bound of $P$ with respect to `more uncertainty-averse than'.

	\end{enumerate}
\end{proposition}

\Cref{proposition:maxmin_join} furnishes an intuitive interpretation of maxmin preferences:
a preference with maxmin representation $P$ is precisely one that is more uncertainty-averse than any preference in $P$, but no more uncertainty-averse than that.

For the proof, let $\prefeq^\circ$ be the preference on $\mathcal{X}^\circ$ ordinally represented by $x \mapsto U\left( \mu \circ x^{-1} \right)$,
and define a binary relation $\primgeq$ on $\mathcal{X}$ by $x \primgeq y$ iff either (i)~$x=y$, (ii)~$x \in \mathcal{X}^\circ \not\ni y$ or (iii)~$x,y \in \mathcal{X}^\circ$ and $x \prefeq^\circ y$.
The central insight underlying the proof is that minimum upper bounds with respect to `more uncertainty-averse than' are precisely minimum upper bounds with respect to the single-crossing dominance relation $\SC$ induced by $\primgeq$.%
	\footnote{If $\mathord{\prefeq'}|_{\mathcal{X}^\circ} = \mathord{\prefeq^\circ}$, then
	$\mathord{\prefeq'}$ is more uncertainty-averse than $\mathord{\prefeq}$ iff $\mathord{\prefeq'} \SC \mathord{\prefeq}$.
	And clearly $\mathord{\prefeq'}|_{\mathcal{X}^\circ} = \mathord{\prefeq^\circ}$ for any upper bound $\prefeq'$ of $P$ with respect to `more uncertainty-averse than'.}

\begin{proof}
	Let $\prefeq^\circ$ and $\primgeq$ be as the previous paragraph.
	Fix a pointwise compact set $P \subseteq \mathcal{P}^{\mu,U}$ and a preference $\mathord{\prefeq^\star} \in \mathcal{P}$.

	Suppose that $P$ is a maxmin representation of $\prefeq^\star$.
	By the \hyperref[theorem:characterisation]{characterisation theorem}, it suffices to show that for all $x \primg y$ in $\mathcal{X}$,
	$x \prefeq^\star \mathrel{(\pref^\star)} y$
	iff there is a (strict) $P$-chain from $x$ to $y$.
	This holds when $x \notin \mathcal{X}^\circ$ since then $x,y$ are $\primgeq$-incomparable,
	and holds for $x,y \in \mathcal{X}^\circ$ since
	$\mathord{\primgeq}|_{\mathcal{X}^\circ} = \mathord{\prefeq^\star}|_{\mathcal{X}^\circ} = \mathord{\prefeq}|_{\mathcal{X}^\circ} = \mathord{\prefeq^\circ}$ for every $\mathord{\prefeq} \in P$.
	For $x \in \mathcal{X}^\circ \not\ni y$,
	there is a (strict) $P$-chain from $x$ to $y$
	iff $x \prefeq \mathrel{(\pref)} y$ for some $\mathord{\prefeq} \in P$
	iff $x \prefeq^\circ \mathrel{(\pref^\circ)} e(\mathord{\prefeq},y)$ for some $\mathord{\prefeq} \in P$
	iff $U\left( \mu \circ x^{-1} \right) \geq \mathrel{(>)} \min_{\mathord{\prefeq} \in P} U\left( \mu \circ e(\mathord{\prefeq},y)^{-1} \right)$
	iff $x \prefeq^\star \mathrel{(\pref^\star)} y$.

	For the converse, suppose that $\prefeq^\star$ is a minimum upper bound of $P$.

	\begin{namedthm}[Claim.]
		\label{claim:maxmin_join_solvable}
		$\prefeq^\star$ is solvable.
		(And hence belongs to $\mathcal{P}^{\mu,U}$.)
	\end{namedthm}

	\begin{proof}[Proof of the \protect{\hyperref[claim:maxmin_join_solvable]{claim}}]%
		\renewcommand{\qedsymbol}{$\square$}
		Fix an ambiguous act $x \in \mathcal{X} \setminus \mathcal{X}^\circ$;
		we seek an unambiguous equivalent $y \in \mathcal{X}^\circ$.
		Since $P$ is pointwise compact, there is a $\mathord{\prefeq'} \in P$ such that $y \coloneqq e(\mathord{\prefeq'},x)$ satisfies $U\left( \mu \circ y^{-1} \right) = \min_{\mathord{\prefeq} \in P} U\left( \mu \circ e(\mathord{\prefeq},x)^{-1} \right)$.
		Clearly $y \primgeq x$.
		We have $y \prefeq^\star x$ since $y \prefeq' x$ and $\prefeq^\star$ is an upper bound of $P \ni \mathord{\prefeq'}$.
		
		Choose $\mathord{\prefeq''} \in \mathcal{P}$ such that $P$ is a maxmin representation of $\prefeq''$.
		Then $x \prefeq'' y$ (and $y \prefeq'' x$) by definition of $\prefeq'$.
		As shown above, $\prefeq''$ must be an upper bound of $P$; thus $\mathord{\prefeq''} \SC \mathord{\prefeq^\star}$.
		It follows that $x \prefeq^\star y$.
	\end{proof}%
	\renewcommand{\qedsymbol}{$\blacksquare$}

	Now, since $\prefeq^\star$ is solvable and a minimum upper bound,
	$e(\mathord{\prefeq},\cdot) \prefeq^\circ e(\mathord{\prefeq^\star},\cdot)$ for every $\mathord{\prefeq} \in \mathcal{P}$, and
	$e(\mathord{\prefeq^\star},\cdot) \prefeq^\circ e(\mathord{\prefeq'},\cdot)$ for every $\mathord{\prefeq'} \in \mathcal{P}$ such that
	$e(\mathord{\prefeq},\cdot) \prefeq^\circ e(\mathord{\prefeq'},\cdot)$
	for every $\mathord{\prefeq} \in \mathcal{P}$.
	Equivalently, $x \mapsto U\left( \mu \circ e(\mathord{\prefeq^\star},x)^{-1} \right)$ is the pointwise greatest map $\mathcal{X} \to \R$ that is pointwise smaller than $x \mapsto U\left( \mu \circ e(\mathord{\prefeq},x)^{-1} \right)$ for every $\mathord{\prefeq} \in P$;
	in other words, it is
	$x \mapsto \min_{\mathord{\prefeq} \in P} U\left( \mu \circ e(\mathord{\prefeq},x)^{-1} \right)$.
\end{proof}

There is one subtlety in the (perhaps) intuitive equivalence asserted by \Cref{proposition:maxmin_join}: the role of pointwise compactness.
The equivalence generally fails when $P$ is not pointwise compact.%
	\footnote{When $P \subseteq \mathcal{P}^{\mu,U}$ is not pointwise compact, `$P$ is a maxmin representation of $\prefeq^\star \in \mathcal{P}$' is defined as in \Cref{definition:maxmin}, except with `$\min$' replaced by `$\inf$'.}
Firstly, $P$ can be a maxmin representation of a solvable preference, and yet admit no solvable minimum upper bound;%
	\footnote{Example: $\mathcal{E}^\circ = \{\mathcal{S},\varnothing\}$, $\mathcal{C} = [0,1]$, $U(x) = x$ for each $x \in \mathcal{C}$, and for some fixed $y \in \mathcal{X} \setminus \mathcal{C}$, $P \coloneqq \{ \mathord{\prefeq_\eps} : \eps \in (0,1] \}$, where $\prefeq_\eps$ belongs to $\mathcal{P}^{\mu,U}$ and satisfies $\eps \pref_\eps y \pref_\eps 0$ for each $\eps \in (0,1]$. Here $P$ is not pointwise closed.}
then \ref{bullet:maxmin_maxmin}$\implies$\ref{bullet:maxmin_join} fails,
and \ref{bullet:maxmin_join}$\implies$\ref{bullet:maxmin_maxmin} fails since (argued in the next paragraph) a minimum upper bound does exist.
Secondly, $P$ can admit several minimum upper bounds;%
	\footnote{Example: $\mathcal{E}^\circ = \{\mathcal{S},\varnothing\}$, $\mathcal{C} = \N$, $U(x) = -x$ for each $x \in \mathcal{C}$, and $P \coloneqq \{ \mathord{\prefeq_n} : n \in \N \}$, where each $\prefeq_n$ satisfies $e(\prefeq_n,x) = n$ for all $x \in \mathcal{X} \setminus \mathcal{C}$. Here $P$ is not pointwise bounded.}
then \ref{bullet:maxmin_join}$\implies$\ref{bullet:maxmin_maxmin} fails.

One consequence of \Cref{proposition:maxmin_join} is that every pointwise compact set $P \subseteq \mathcal{P}^{\mu,U}$ possesses exactly one minimum upper bound with respect to `more uncertainty-averse than'.
Existence can actually be proved directly, without using pointwise compactness: simply note that the relation $\primgeq$ is crown- and diamond-free, and apply the \hyperref[theorem:existence]{existence theorem}.%
	\footnote{A detail: $\primgeq$ need not be a partial order, as it may fail to be anti-symmetric on $\mathcal{X}^\circ$. But our theorems apply to the modified set of alternatives in which each $\prefeq^\circ$-equivalence class of unambiguous acts is treated as a single alternative.}
Uniqueness \emph{does} require pointwise compactness, by the previous paragraph. The reason why pointwise compactness yields uniqueness is that it implies (by the \hyperref[claim:maxmin_join_solvable]{claim} in the proof of \Cref{proposition:maxmin_join}) that all minimum upper bounds of $P$ are solvable. This suffices since by the \hyperref[theorem:characterisation]{characterisation theorem}, all minimum upper bounds of $P$ agree on $\primgeq$-comparable pairs $x,y \in \mathcal{X}$, and (by definition of $\primgeq$) this includes all pairs $x,y \in \mathcal{X}$ such that $x \in \mathcal{X}^\circ$.

\Cref{proposition:maxmin_join} also implies comparative statics for uncertainty-aversion:

\begin{corollary}[maxmin comparative statics]
	\label{corollary:maxmin_comp_stats}
	Fix a belief $\mu : \mathcal{E}^\circ \to [0,1]$ and a risk attitude $U : \Delta(\mathcal{C}) \to \R$,
	and let $\mathord{\prefeq},\mathord{\prefeq'} \in \mathcal{P}$ admit maxmin representations $P,P' \subseteq \mathcal{P}^{\mu,U}$.
	If $P'$ contains $P$, or if $P'$ dominates $P$ in the strong set order,%
		\footnote{The strong set order was defined in §\ref{sec:compstat:canonical}.}
	then $\prefeq'$ is more uncertainty-averse than $\prefeq$.
\end{corollary}

\subsection{Risk-aversion and caution}
\label{sec:maxmin:risk}

In this section, we translate to lotteries, characterising \emph{cautious} preferences as minimum upper bounds with respect to `more risk-averse than'.

There is a set $\mathcal{C}$ of consequences.
$\mathcal{X} \coloneqq \Delta(\mathcal{C})$ denotes all lotteries over consequences,
and $\mathcal{P}$ is the set of all preferences over $\mathcal{X}$.
We identify each degenerate lottery with the consequence that it delivers.

\begin{definition}[\cite{Yaari1969}]
	\label{definition:MRA}
	For two preferences $\mathord{\prefeq},\mathord{\prefeq'} \in \mathcal{P}$, we say that $\prefeq'$ is \emph{more risk-averse than} $\prefeq$ iff for any consequence $c \in \mathcal{C}$ and any lottery $x \in \mathcal{X}$, $c \prefeq \mathrel{(\pref)} x$ implies $c \prefeq' \mathrel{(\pref')} x$.
\end{definition}

Fix a function $e : \mathcal{P} \times \mathcal{X} \to \mathcal{C}$ such that $e(\mathord{\prefeq},x)$ is a certainty equivalent for $\prefeq$ of $x$ for each solvable $\mathord{\prefeq} \in \mathcal{P}$ and each $x \in \mathcal{X}$, with $e(\mathord{\prefeq},x) = x$ if $x \in \mathcal{X}^\circ$.
Given any $u : \mathcal{C} \to \R$, let $\mathcal{P}^u$ be the set of all solvable preferences $\mathord{\prefeq} \in \mathcal{P}$ whose restriction $\mathord{\prefeq}|_{\mathcal{C}}$ to consequences is ordinally represented by $u$.
Since each $\mathord{\prefeq} \in \mathcal{P}^u$ may be viewed as a map $\mathcal{X} \to \R$, namely its canonical representation $x \mapsto u( e(\mathord{\prefeq},x) )$, we may speak of pointwise compact sets $P \subseteq \mathcal{P}^u$.

\begin{definition}
	\label{definition:cautious}
	Given a utility function $u : \mathcal{C} \to \R$ over consequences,
	a pointwise compact set $P \subseteq \mathcal{P}^u$
	is a \emph{cautious representation} of a preference $\mathord{\prefeq^\star} \in \mathcal{P}$ iff
	$x \mapsto
	\min_{ \mathord{\prefeq} \in P } u( e(\mathord{\prefeq},x) )$
	ordinally represents $\prefeq^\star$.
\end{definition}

If consequences are monetary prizes ($\mathcal{C} \subseteq \R$),
then provided $u$ is strictly increasing,
we may assume without loss that it is the identity $c \mapsto c$,
in which case a cautious representation is simply $x \mapsto \min_{ \mathord{\prefeq} \in P } e(\mathord{\prefeq},x)$.

\emph{Cautious expected utility} \parencite{CerreiavioglioDillenbergerOrtoleva2015}
is the special case in which $P \subseteq \mathcal{P}^u$ contains only expected-utility preferences.

\begin{corollary}
	\label{corollary:caution_join}
	Fix a utility function $u : \mathcal{C} \to \R$.
	For a pointwise compact set $P \subseteq \mathcal{P}^u$ and a preference $\mathord{\prefeq^\star} \in \mathcal{P}$,
	$P$ is a cautious representation of $\prefeq^\star$
	iff
	$\prefeq^\star$ is a minimum upper bound of $P$ with respect to `more risk-averse than'.
\end{corollary}

\Cref{corollary:caution_join} implies comparative statics along the lines of \Cref{corollary:maxmin_comp_stats}, as well as the existence and uniqueness of minimum upper bounds of pointwise compact subsets of $\mathcal{P}^u$. Existence does not require pointwise compactness.



\begin{appendices}

\renewcommand*{\thesubsection}{\Alph{subsection}}

\crefalias{section}{appsec}
\crefalias{subsection}{appsec}
\crefalias{subsubsection}{appsec}

\section*{Appendices}
\label{app:appendix}
\addcontentsline{toc}{section}{Appendices}

\subsection{Standard definitions and notation}
\label{app:appendix:definitions}

This appendix collects definitions of standard order-theoretic concepts used in this paper.
Let $\mathcal{A}$ be a non-empty set, and $\abstrgeq$ a binary relation on it.

For $a,b \in \mathcal{A}$,
we write $a \not\abstrgeq b$ iff it is not the case that $a \abstrgeq b$,
and $a \abstrg b$ iff $a \abstrgeq b$ and $a \not\abstrgeq b$,
and we say that $a,b$ are \emph{$\abstrgeq$-(in)comparable} iff (n)either $a \abstrgeq b$ (n)or $b \abstrgeq a$.

$\abstrgeq$ is \emph{complete} iff every pair $a,b \in \mathcal{A}$ is $\abstrgeq$-comparable, \emph{transitive} iff $a \abstrgeq b \abstrgeq c$ implies $a \abstrgeq c$ for $a,b,c \in \mathcal{A}$, \emph{reflexive} iff $a \abstrgeq a$ for any $a \in \mathcal{A}$, and \emph{anti-symmetric} iff $a \abstrgeq b \abstrgeq a$ implies $a = b$ for $a,b \in \mathcal{A}$.
$\abstrgeq$ is a \emph{partial order} iff it is transitive, reflexive and anti-symmetric; $(\mathcal{A},\mathord{\abstrgeq})$ is then called a \emph{poset} (partially ordered set).
If $\abstrgeq$ is both complete and a partial order, then it is called a \emph{total order.}

Fix a subset $A \subseteq \mathcal{A}$.
$A$ is called \emph{totally ordered} iff all pairs $a,b \in A$ of distinct elements are $\abstrgeq$-comparable.
An element $a \in A$ is a \emph{minimum (maximum)} of $A$ iff $b \abstrgeq a$ ($a \abstrgeq b$) for every $b \in A$.

An element $b \in \mathcal{A}$ is an \emph{upper bound} of a set $A \subseteq \mathcal{A}$ iff $b \abstrgeq a$ for each $a \in A$,
and a \emph{minimum upper bound} iff in addition it is a minimum of the set of upper bounds of $A$.
Note that if $\abstrgeq$ is anti-symmetric, then the minimum upper bound of a set is unique if it exists.
Minimum upper bounds are also known as `least upper bounds', `suprema' or `joins'.
(Maximum) lower bounds are defined analogously; they are also known as `greatest lower bounds', `infima' or `meets'.

A partially ordered set $(\mathcal{A},\mathord{\abstrgeq})$ is a \emph{complete lattice} iff every subset of $\mathcal{A}$ has a minimum upper bound and a maximum lower bound, and simply a \emph{lattice} iff this is true for every two-element subset.

\subsection{Extension theorems}
\label{app:appendix:suzumura}

Extension theorems will be key to proving
\Cref{lemma:other_UB} (\cref{app:appendix:lemma_other_UB_pf}),
the \hyperref[theorem:existence]{existence theorem} (\cref{app:appendix:existence_pf}), and
\Cref{lemma:2_UBs} (\cref{app:appendix:2_UBs_pf}).

\begin{definition}
	\label{definition:extension}
	Let $\abstrgeq$ and $\abstrgeq'$ be binary relations on a set $\mathcal{A}$. $\abstrgeq'$ is an \emph{extension} of $\abstrgeq$ iff for $a,b \in \mathcal{A}$, $b \abstrgeq \mathrel{(\abstrg)} a$ implies $b \abstrgeq' \mathrel{(\abstrg')} a$.
\end{definition}

\begin{definition}
	\label{definition:suzumura}
	A binary relation $\abstrgeq$ on a set $\mathcal{A}$ is \emph{Suzumura consistent} iff for $a_1,\dots,a_K \in \mathcal{A}$, $a_1 \abstrgeq a_2 \abstrgeq \cdots \abstrgeq a_{K-1} \abstrgeq a_K$ implies that either $a_1 \abstrgeq a_K$ or $a_1,a_K$ are $\abstrgeq$-incomparable.
\end{definition}

\begin{namedthm}[Richter--Suzumura extension theorem.]
	\label{theorem:suzumura}
	A binary relation admits a complete and transitive extension iff it is Suzumura consistent.
\end{namedthm}

This result is due to \textcite{Suzumura1976}, and is a consequence of Richter's theorem (below). We use the \hyperref[theorem:suzumura]{Richter--Suzumura extension theorem} directly in the proof of the \hyperref[theorem:existence]{existence theorem} (\cref{app:appendix:existence_pf}).
In proving \Cref{lemma:2_UBs} (\cref{app:appendix:2_UBs_pf}), we rely on the following corollary.

\begin{namedthm}[Suzumura corollary.]
	\label{corollary:suzumura}
	Let $\abstrgeq$ be a transitive binary relation on a set $\mathcal{A}$,
	and let $a,b \in \mathcal{A}$ be such that $b \not\abstrgeq a$.
	Then $\abstrgeq$ admits a complete and transitive extension $\abstrgeq'$ such that $a \abstrg' b$.
\end{namedthm}

\begin{proof}
	Let $\abstrgeq^\triangle$ be the binary relation on $\mathcal{A}$ such that, for any $c,d \in \mathcal{A}$, $c \abstrgeq^\triangle d$ iff either (i)~$c \abstrgeq d$ or (ii)~$c = a$ and $d = b$.
	It suffices to show that $\abstrgeq^\triangle$ admits a complete and transitive extension to $\mathcal{A}$.
	So by the \hyperref[theorem:suzumura]{Richter--Suzumura extension theorem}, what we must show is that $\abstrgeq^\triangle$ is Suzumura consistent.

	To this end, let $a_1 \abstrgeq^\triangle \dots \abstrgeq^\triangle a_K$ in $\mathcal{A}$; we must establish that $a_K \ngtr^\triangle  a_1$.
	Let $\mathcal{I} = \{k \leq K : \text{$a_k = a$ and $a_{k+1} = b$} \}$, where $K+1$ is treated as $1$ by convention.
	If $\mathcal{I}$ is empty, then $a_1 \abstrgeq^\triangle a_K$ by transitivity of $\abstrgeq$.
	If $\mathcal{I}$ is a singleton $\mathcal{I} = \{k\}$,
	suppose toward a contradiction that $a_K \abstrg a_1$; then $b = a_{k+1} \abstrgeq a_k = a$ by transitivity of $\abstrgeq$, contradicting the hypothesis that $b \not\abstrgeq a$.
	Finally, suppose that $\abs*{\mathcal{I}} > 1$.
	Then there exist $k_1 < k_2$ such that $a_{k_1} = b$, $a_{k_2} = a$, and $a_k \abstrgeq a_{k+1}$ for all $k \in \{k_1,\dots,k_2-1\}$.
	It follows by transitivity of $\abstrgeq$ that $b \abstrgeq a$, a contradiction with $b \not\abstrgeq a$---thus $\abs*{\mathcal{I}} \leq 1$.
\end{proof}

The proof of \Cref{lemma:other_UB} (\cref{app:appendix:lemma_other_UB_pf}) uses a more general theorem.

\begin{definition}
	\label{definition:order_pair}
	An \emph{order pair} on a set $\mathcal{A}$ is a pair $(\mathord{\abstrgeq},\mathord{\abstrgeq'})$ where $\abstrgeq$ and $\abstrgeq'$ are binary relations on $\mathcal{A}$ and for any $a,b \in \mathcal{A}$, $a \abstrgeq' b$ implies $a \abstrgeq b$.
\end{definition}

\begin{definition}
	\label{definition:extension_pair}
	Let $(\mathord{\abstrgeq_1},\mathord{\abstrgeq_1'})$ and $(\mathord{\abstrgeq_2},\mathord{\abstrgeq_2})$ be order pairs on a set $\mathcal{A}$. $(\mathord{\abstrgeq_2},\mathord{\abstrgeq_2'})$ is an \emph{order-pair extension} of $(\mathord{\abstrgeq_1},\mathord{\abstrgeq_1'})$ iff for any $a,b \in \mathcal{A}$, $a \abstrgeq_1 b$ implies $a \abstrgeq_2 b$ and $a \abstrgeq_1' b$ implies $a \abstrgeq_2' b$.
\end{definition}

\begin{definition}
	\label{definition:acyclicity}
	Let $(\mathord{\abstrgeq},\mathord{\abstrgeq'})$ be an order pair on a set $\mathcal{A}$. A \emph{cycle} is a sequence $(a_k)_{k=1}^K$ in $\mathcal{A}$ such that $a_1 \abstrgeq a_2 \abstrgeq \cdots \abstrgeq a_{K-1} \abstrgeq a_K \abstrgeq' a_1$.
\end{definition}

\begin{namedthm}[Richter's theorem.]
	\label{theorem:richter}
	An order pair $(\mathord{\abstrgeq},\mathord{\abstrgeq'})$ on a set $\mathcal{A}$ features no cycles iff it admits an order-pair extension $(\mathord{\prefeq},\mathord{\pref})$ where $\prefeq$ is a preference on $\mathcal{A}$.
\end{namedthm}

This result is due to \textcite{Richter1966}.
See \textcite[][pp. 7--8]{ChambersEchenique2016} for a proof of the `only if' direction (the `if' direction is trivial).

\subsection{Proof of \texorpdfstring{\Cref{lemma:other_UB}}{Lemma \ref{lemma:other_UB}} (§\ref{sec:theory:characterisation}, p. \pageref{lemma:other_UB})}
\label{app:appendix:lemma_other_UB_pf}

We prove the contra-positive, using Richter's theorem (\cref{app:appendix:suzumura}).%
	\footnote{We thank an anonymous referee for suggesting this short argument.}
Suppose that all upper bounds $\prefeq''$ of $P$ satisfy $x \prefeq'' \mathrel{(\pref'')} y$;
we shall exhibit a (strict) $P$-chain from $x$ to $y$.
To that end, let $\abstrgeq$ be the binary relation on $\mathcal{X}$ such that for all $z,w \in \mathcal{X}$, $z \abstrgeq w$ iff either (i)~$z \primg w$ and there is a $P$-chain from $z$ to $w$ or (ii)~$z = y$ and $w = x$.
Let $\abstrgeq'$ be the binary relation on $\mathcal{X}$ such that for all $z,w \in \mathcal{X}$, $z \abstrgeq' w$ iff either (i)~$z \primg w$ and there is a strict $P$-chain from $z$ to $w$ or (ii)~$z = y$ and $w = x$
($z \abstrgeq' w$ iff $z \primg w$ and there is a strict $P$-chain from $z$ to $w$).
Clearly $(\mathord{\abstrgeq},\mathord{\abstrgeq'})$ is an order pair.

Any $\mathord{\prefeq''} \in \mathcal{P}$ such that $(\mathord{\prefeq''},\mathord{\pref''})$ is an order-pair extension of $(\mathord{\abstrgeq},\mathord{\abstrgeq'})$ must be an upper bound of $P$ by \Cref{lemma:UB_characterisation} (\cpageref{lemma:UB_characterisation}), and must satisfy $y \pref'' \mathrel{(\prefeq'')} x$ by construction of $\abstrgeq'$ (of $\abstrgeq$).
By hypothesis, no $\prefeq''$ with these properties exists.
Hence by \hyperref[theorem:richter]{Richter's theorem}, the order pair $(\mathord{\abstrgeq},\mathord{\abstrgeq'})$ features cycles.
Let $(z_k)_{k=1}^K$ be a cycle of minimal length.
Since $(z_k)_{k=1}^K$ is a cycle, there must be a $k' \leq \mathrel{(<)} K$ at which $z_{k'} = y$ and $z_{k'+1} = x$ (where $z_{K+1} \coloneqq z_1$ by convention), since otherwise $z_1 \primg \dots \primg z_K \primg z_1$.
Since $(z_k)_{k=1}^K$ is of minimal length, the index $k'$ is unique, and thus by definition of $\abstrgeq$ there is a $P$-chain from $z_k$ to $z_{k+1}$ for every $k \neq k'$ in $\{1,\dots,K-1\}$ (and by definition of $\abstrgeq'$ there is a strict $P$-chain from $z_K$ to $z_1$).
Hence $(w_k)_{k=1}^K \coloneqq (z_{k+k'})_{k=1}^K$, where the indices are interpreted modulo $K$, is a (strict) $P$-chain from $x$ to $y$.
\qed

\subsection{Proof that \texorpdfstring{\ref{bullet:ex_red} implies \ref{bullet:ex_ub} in the \hyperref[theorem:existence]{existence theorem}}{(1) implies (3) in the existence theorem} (§\ref{sec:theory:existence:theorem}, p. \pageref{theorem:existence})}
\label{app:appendix:existence_pf}

For a given $P \subseteq \mathcal{P}$, let $\prefeq^\circ$ be the (in general, incomplete) binary relation that satisfies conditions \ref{bullet:join_weak}--\ref{bullet:join_strict} in the \hyperref[theorem:characterisation]{characterisation theorem} for $\primgeq$-comparable pairs of alternatives, and that does not rank $\primgeq$-incomparable pairs of alternatives.%
	\footnote{Explicitly: for any $x,y \in \mathcal{X}$,
	if $x,y$ are $\primgeq$-incomparable then $x \nprefeq^\circ y \nprefeq^\circ x$, and
	if $x,y$ are $\primgeq$-comparable, wlog $x \primgeq y$, then
	$x \prefeq^\circ y$ iff there is a $P$-chain from $x$ to $y$, and
	$y \prefeq^\circ x$ iff there is no strict $P$-chain from $x$ to $y$.}
For each $P$, $\prefeq^\circ$ clearly exists and is unique.

In light of the \hyperref[theorem:characterisation]{characterisation theorem}, property \ref{bullet:ex_ub} in the \hyperref[theorem:existence]{existence theorem} requires precisely that $\prefeq^\circ$ admit a complete and transitive extension (i.e. an extension that lives in $\mathcal{P}$) for any $P \subseteq \mathcal{P}$.%
	\footnote{The term `extension' was defined in \cref{app:appendix:suzumura}.}
Our task is therefore to show that whenever $\primgeq$ is crown- and diamond-free, $\prefeq^\circ$ admits a complete and transitive extension for every $P \subseteq \mathcal{P}$.
We will use the \hyperref[theorem:suzumura]{Richter--Suzumura extension theorem} (\cref{app:appendix:suzumura}).

We first state two lemmata, then use them to show that \ref{bullet:ex_red} implies \ref{bullet:ex_ub}.
The remainder of this appendix is devoted to proving the lemmata.

\begin{definition}
	\label{definition:weak_trans}
	A binary relation $\abstrgeq$ on a set $\mathcal{A}$ is \emph{weakly transitive} iff
	for $a,b,c \in \mathcal{A}$, if $a \abstrgeq b \abstrgeq c$ and $a,c$ are $\abstrgeq$-comparable, then $a \abstrgeq c$.
\end{definition}

\begin{lemma}[weak transitivity of $\prefeq^\circ$]
	\label{lemma:weak_transitivity}
	Suppose that $\primgeq$ is diamond-free. Then $\prefeq^\circ$ is weakly transitive for any $P \subseteq \mathcal{P}$.
\end{lemma}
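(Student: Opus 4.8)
Fix $a,b,c \in \mathcal{X}$ with $a \succeq^\circ b \succeq^\circ c$ and $a,c$ $\succeq^\circ$-comparable; the plan is to deduce $a \succeq^\circ c$. Since $\succeq^\circ$ relates no $\gtrsim$-incomparable pair, each of $a \comp b$, $b \comp c$, $a \comp c$ holds, so $\{a,b,c\}$ is a $\gtrsim$-chain. The cases where two of $a,b,c$ coincide are immediate ($\succeq^\circ$ is reflexive, a singleton being a $P$-chain), so I would assume them distinct and split according to the $\gtrsim$-order of $\{a,b,c\}$. Throughout one reads off the \hyperref[theorem:characterisation]{characterisation theorem}: for $x \gtrsim y$, condition $(\star)$ says $x \succeq^\circ y$ iff there is a $P$-chain from $x$ to $y$, and $(\star\star)$ says $y \succeq^\circ x$ iff there is no strict $P$-chain from $x$ to $y$.

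Diamond-freeness enters only through this: for $\gtrsim$-comparable $x > z$, the order interval $[z,x] \coloneqq \{y : x \gtrsim y \gtrsim z\}$ is a $\gtrsim$-chain, since two incomparable elements of it, with $x$ and $z$, would form a diamond. Hence every term of a $P$-chain $(w_k)_{k=1}^K$ from $x$ to $z$ lies in $[z,x]$ and so is $\gtrsim$-comparable to any fixed $m$ with $x \gtrsim m > z$; putting $i_0 \coloneqq \max\{k : w_k \gtrsim m\}$ then gives $1 \le i_0 < K$ with $w_{i_0} \gtrsim m > w_{i_0+1}$, locating where $m$ crosses the chain.

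Three of the six cases are routine. If $a \gtrsim b \gtrsim c$, the $P$-chains witnessing $a \succeq^\circ b$ and $b \succeq^\circ c$ concatenate into one from $a$ to $c$, so $a \succeq^\circ c$ by $(\star)$. If $b \gtrsim c \gtrsim a$ or $c \gtrsim a \gtrsim b$, a putative strict $P$-chain from $c$ to $a$, prepended (resp. appended) with a $P$-chain witnessing $b \succeq^\circ c$ (resp. $a \succeq^\circ b$), would be a strict $P$-chain from $b$ to $a$ (resp. from $c$ to $b$), contradicting $a \succeq^\circ b$ (resp. $b \succeq^\circ c$); so none exists and $a \succeq^\circ c$ by $(\star\star)$. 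In the mixed cases $a \gtrsim c \gtrsim b$ and $b \gtrsim a \gtrsim c$ one must build a $P$-chain from $a$ to $c$: take a $P$-chain from $a$ to $b$ (resp. from $b$ to $c$), and let $i_0$ be the crossing index of $c$ (resp. of $a$). If that element is one of the $w_k$, truncating the chain at it works directly; otherwise it lies strictly between $w_{i_0}$ and $w_{i_0+1}$, and one fixes $\mathord{\succeq_0}\in P$ with $w_{i_0}\succeq_0 w_{i_0+1}$. By completeness of $\succeq_0$, either $w_{i_0}\succeq_0 c$ (resp. $a\succeq_0 w_{i_0+1}$), so $(w_1,\dots,w_{i_0},c)$ (resp. $(a,w_{i_0+1},\dots,w_K)$) is the desired $P$-chain; or else $c\succ_0 w_{i_0}$ (resp. $w_{i_0+1}\succ_0 a$), whence transitivity of $\succeq_0$ gives $c\succ_0 w_{i_0+1}$ (resp. $w_{i_0}\succ_0 a$) and hence a strict $P$-chain from $c$ to $b$ (resp. from $b$ to $a$), contradicting $b\succeq^\circ c$ (resp. $a\succeq^\circ b$). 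Either way $a\succeq^\circ c$.

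The remaining case $c \gtrsim b \gtrsim a$, in which both hypotheses and the goal are all of the form ``no strict $P$-chain'', is the crux. Supposing a strict $P$-chain $(w_k)_{k=1}^K$ from $c$ to $a$ existed, I would take $i_0$ to be the crossing index of $b$. If $b=w_{i_0}$, the strict juncture lies in the segment $(w_1,\dots,w_{i_0})$ or the segment $(w_{i_0},\dots,w_K)$, giving a strict $P$-chain from $c$ to $b$ or from $b$ to $a$ --- a contradiction. Otherwise $w_{i_0}>b>w_{i_0+1}$; fixing $\mathord{\succeq_0}\in P$ with $w_{i_0}\succeq_0 w_{i_0+1}$ and comparing $b$ under $\succeq_0$: if $b\succ_0 w_{i_0}$ then $b\succ_0 w_{i_0+1}$ (strict $P$-chain from $b$ to $a$); if $w_{i_0+1}\succ_0 b$ then $w_{i_0}\succ_0 b$ (strict $P$-chain from $c$ to $b$); and if $w_{i_0}\succeq_0 b\succeq_0 w_{i_0+1}$, the steps $(w_{i_0},b)$ and $(b,w_{i_0+1})$ split the chain into a $P$-chain from $c$ to $b$ and one from $b$ to $a$ whose concatenation retains the original strict juncture --- unless that juncture was $(w_{i_0},w_{i_0+1})$ itself, in which case a preference strictly ranking $w_{i_0}$ over $w_{i_0+1}$ must, by completeness and transitivity, strictly rank $w_{i_0}$ over $b$ or $b$ over $w_{i_0+1}$, again yielding a forbidden strict $P$-chain. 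Every branch contradicts a hypothesis, so $a\succeq^\circ c$. The bookkeeping in this last case --- tracking where the strict juncture sits relative to the crossing point --- is the main obstacle; everything else is a direct reading of $(\star)$ and $(\star\star)$.
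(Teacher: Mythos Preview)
Your proof is correct and follows essentially the same six-case analysis as the paper, with diamond-freeness used in the same way---to locate where the intermediate element crosses a given $P$-chain (your ``intervals are $\gtrsim$-chains'' is a clean repackaging of the paper's \Cref{observation:diamonds_crossing}). Your organisation of the hardest case $c \gtrsim b \gtrsim a$ differs slightly---you case-split on how $b$ compares to $w_{i_0}$ and $w_{i_0+1}$ under $\succeq_0$, whereas the paper case-splits on the position of the strict juncture relative to the crossing index---but the two routes reach the same contradictions.
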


Weak transitivity is implied by Suzumura consistency (take $K=3$).
The converse is false in general,%
	\footnote{Consider $\mathcal{A} = \{a,b,c,d\}$ and the binary relation $\abstrgeq$ such that $a \abstrg b \abstrg c \abstrg d \abstrg a$ and no other pairs are $\abstrgeq$-comparable. This relation is weakly transitive, but clearly not Suzumura consistent.}
but true for $\prefeq^\circ$ when $\primgeq$ has no crowns:

\begin{lemma}[Suzumura consistency of $\prefeq^\circ$]
	\label{lemma:weak_trans_suzumura}
	Suppose that $\primgeq$ is crown-free. Then if $\prefeq^\circ$ is weakly transitive, it is Suzumura consistent.
\end{lemma}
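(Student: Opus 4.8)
The plan is to reformulate both the hypotheses and the conclusion in graph-theoretic terms. First, observe that for distinct $\gtrsim$-comparable $x,y$ --- say $x > y$ --- property $(\star)$ gives $x \succeq^\circ y$ iff there is a $P$-chain from $x$ to $y$, and $(\star\star)$ gives $y \succeq^\circ x$ iff there is no strict $P$-chain from $x$ to $y$. Since a strict $P$-chain is a $P$-chain, exactly one of the cases --- a strict $P$-chain exists, a $P$-chain but no strict one exists, no $P$-chain exists --- obtains, and each makes $x$ and $y$ $\succeq^\circ$-comparable. As $\gtrsim$-incomparable pairs are by construction not ranked by $\succeq^\circ$, I conclude that for distinct alternatives, $\succeq^\circ$-comparability coincides with $\gtrsim$-comparability. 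Second, $\succeq^\circ$ fails Suzumura-consistency precisely when there is a directed cycle $b_1 \succeq^\circ b_2 \succeq^\circ \cdots \succeq^\circ b_m \succeq^\circ b_1$ carrying an \emph{asymmetric link}, i.e. some $k$ with $b_k \succeq^\circ b_{k+1}$ but $b_{k+1} \nsucceq^\circ b_k$; call such a cycle \emph{bad}. Suppose a bad cycle exists and fix one, $C=(b_1,\dots,b_m)$, of minimal length $m$. Reflexive links and $2$-cycles are symmetric, so $m\geq 3$; and $C$ is simple, since a repeated vertex splits $C$ into two shorter cycles, one of which still carries the asymmetric link and is therefore a shorter bad cycle.

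The key step is to show that \emph{non-consecutive} vertices $b_i,b_j$ of $C$ are $\succeq^\circ$-incomparable --- hence, since $C$ is simple, $\gtrsim$-incomparable by the first observation. Suppose instead $b_i\succeq^\circ b_j$. Both arcs of $C$ between $b_i$ and $b_j$ have length $\geq 2$. The chord $b_i\succeq^\circ b_j$ together with the forward arc from $b_j$ back to $b_i$ forms a cycle $C_1$ with $|C_1|<m$; a hypothetical reverse chord $b_j\succeq^\circ b_i$ together with the forward arc from $b_i$ to $b_j$ would form a cycle $C_2$ with $|C_2|<m$. If the asymmetric link of $C$ lies on the arc from $b_j$ to $b_i$, then $C_1$ is bad, contradicting minimality. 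Otherwise it lies on the arc from $b_i$ to $b_j$: if $C_1$ contains any asymmetric link I am done as before, and if not then every link of $C_1$ is symmetric, so in particular $b_j\succeq^\circ b_i$, whence $C_2$ is a genuine bad cycle with $|C_2|<m$ --- again a contradiction. Carefully tracking which of the two sub-cycles inherits the asymmetric link is the one delicate point; the rest is bookkeeping.

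It remains to close the loop. Consecutive vertices $b_i,b_{i+1}$ are $\succeq^\circ$-comparable, hence $\gtrsim$-comparable and (as $C$ is simple) distinct, so exactly one of $b_i>b_{i+1}$, $b_{i+1}>b_i$ holds. If $m\geq 4$, then $b_{i-1}$ and $b_{i+1}$ are non-consecutive, hence $\gtrsim$-incomparable by the key step; so neither $b_{i-1}>b_i>b_{i+1}$ nor $b_{i-1}<b_i<b_{i+1}$ can hold, since transitivity of $\gtrsim$ would then make $b_{i-1}$ and $b_{i+1}$ $\gtrsim$-comparable. Thus the direction of the links alternates around $C$, forcing $m$ even and each $b_i$ at an even index (after a cyclic relabelling) to satisfy $b_{i-1}>b_i<b_{i+1}$; together with the incomparability of non-consecutive pairs, this exhibits $(b_1,\dots,b_m)$ as an $m$-crown in $\gtrsim$, contradicting crown-freeness. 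The only remaining case is $m=3$: then $b_1,b_2,b_3$ are pairwise $\succeq^\circ$-comparable, and applying weak transitivity to $b_1\succeq^\circ b_2\succeq^\circ b_3$, to $b_2\succeq^\circ b_3\succeq^\circ b_1$, and to $b_3\succeq^\circ b_1\succeq^\circ b_2$ yields $b_1\succeq^\circ b_3$, $b_2\succeq^\circ b_1$ and $b_3\succeq^\circ b_2$, so all six ordered pairs hold and $C$ carries no asymmetric link --- a contradiction. (Weak transitivity is used only in this last case.) This completes the proof.
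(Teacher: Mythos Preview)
Your proof is correct and takes a genuinely different route from the paper's.

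The paper first proves a separate structural lemma (its Observation~2): in a transitive crown-free relation, every weak cycle $(a_1,\dots,a_K)$ admits a three-element ``shortcut'' $(a_1,a_k,a_K)$. It then runs a strong induction on the cycle length $K$, using the shortcut to decompose the cycle into shorter ones to which the inductive hypothesis applies, and invoking weak transitivity at each combining step as well as in the base case $K=3$.

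You instead take a minimal bad cycle and show directly, via minimality, that it has no chords: non-consecutive vertices are $\succeq^\circ$-incomparable, hence $\gtrsim$-incomparable. For length $m\geq 4$ this, together with transitivity of $\gtrsim$, forces the $\gtrsim$-directions along the cycle to alternate, exhibiting the cycle itself as an $m$-crown; crown-freeness then gives the contradiction. Weak transitivity enters only at $m=3$. Your argument essentially absorbs the paper's Observation~2 into the main proof: the paper proves that a minimal shortcut-free weak cycle in $\gtrsim$ is a crown, whereas you prove that a minimal bad $\succeq^\circ$-cycle is (for $m\geq 4$) a crown. The paper's approach is more modular---Observation~2 is a reusable fact about crown-free posets---while yours is more self-contained, uses weak transitivity more sparingly, and makes the role of crown-freeness especially transparent.
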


\begin{proof}[Proof that \ref{bullet:ex_red} implies \ref{bullet:ex_ub}]
	Suppose that $\primgeq$ is crown- and diamond-free, and fix any $P \subseteq \mathcal{P}$.
	Since $\primgeq$ is diamond-free, $\prefeq^\circ$ is weakly transitive by \Cref{lemma:weak_transitivity}.
	Since $\primgeq$ is crown-free, it follows by \Cref{lemma:weak_trans_suzumura} that $\prefeq^\circ$ is Suzumura consistent.
	Invoking the \hyperref[theorem:suzumura]{Richter--Suzumura extension theorem} (\cref{app:appendix:suzumura}), we conclude that $\prefeq^\circ$ admits a complete and transitive extension.
\end{proof}

It remains to prove \Cref{lemma:weak_transitivity,lemma:weak_trans_suzumura}.
Begin with the former.
The role of diamond-freeness is to ensure a `crossing' property of decreasing sequences:

\begin{observation}
	\label{observation:diamonds_crossing}
	Suppose that $\primgeq$ is diamond-free, and consider $x,y,z \in \mathcal{X}$ with $x \primg z \primg y$.
	Let $(w_k)_{k=1}^K$ be a finite $\primgeq$-decreasing sequence with $w_1=x$ and $w_K=y$,
	and let $k'$ be the last $k<K$ such that $w_{k'} \primgeq z$.
	Then $z \primg w_{k'+1}$.
\end{observation}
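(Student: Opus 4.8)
The plan is to derive $z > w_{k'+1}$ by establishing its two halves separately: that $z \gtrsim w_{k'+1}$, and that we do \emph{not} have $w_{k'+1} \gtrsim z$. First I would record that $k'$ is well defined: $x > z > y$ forces $x \neq y$, hence $K \geq 2$, and $w_1 = x \gtrsim z$, so the set of eligible indices is non-empty and $1 \leq k' < K$.

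The failure of $w_{k'+1} \gtrsim z$ is immediate from the maximality built into the definition of $k'$. If $k'+1 < K$, this holds because $k'$ was chosen to be the \emph{last} index $k$ below $K$ with $w_k \gtrsim z$. If instead $k'+1 = K$, then $w_{k'+1} = w_K = y$, and $y \not\gtrsim z$ since $z > y$.

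The substantive part is $z \gtrsim w_{k'+1}$, and this is where diamond-freeness enters; I expect it to be the only real step. I would argue by contradiction: if $z \not\gtrsim w_{k'+1}$, then by the previous paragraph $z$ and $w_{k'+1}$ are $\gtrsim$-incomparable, and I claim $(w_{k'}, z, w_{k'+1}, y)$ is then a diamond. The four required comparabilities are in hand: $w_{k'} \gtrsim z$ by the choice of $k'$; $z \gtrsim y$ from $z > y$; $w_{k'} \gtrsim w_{k'+1}$ because $(w_k)$ is decreasing; and $w_{k'+1} \gtrsim y$ by transitivity of $\gtrsim$ along the tail $w_{k'+1} \gtrsim w_{k'+2} \gtrsim \cdots \gtrsim w_K = y$ (trivial when $k'+1 = K$). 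Since the middle pair $z, w_{k'+1}$ is incomparable, this contradicts diamond-freeness, so $z \gtrsim w_{k'+1}$ after all; combined with the preceding paragraph this yields $z > w_{k'+1}$.

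The one point requiring care — the closest thing to an obstacle — is that a diamond should consist of four \emph{distinct} elements, whereas the sequence $(w_k)$ need not be injective. But every possible coincidence is cheaply excluded using anti-symmetry of $\gtrsim$ together with the standing facts: $w_{k'} = z$ or $w_{k'+1} = y$ would already give $z \gtrsim w_{k'+1}$ directly; $w_{k'+1} = w_{k'}$ would give $w_{k'+1} \gtrsim z$, contradicting the second paragraph; $w_{k'} = y$ would force $z = y$ via $w_{k'} \gtrsim z \gtrsim y = w_{k'}$; and $z = w_{k'+1}$ is ruled out by incomparability. So distinctness holds automatically, and the diamond argument goes through as stated.
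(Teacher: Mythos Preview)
Your proof is correct and follows essentially the same approach as the paper: show that $w_{k'+1} \not\gtrsim z$ by maximality of $k'$, then rule out incomparability of $z$ and $w_{k'+1}$ by exhibiting a diamond. The only difference is cosmetic: the paper takes $(x,z,w_{k'+1},y)$ as the diamond (using $x$ as the top vertex) rather than your $(w_{k'},z,w_{k'+1},y)$; the paper's choice makes distinctness slightly more immediate, but your explicit treatment of that point is in any case sound.
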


\begin{proof}
	Since $w_1=x \primg z$, there exist $k<K$ such that $w_k \primgeq z$, so $k'$ is well-defined.
	It cannot be that $w_{k'+1} \primgeq z$, by definition of $k'$.
	Nor can it be that $z,w_{k'+1}$ are $\primgeq$-incomparable, for then $(x,z,w_{k'+1},y)$ is a diamond: $x \primgeq z \primgeq y$ (by hypothesis), $x \primgeq w_{k'+1} \primgeq y$ (since $(w_k)_{k=1}^K$ is $\primgeq$-decreasing from $x$ to $y$), and $z,w_{k'+1}$ are $\primgeq$-incomparable.
	Hence $z \primg w_{k'+1}$ by anti-symmetry of $\primgeq$.
\end{proof}

We will use the following piece of notation: given a set $\mathcal{A}$, a sequence $( a_n )_{n=1}^N$ in $\mathcal{A}$ and an element $b \in \mathcal{A}$, we write $(b) \union (a_n)_{n=1}^N$ for the sequence $(b,a_1,\dots,a_N)$, and similarly $(a_n)_{n=1}^N \union (b)$ for the sequence $(a_1,\dots,a_N,b)$.

\begin{proof}[Proof of \Cref{lemma:weak_transitivity}]
	Suppose that $\primgeq$ features no diamonds, and fix a $P \subseteq \mathcal{P}$ and $x,y,z \in \mathcal{X}$ such that $x \prefeq^\circ y \prefeq^\circ z$ and $x,z$ are $\prefeq^\circ$-comparable. We must show that $x \prefeq^\circ z$.
	This is immediate if $x,y,z$ are not distinct, so suppose that they are.
	Then by anti-symmetry of $\primgeq$, there are six cases to check: one for each strict ordering by $\primgeq$ of $x,y,z$.

	\emph{Case 1: $x \primg y \primg z$.} Since $x \prefeq^\circ y$ and $x \primg y$, by \ref{bullet:join_weak} there is a $P$-chain from $x$ to $y$. Similarly there is a $P$-chain from $y$ to $z$. The concatenation of these two $P$-chains is a $P$-chain from $x$ to $z$; hence $x \prefeq^\circ z$ by \ref{bullet:join_weak}.

	\emph{Case 2: $x \primg z \primg y$.} Since $x \primgeq y$ and $x \prefeq^\circ y$, there is a $P$-chain $(w_k)_{k=1}^K$ from $x$ to $y$ by \ref{bullet:join_weak}.
	Let $k'$ be the last $k < K$ for which $w_k \primgeq z$, so that $w_{k'} \primgeq z \primg w_{k'+1}$ by \Cref{observation:diamonds_crossing}. By definition of a $P$-chain, there is a preference $\prefeq$ in $P$ such that $w_{k'} \prefeq w_{k'+1}$.
	It must be that $w_{k'+1} \prefeq z$, for otherwise $( z ) \union ( w_k )_{k=k'+1}^K$ would be a strict $P$-chain from $z$ to $y$, in which case $y \nprefeq^\circ z$ by \ref{bullet:join_strict}, a contradiction.
	So we have $w_{k'} \prefeq w_{k'+1} \prefeq z$, which by transitivity of $\prefeq$ yields $w_{k'} \prefeq z$. It follows that $( w_k )_{k=1}^{k'} \union ( z )$ is a $P$-chain from $x$ to $z$, so that $x \prefeq^\circ z$ by \ref{bullet:join_weak}.

	\emph{Case 3: $y \primg x \primg z$.} This case is similar to the second. Since $y \primgeq z$ and $y \prefeq^\circ z$, there is a $P$-chain $( w_k )_{k=1}^K$ from $y$ to $z$ by \ref{bullet:join_weak}.
	Let $k'$ be the last $k < K$ for which $w_k \primgeq x$, so that $w_{k'} \primgeq x \primg w_{k'+1}$ by \Cref{observation:diamonds_crossing}. By definition of a $P$-chain, there is a preference $\prefeq$ in $P$ such that $w_{k'} \prefeq w_{k'+1}$.
	It must be that $x \prefeq w_{k'}$, for otherwise $( w_k )_{k=1}^{k'} \union ( x )$ would be a strict $P$-chain from $y$ to $x$, in which case $x \nprefeq^\circ y$ by \ref{bullet:join_strict}, a contradiction.
	So we have $x \prefeq w_{k'} \prefeq w_{k'+1}$, which by transitivity of $\prefeq$ yields $x \prefeq w_{k'+1}$. It follows that $( x ) \union ( w_k )_{k=k'+1}^K$ is a $P$-chain from $x$ to $z$, so that $x \prefeq^\circ z$ by \ref{bullet:join_weak}.

	\emph{Case 4: $y \primg z \primg x$.} Suppose toward a contradiction that $x \nprefeq^\circ z$. Then since $z \primgeq x$, by \ref{bullet:join_strict} there exists a strict $P$-chain from $z$ to $x$.
	Since $y \primgeq z$ and $y \prefeq^\circ z$, there is a $P$-chain from $y$ to $z$ by \ref{bullet:join_weak}.
	Concatenating these two $P$-chains yields a strict $P$-chain from $y$ to $x$, so that $x \nprefeq^\circ y$ by \ref{bullet:join_strict}, a contradiction.

	\emph{Case 5: $z \primg x \primg y$.} This case is similar to the fourth.
	Suppose toward a contradiction that $x \nprefeq^\circ z$. Then since $z \primgeq x$, by \ref{bullet:join_strict} there exists a strict $P$-chain from $z$ to $x$.
	Since $x \primgeq y$ and $x \prefeq^\circ y$, there is a $P$-chain from $x$ to $y$ by \ref{bullet:join_weak}.
	Concatenating these two $P$-chains yields a strict $P$-chain from $z$ to $y$, so that $y \nprefeq^\circ z$ by \ref{bullet:join_strict}, a contradiction.

	\emph{Case 6: $z \primg y \primg x$.} Suppose toward a contradiction that $x \nprefeq^\circ z$. Then by \ref{bullet:join_strict}, there is a strict $P$-chain $( w_k )_{k=1}^K$ from $z$ to $x$.
	Let $k'$ be the last $k < K$ for which $w_k \primgeq y$, so that $w_{k'} \primgeq y \primg w_{k'+1}$ by \Cref{observation:diamonds_crossing}. By definition of a $P$-chain, there is a preference $\prefeq$ in $P$ such that $w_{k'} \prefeq w_{k'+1}$. Since $( w_k )_{k=1}^K$ is a strict $P$-chain, there are $k''$ and $\mathord{\prefeq''} \in P$ be such that $w_{k''} \pref'' w_{k''+1}$, with $\mathord{\pref''} = \mathord{\pref}$ if $k'' = k'$.

	Case 6, sub-case A: $k'' < k'$. It must be that $y \pref w_{k'}$, for otherwise $( w_k )_{k=1}^{k'} \union ( y )$ would be a strict $P$-chain from $z$ to $y$, in which case $y \nprefeq^\circ z$ by \ref{bullet:join_strict}, a contradiction. So we have $y \pref w_{k'} \prefeq w_{k'+1}$, which by transitivity of $\prefeq$ yields $y \pref w_{k'+1}$. It follows that $( y ) \union ( w_k )_{k=k'+1}^K$ is a strict $P$-chain from $y$ to $x$, so that $x \nprefeq^\circ y$ by \ref{bullet:join_strict}, a contradiction.

	Case 6, sub-case B: $k'' \geq k'$. It must be that $y \prefeq w_{k'}$, for otherwise $( w_k )_{k=1}^{k'} \union ( y )$ would be a strict $P$-chain from $z$ to $y$, in which case $y \nprefeq^\circ z$ by \ref{bullet:join_strict}, a contradiction. So we have $y \prefeq w_{k'} \prefeq w_{k'+1}$, which by transitivity of $\prefeq$ yields $y \prefeq w_{k'+1}$, and $y \pref w_{k'+1}$ if $k'' = k'$. It follows that $( y ) \union ( w_k )_{k=k'+1}^K$ is a strict $P$-chain from $y$ to $x$, so that $x \nprefeq^\circ y$ by \ref{bullet:join_strict}, a contradiction.
\end{proof}

It remains to prove \Cref{lemma:weak_trans_suzumura}.

\begin{definition}
	\label{definition:weak_cycle}
	For a binary relation $\abstrgeq$ on a set $\mathcal{A}$, a \emph{weak cycle} is a finite sequence $(a_k)_{k=1}^K$ of distinct elements of $\mathcal{A}$ such that $a_k,a_{k+1}$ are $\abstrgeq$-comparable for each $k \in \{1,\dots,K\}$, where $a_{K+1}$ is understood as $a_1$.
\end{definition}

Note that crowns and diamonds are both examples of weak cycles.
The role of crown-freeness is to deliver a `shortcut' property of weak cycles:

\begin{observation}
	\label{observation:crowns_oldreducibility}
	Let $\abstrgeq$ be a transitive and crown-free binary relation on a set $\mathcal{A}$.
	Then for any weak cycle $(a_1,\dots,a_K)$ in $\abstrgeq$,
	$(a_1,a_k,a_K)$ is a weak cycle in $\abstrgeq$ for some $k \in \{2,\dots,K-1\}$.
\end{observation}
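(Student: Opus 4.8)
The plan is a straightforward induction on the length $K$ of the weak cycle. For $K=3$ the conclusion is immediate with $k=2$, since $(a_1,a_2,a_3)$ is then itself a weak cycle; so fix $K \geq 4$, assume the claim for all shorter weak cycles in $\geq$, and let $(a_1,\dots,a_K)$ be a weak cycle. The heart of the argument is a dichotomy: either the $\geq$-comparability graph on $\{a_1,\dots,a_K\}$ has a ``chord'' (an edge between two vertices that are not adjacent along the cycle), in which case the claim follows by induction, or it is exactly the $K$-cycle, in which case transitivity forces a crown, contradicting crown-freeness.

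\emph{Step 1 (chords give the conclusion).} Suppose $a_p \comp a_q$ for some $p<q$ with $q \geq p+2$ and $(p,q) \neq (1,K)$. Then $(a_1,\dots,a_p,a_q,\dots,a_K)$ is a weak cycle: its entries are distinct, and consecutive entries are $\geq$-comparable since $a_p \comp a_q$ bridges the two arcs of the original cycle. Its first and last entries are again $a_1$ and $a_K$, and its length $p+(K-q+1)$ lies in $\{3,\dots,K-1\}$, so the inductive hypothesis supplies a middle entry $a_m$ --- necessarily with $m \in \{2,\dots,K-1\}$ --- such that $a_1 \comp a_m$ and $a_m \comp a_K$; together with $a_K \comp a_1$ this makes $(a_1,a_m,a_K)$ the desired weak cycle.

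\emph{Step 2 (no chords is impossible).} If no such chord exists, then (as consecutive vertices of a weak cycle are comparable) the $\geq$-comparability graph on $\{a_1,\dots,a_K\}$ is exactly the cycle $a_1 - a_2 - \dots - a_K - a_1$. Using $K \geq 4$, each $a_i$ is $\geq$-incomparable to $a_{i+2}$ and each $a_{i-1}$ to $a_{i+1}$ (indices modulo $K$); transitivity then rules out $a_i \geq a_{i+1} \geq a_i$, so every cycle edge holds strictly in exactly one direction, and it also rules out $a_{i-1} > a_i > a_{i+1}$ and $a_{i+1} > a_i > a_{i-1}$, so each $a_i$ either exceeds both of its cycle-neighbours or is exceeded by both. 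These local maxima and minima alternate around the cycle, forcing $K$ to be even; and then one of the rotations $(a_1,\dots,a_K)$, $(a_2,\dots,a_K,a_1)$ has its even-indexed entries equal to the local minima and is therefore a $K$-crown --- contradicting crown-freeness. Hence a chord always exists, and by Step 1 we are done.

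The delicate point is Step 1: the chord must be contracted in precisely the way that keeps $a_1$ and $a_K$ as the endpoints of the shorter weak cycle, so that its shortcut between $a_1$ and $a_K$ transfers to the original; and one must verify that the new length genuinely lies in $\{3,\dots,K-1\}$ for every admissible $(p,q)$, including the boundary cases $p=1$ and $q=K$, so that induction applies. Step 2 is then a routine orientation-and-parity argument once the comparability graph has been pinned down to the $K$-cycle.
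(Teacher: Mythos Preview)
Your proof is correct and follows essentially the same strategy as the paper's: both arguments reduce to showing that a weak cycle of length $K \geq 4$ with no ``chord'' (no comparable non-adjacent pair other than $a_1,a_K$) must be a crown. The paper packages this as a minimal-counterexample argument rather than explicit induction, and is terser about the orientation/parity reasoning in your Step~2, but the content is the same.
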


\begin{proof}
	We prove the contra-positive. Suppose that $\abstrgeq$ is transitive and that there is a weak cycle $(a_1,\dots,a_K)$ such that $(a_1,a_k,a_K)$ is not a weak cycle for any $k \in \{2,\dots,K-1\}$.
	In particular, choose $(a_1,\dots,a_K)$ to be a weak cycle of \emph{minimal length} with this property. Clearly its length $K$ is $\geq 4$.
	We will show that $(a_1,\dots,a_K)$ is a crown.

	There are two cases, $a_1 \abstrg a_2$ and $a_1 \abstrl a_2$. We consider the former case only; the latter is analogous.
	It must be that $a_2 \abstrl a_3$, for if $a_2 \abstrg a_3$ then $a_1,a_3$ are $\abstrgeq$-comparable by transitivity of $\abstrgeq$, so $(a_1,a_3,\dots,a_K)$ is a weak cycle for which $(a_1,a_k,a_K)$ is not a weak cycle for any $k \in \{3,\dots,K-1\}$, contradicting the minimality of $(a_1,\dots,a_K)$.
	Proceeding using the same argument, we conclude that $a_{k-1} \abstrg a_k \abstrl a_{k+1}$ for every even $k \in \{2,\dots,K-1\}$, where by convention $a_{K+1} \coloneqq a_1$.
	(This is apparent, separately, for $k < K$ even and for $k=K$.)

	Furthermore, it must be that for non-adjacent $k'<k''$ (i.e. those with $2 \leq k''-k' \leq K-2$), $a_{k'},a_{k''}$ are $\abstrgeq$-incomparable, for if they were comparable then $(a_1,\dots,a_{k'},a_{k''},\dots,a_K)$ would be a weak cycle for which $(a_1,a_k,a_K)$ is not a weak cycle for any $k \notin \{1,K\}$, contradicting the minimality of $(a_1,\dots,a_K)$.
	It follows that $(a_1,\dots,a_K)$ is a crown.
\end{proof}

\begin{proof}[Proof of \Cref{lemma:weak_trans_suzumura}]
	Suppose that $\primgeq$ is crown-free and that $\prefeq^\circ$ is weakly transitive.
	We must show that for every $K \geq 3$, the following claim holds:
	\begin{equation}
		\text{for any $x_1,\dots,x_K \in \mathcal{X}$,
		$x_1 \prefeq^\circ \cdots \prefeq^\circ x_K \prefeq^\circ x_1$
		implies $x_1 \prefeq^\circ x_K$.}
		\label{eq:CK}
		\tag*{$C(K)$}
	\end{equation}
	We proceed by strong induction on $K$.
	In the base case $K=3$, $C(3)$ is immediate from weak transitivity of $\prefeq^\circ$.

	Now fix any $K \geq 4$, and suppose that $C(K')$ holds for all $K' \leq K-1$; we will establish $C(K)$.
	Take any $x_1,\dots,x_K \in \mathcal{X}$ with $x_1 \prefeq^\circ \cdots \prefeq^\circ x_K \prefeq^\circ x_1$, wlog distinct.
	Since $\prefeq^\circ$ compares all and only $\primgeq$-comparable pairs of alternatives, $(x_1,\dots,x_K)$ is a weak cycle in $\primgeq$.
	Since $\primgeq$ (transitive and) crown-free, \Cref{observation:crowns_oldreducibility} implies the existence of a $k \in \{2,\dots,K-1\}$ such that $(x_1,x_k,x_K)$ is a weak cycle in $\primgeq$.
	We consider three cases.

	\emph{Case 1: $k=2$.}
	Since $x_2 \prefeq^\circ \dots \prefeq^\circ x_K \prefeq^\circ x_2$, the inductive hypothesis (in particular, $C(K-1)$) implies that $x_2 \prefeq^\circ x_K$. Since $x_1 \prefeq^\circ x_2$, and $x_1,x_K$ are $\prefeq^\circ$-comparable because they are $\primgeq$-comparable, it follows by weak transitivity that $x_1 \prefeq^\circ x_K$.

	\emph{Case 2: $k=K-1$.}
	This case is analogous to the first.

	\emph{Case 3: $2 < k < K-1$.}
	Since $x_1 \prefeq^\circ \cdots \prefeq^\circ x_k$ and $x_1,x_k$ are $\prefeq^\circ$-comparable (because they are $\primgeq$-comparable), the inductive hypothesis (in particular, $C(k)$) implies that $x_1 \prefeq^\circ x_k$.
	Similarly, $x_k \prefeq^\circ \cdots \prefeq^\circ x_K$, the $\primgeq$-comparability of $x_k,x_K$ and the inductive hypothesis yield $x_k \prefeq^\circ x_K$.
	Since $x_1,x_K$ are $\prefeq^\circ$-comparable (because they are $\primgeq$-comparable), it follows by weak transitivity that $x_1 \prefeq^\circ x_K$.
\end{proof}

\subsection{Proof of \texorpdfstring{\Cref{lemma:2_UBs}}{Lemma \ref{lemma:2_UBs}} (§\ref{sec:theory:uniqueness}, p. \pageref{lemma:2_UBs})}
\label{app:appendix:2_UBs_pf}

Since $\primgeq$ is transitive, and $x,y$ are $\primgeq$-incomparable, the \hyperref[corollary:suzumura]{Suzumura corollary} (\cref{app:appendix:suzumura}) implies that $\primgeq$ admits complete and transitive extensions $\mathord{\prefeq'},\mathord{\prefeq''} \in \mathcal{P}$ such that $x \pref' y$ and $y \pref'' x$.
$\prefeq'$ is an upper bound of $\mathcal{P}$ (and hence of any $P \subseteq \mathcal{P}$) because whenever $z \primgeq \mathrel{(\primg)} w$, we have $z \prefeq' \mathrel{(\pref')} w$ since $\prefeq'$ extends $\primgeq$; thus the consequent in the definition of `$\mathord{\prefeq'} \SC \mathord{\prefeq}$' (\cpageref{definition:SC}) is satisfied for any $z,w \in \mathcal{X}$ and $\mathord{\prefeq} \in \mathcal{P}$.
Similarly for $\prefeq''$.

\subsection{Results for maximum lower bounds}
\label{app:appendix:meets}

The \emph{inverse} of a binary relation $\abstrgeq$ on a set $\mathcal{A}$
is the binary relation $\abstrleq$ such that for $a,b \in \mathcal{A}$, $a \abstrleq b$ iff $b \abstrgeq a$.

\begin{observation}
	\label{observation:duality}
	If $\SC^{\mathord{\primgeq}}$ ($\SC^{\mathord{\primleq}}$) is the single-crossing-dominance relation induced by (the inverse of) the primitive order $\primgeq$ on $\mathcal{X}$, 
	then $\SC^{\mathord{\primleq}}$ is the inverse of $\SC^{\mathord{\primgeq}}$.
\end{observation}

Hence a maximum lower bound with respect to $\SC^{\mathord{\primgeq}}$ is precisely a minimum upper bound with respect to $\SC^{\mathord{\primleq}}$.
Since $\primleq$ is crown- and diamond-free (complete) iff $\primgeq$ is, the \hyperref[theorem:existence]{existence theorem} (\hyperref[proposition:uniqueness]{uniqueness proposition}) delivers:
\begin{corollary}[existence]
	\label{corollary:P_pre-lattice_join_meet}
	The following are equivalent:
	
	\begin{enumerate}[itemsep=0em,topsep=0em]

		\item Every set of preferences has a minimum upper bound.

		\item Every set of preferences has a maximum lower bound.

		\item $\primgeq$ is crown- and diamond-free.

	\end{enumerate}
\end{corollary}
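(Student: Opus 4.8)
The plan is to derive both equivalences from the \hyperref[theorem:existence]{existence theorem} together with the \hyperref[observation:duality]{duality observation}, with no new combinatorial work. The equivalence of~(1) and~(3) is immediate: it is precisely the equivalence of \ref{bullet:ex_ub} and \ref{bullet:ex_red} in the \hyperref[theorem:existence]{existence theorem}. So the only thing to establish is the equivalence of~(2) and~(3).

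For that, I would reduce to~(1) by passing to the inverse order. Write $\lesssim$ for the inverse of the primitive order $\gtrsim$. By the \hyperref[observation:duality]{duality observation}, $\SC^{\mathord{\lesssim}}$ is the inverse of $\SC^{\mathord{\gtrsim}}$, and hence a $\SC^{\mathord{\gtrsim}}$-maximum $\SC^{\mathord{\gtrsim}}$-lower bound of a set $P \subseteq \mathcal{P}$ is exactly a $\SC^{\mathord{\lesssim}}$-minimum $\SC^{\mathord{\lesssim}}$-upper bound of $P$ (this is the observation already recorded just after the proof of the \hyperref[observation:duality]{duality observation}). Therefore statement~(2)---every $P \subseteq \mathcal{P}$ has a maximum lower bound with respect to $\SC^{\mathord{\gtrsim}}$---is equivalent to: every $P \subseteq \mathcal{P}$ has a minimum upper bound with respect to $\SC^{\mathord{\lesssim}}$. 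Now apply the \hyperref[theorem:existence]{existence theorem} verbatim, but with the primitive order taken to be $\lesssim$ in place of $\gtrsim$; this shows that the latter statement holds if and only if $\lesssim$ is crown- and diamond-free.

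To close the loop I need that $\lesssim$ is crown- and diamond-free exactly when $\gtrsim$ is, which is the one point that deserves a sentence. Diamond-freeness is visibly self-dual, since $(a,b,c,d)$ is a diamond in $\gtrsim$ iff $(d,b,c,a)$ is a diamond in $\lesssim$. Crown-freeness is also self-dual, but this requires a small reindexing: if $(a_1,\dots,a_K)$ is a $K$-crown in $\gtrsim$ then, because $K$ is even, the cyclic shift $(a_2,a_3,\dots,a_K,a_1)$ is a $K$-crown in $\lesssim$---inverting the order swaps the ``valleys'' at even positions with the ``peaks'', and a shift by one realigns them---and incomparability of non-adjacent terms is unaffected. (This equivalence is stated in the remark immediately preceding the corollary.) Chaining $(1)\Leftrightarrow(3)$, $(2)\Leftrightarrow[\lesssim\text{ crown- and diamond-free}]$, and $[\lesssim\text{ crown- and diamond-free}]\Leftrightarrow(3)$ yields the corollary. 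I do not anticipate any real obstacle here; the only step needing a moment's thought is the self-duality of crown-freeness, and even that is a one-line reindexing argument.
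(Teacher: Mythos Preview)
Your proposal is correct and follows essentially the same route as the paper: the paper derives the corollary directly from the existence theorem together with the duality observation, using that a $\SC^{\gtrsim}$-maximum lower bound is a $\SC^{\lesssim}$-minimum upper bound and that $\lesssim$ is crown- and diamond-free iff $\gtrsim$ is. You actually spell out the self-duality of crowns and diamonds more explicitly than the paper, which simply asserts it in passing.
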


\begin{corollary}[uniqueness]
	\label{corollary:uniqueness_join_meet}
	The following are equivalent:
	
	\begin{enumerate}[itemsep=0em,topsep=0em]

		\item Every set of preferences has \emph{at most} one minimum upper bound.

		\item Every set of preferences has \emph{exactly} one minimum upper bound.

		\item Every set of preferences has \emph{at most} one maximum lower bound.

		\item Every set of preferences has \emph{exactly} one maximum lower bound.

		\item $\primgeq$ is complete.

	\end{enumerate}
\end{corollary}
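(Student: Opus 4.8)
The plan is to deduce everything from the \hyperref[proposition:uniqueness]{uniqueness proposition}, applied once to $\gtrsim$ and once to its inverse $\lesssim$, with the \hyperref[observation:duality]{duality observation} bridging the two. Recall from the discussion following the \hyperref[observation:duality]{duality observation} that a $\SC^{\mathord{\gtrsim}}$-maximum lower bound of a set $P$ is \emph{precisely} a $\SC^{\mathord{\lesssim}}$-minimum upper bound of $P$, and that $\lesssim$ is complete iff $\gtrsim$ is. Since $\lesssim$ is itself a partial order, the \hyperref[proposition:uniqueness]{uniqueness proposition} holds verbatim with $\lesssim$ substituted for $\gtrsim$: its proof uses only that the ambient order is a partial order, via the \hyperref[theorem:characterisation]{characterisation theorem} (equivalently, \Cref{corollary:meet_charac}) and the \hyperref[theorem:existence]{existence theorem} (equivalently, \Cref{corollary:P_pre-lattice_join_meet}). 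So nothing needs to be re-proved from scratch.

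Concretely I would argue: items \textit{(1)}, \textit{(2)} and \textit{(5)} are equivalent by the \hyperref[proposition:uniqueness]{uniqueness proposition} applied to $\gtrsim$. Applying the same proposition to $\lesssim$ shows that the three statements ``every set of preferences has at most one $\SC^{\mathord{\lesssim}}$-minimum upper bound'', ``\dots\ exactly one \dots'' and ``$\lesssim$ is complete'' are mutually equivalent; translating the first two through the \hyperref[observation:duality]{duality observation} and the third through ``$\lesssim$ complete iff $\gtrsim$ complete'' turns these into items \textit{(3)}, \textit{(4)} and \textit{(5)}. Hence \textit{(3)}, \textit{(4)}, \textit{(5)} are equivalent, and combining with the first chain gives the equivalence of all five.

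Should a self-contained argument be preferred, I would instead mirror the proof of the \hyperref[proposition:uniqueness]{uniqueness proposition}: \textit{(5)}$\Rightarrow$\textit{(4)} holds because completeness of $\gtrsim$ forces crown- and diamond-freeness, so maximum lower bounds exist by \Cref{corollary:P_pre-lattice_join_meet}, and any two of them agree on all $\gtrsim$-comparable pairs by \Cref{corollary:meet_charac}, hence on all pairs; \textit{(4)}$\Rightarrow$\textit{(3)} and \textit{(2)}$\Rightarrow$\textit{(1)} are trivial; and \textit{(1)}$\Rightarrow$\textit{(5)}, \textit{(3)}$\Rightarrow$\textit{(5)} are the contrapositives supplied by \Cref{lemma:2_UBs}, which for any $\gtrsim$-incomparable $x,y$ produces two upper bounds of $P$ disagreeing on $\{x,y\}$ (and, applied to $\lesssim$, two lower bounds doing the same), so that incompleteness of $\gtrsim$ yields a set of preferences with two distinct minimum upper bounds and a set with two distinct maximum lower bounds.

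There is no genuine obstacle here: the one point requiring a moment's care is the observation that the \hyperref[proposition:uniqueness]{uniqueness proposition}, and the results it rests on, are insensitive to whether the ambient partial order is written $\gtrsim$ or $\lesssim$, so that it may legitimately be invoked for the inverse order. Everything else is bookkeeping through the \hyperref[observation:duality]{duality observation}.
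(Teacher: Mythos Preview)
Your proposal is correct and takes essentially the same approach as the paper: the paper deduces the corollary in one line by noting that $\lesssim$ is complete iff $\gtrsim$ is, so the \hyperref[proposition:uniqueness]{uniqueness proposition} applied to both orders together with the \hyperref[observation:duality]{duality observation} gives all five equivalences. Your first two paragraphs reproduce exactly this argument, and your optional self-contained version simply unpacks it.
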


Finally, the analogue of the \hyperref[theorem:characterisation]{characterisation theorem} is as follows.

\begin{definition}
	\label{definition:revPchains}
	For a set $P \subseteq \mathcal{P}$ of preferences and two alternatives $y \primgeq x$ in $\mathcal{X}$, a \emph{reverse $P$-chain} from $y$ to $x$ is a finite sequence $(w_k)_{k=1}^K$ in $\mathcal{X}$ such that
	(i)~$w_1 = y$ and $w_K = x$,
	(ii)~for every $k<K$, $w_k \primleq w_{k+1}$, and
	(iii)~for every $k<K$, $w_k \prefeq w_{k+1}$ for some $\mathord{\prefeq} \in P$.
	A reverse $P$-chain is \emph{strict} iff $w_k \pref w_{k+1}$ for some $k<K$ and $\mathord{\prefeq} \in P$.
\end{definition}

\begin{corollary}[characterisation]
	\label{corollary:meet_charac}
	For a preference $\mathord{\prefeq_\star} \in \mathcal{P}$ and a set $P \subseteq \mathcal{P}$ of preferences, the following are equivalent:
	
	\begin{enumerate}[itemsep=0em,topsep=0em]

		\item $\prefeq_\star$ is a maximum lower bound of $P$.

		\item $\prefeq_\star$ satisfies: for any $x,y \in \mathcal{X}$ such that $y \primgeq x$,
		\begin{itemize}[itemsep=0em,topsep=0em]

			\item
			$x \prefeq^\star y$ iff there is a reverse $P$-chain from $x$ to $y$, and

			\item
			$y \prefeq^\star x$ iff there is no strict reverse $P$-chain from $x$ to $y$.

		\end{itemize}

	\end{enumerate}
\end{corollary}

\subsection{Proof of \texorpdfstring{\Cref{corollary:consensus_selection}}{Corollary \ref{corollary:consensus_selection}} (§\ref{sec:compstat:consensus}, p. \pageref{corollary:consensus_selection})}
\label{app:supp_appendix:consensus_selection}

Equip the space $2^{\mathcal{P}} \setminus \{\varnothing\}$ of non-empty sets of preferences with the $\SC$-induced strong set order $\gtrsim$.
It is easily verified that $\gtrsim$ is a partial order.
The consensus $C$ is a correspondence $2^{\mathcal{P}} \setminus \{\varnothing\} \rightrightarrows \mathcal{X}$, and \Cref{proposition:compstat_consensus} says precisely that it is increasing.

Let $\mathcal{Q}$ be the set of non-empty $P \subseteq \mathcal{P}$ at which $C(P)$ is non-empty.
The restriction of $C$ to $\mathcal{Q}$ is a non-empty-valued increasing correspondence into $\mathcal{X} \subseteq \R$.
It follows by Theorem 2.7 in \textcite{Kukushkin2013} that it admits an increasing selection $c : \mathcal{Q} \to \mathcal{X}$.%
	\footnote{This step is non-trivial because $\mathcal{X}$ and $\mathcal{P}$ need not be finite, nor even countable.}

Let
$\mathcal{U}(P)
\coloneqq \left\{ P'' \in \mathcal{Q} : P'' \gtrsim P \right\}$
for each non-empty $P \subseteq \mathcal{P}$,
and define $d : 2^{\mathcal{P}} \setminus \{\varnothing\} \to \mathcal{X}$ by
$d(P) \coloneqq \inf\left\{ c(P'') : P'' \in \mathcal{U}(P) \right\}$ if $\mathcal{U}(P) \neq \varnothing$
and $d(P) \coloneqq \sup \mathcal{X}$ otherwise.
The map $d$ is well-defined, and really does map into $\mathcal{X}$, because $(\mathcal{X},\mathord{\geq})$ is a complete lattice by the compactness of $\mathcal{X}$ and the Frink--Birkhoff theorem.%
	\footnote{See e.g. \textcite[Theorem 2.3.1]{Topkis1998}.}
Define an SCF $\phi : \Union_{n \in \N} \mathcal{P}^n \to \mathcal{X}$ by $\phi(\pi) \coloneqq d(\supp \pi)$ for every preference profile $\pi \in \Union_{n \in \N} \mathcal{P}^n$.

Since $d = c$ on $\mathcal{Q}$ (because $c$ is increasing) and $c$ is a selection from $C$, we have that $d(P) \in C(P)$ whenever the latter is non-empty, which is to say that the SCF $\phi$ respects unanimity.

To see that $\phi$ is monotone, consider $P,P' \in \mathcal{P}$ with $P' \gtrsim P$; we must show that $d(P') \geq d(P)$.
Observe that $\mathcal{U}(P') \subseteq \mathcal{U}(P)$.
If $\mathcal{U}(P')$ is empty then $d(P') = \sup \mathcal{X} \geq d(P)$ since $d(P) \in \mathcal{X}$.
If $\mathcal{U}(P)$ is empty then so is $\mathcal{U}(P')$, putting us in the previous case.
If neither is empty then
\begin{equation*}
	d(P')
	= \inf\left\{ c(P'') : P'' \in \mathcal{U}(P') \right\}
	\geq \inf\left\{ c(P'') : P'' \in \mathcal{U}(P) \right\}
	= d(P) .
	\qed
\end{equation*}

\subsection{Proof of \texorpdfstring{\Cref{proposition:compstat_consensus_2}}{Proposition \ref{proposition:compstat_consensus_2}} (§\ref{sec:compstat:consensus}, p. \pageref{proposition:compstat_consensus_2})}
\label{app:supp_appendix:compstat_consensus_2}

Fix $P,P' \subseteq \mathcal{P}$
such that $P'$ dominates $P$ in the $\SC$-induced strong set order
and $C(P) = C(P')$.
Suppose toward a contradiction that there are
alternatives $x > x'$
such that $x$ belongs to $C_2(P) \setminus C(P)$
and $x'$ belongs to $C_2(P') \setminus C(P')$,
but $x' \notin C_2(P) \setminus C(P)$.
(The other case, in which
$x \notin C_2(P') \setminus C(P')$ rather than
$x' \notin C_2(P) \setminus C(P)$,
is symmetric.)

We have $x \notin C_2(P')$ because
$x \notin C(P) = C(P')$
and $x \notin C_2(P') \setminus C(P')$.
Since $x' \in C_2(P')$,
it follows that there is an alternative $y \neq x'$ and a preference $\mathord{\prefeq'} \in P'$ such that $x' \pref' x$ and $y \pref' x$.

\begin{namedthm}[First claim.]
	\label{claim:x_good_in_P}
	$x \prefeq z$ for every $\mathord{\prefeq} \in P$ and each $z \in \mathcal{X}$ such that $x' \neq z < x$.
\end{namedthm}

\begin{proof}[Proof of the \protect{\hyperref[claim:x_good_in_P]{first claim}}]%
	\renewcommand{\qedsymbol}{$\square$}
	Suppose toward a contradiction that $z \pref x$ and $x' \neq z < x$ for some $\mathord{\prefeq} \in P$.
	Since also $x' \pref' x$ and $x' < x$,
	the maximum lower bound $\prefeq_\star$ of $\{ \mathord{\prefeq}, \mathord{\prefeq'} \}$
	must satisfy both $z \pref_\star x$ and $x' \pref_\star x$,
	so that $x \notin X_2(\mathord{\prefeq_\star})$.
	The preference $\prefeq_\star$ belongs to $P$
	since $P'$ dominates $P$ in the $\SC$-induced strong set order,
	so $C_2(P) \subseteq X_2(\mathord{\prefeq_\star})$.
	But then $x \notin C_2(P)$, a contradiction.
\end{proof}%
\renewcommand{\qedsymbol}{$\blacksquare$}

\begin{namedthm}[Second claim.]
	\label{claim:xprime_ge_x_in_P}
	$x' \prefeq x$ for every $\mathord{\prefeq} \in P$.
\end{namedthm}

\begin{proof}[Proof of the \protect{\hyperref[claim:xprime_ge_x_in_P]{second claim}}]%
	\renewcommand{\qedsymbol}{$\square$}
	We first show that $y>x$.
	Suppose toward a contradiction that $y<x$.
	On the one hand, the facts that $y \pref' x$, that $\prefeq'$ belongs to $P'$, and that $P'$ dominates $P$ in the $\SC$-induced strong set order
	together imply that there must be a $\mathord{\prefeq} \in P$ such that $y \pref x$.
	On the other hand, since $x' \neq y < x$,
	the \hyperref[claim:x_good_in_P]{first claim} requires that $x \prefeq y$.
	Contradiction!

	Now, to prove the claim, suppose toward a contradiction that $x \pref x'$ for some $\mathord{\prefeq} \in P$.
	Since $y \pref' x$ and $y > x > x'$,
	it follows that the minimum upper bound $\prefeq^\star$ of $\{ \mathord{\prefeq}, \mathord{\prefeq'} \}$
	satisfies $y \pref^\star x \pref^\star x'$,
	so that $x' \notin X_2(\mathord{\prefeq^\star})$.
	The preference $\prefeq^\star$ belongs to $P'$
	since $P'$ dominates $P$ in the $\SC$-induced strong set order,
	so $C_2(P') \subseteq X_2(\mathord{\prefeq^\star})$.
	But then $x' \notin C_2(P')$, a contradiction.
\end{proof}%
\renewcommand{\qedsymbol}{$\blacksquare$}

Since $x' \notin C(P') = C(P)$ by hypothesis, there is an alternative $z \in \mathcal{X}$ and a preference $\mathord{\prefeq} \in P$ such that $z \pref x'$,
whence $z \pref x$ by the \hyperref[claim:xprime_ge_x_in_P]{second claim}.
Then $z > x$ by the \hyperref[claim:x_good_in_P]{first claim},
and $X(\mathord{\prefeq}) = \{z\}$ since $x \in C_2(P) \subseteq X_2(\mathord{\prefeq})$.

\begin{namedthm}[Third claim.]
	\label{claim:prefpp}
	There is a preference $\mathord{\prefeq''} \in P'$ such that $x' \pref'' x$ and $z \pref'' w$ for every alternative $w \in \mathcal{X}$ satisfying $x' \neq w < z$.
\end{namedthm}

\begin{proof}[Proof of the \protect{\hyperref[claim:prefpp]{third claim}}]%
	\renewcommand{\qedsymbol}{$\square$}
	Let $\mathcal{Z}$ denote the $\prefeq'$-best alternatives in $\mathcal{X} \setminus \{x'\}$,
	and let $z'$ be the $\geq$-smallest element of $\mathcal{Z}$.
	If $z' = z$,
	then since $x' \pref' x$,
	we may take $\mathord{\prefeq''} \coloneqq \mathord{\prefeq'}$.
	Assume for the remainder that $z' \neq z$.

	Note that $z' \pref' x$ since $x' \neq y \pref' x$. 
	This implies that $z' > x$,
	since $z' < x$
	together with the fact that $P'$ dominates $P$ in the $\SC$-induced strong set order
	would imply the existence of a preference $\mathord{\prefeq^\dag} \in P$ such that $z' \pref^\dag x$,
	which is impossible by the \hyperref[claim:x_good_in_P]{first claim}.

	Write $\prefeq^\star$ for the minimum upper bound of $\{ \mathord{\prefeq}, \mathord{\prefeq'} \}$;
	we will show that we may take $\mathord{\prefeq''} \coloneqq \mathord{\prefeq^\star}$.
	Since $P'$ dominates $P$ in the $\SC$-induced strong set order,
	$\prefeq^\star$ lies in $P'$.
	It cannot be that $z' > z$,
	because this together with $z' \prefeq' z$ (by $z \neq x'$), $z > x$ and $z \pref x$
	would imply $z' \prefeq^\star z \pref^\star x'$,
	in which case $x' \notin X_2(\mathord{\prefeq^\star}) \supseteq C_2(P')$, a contradiction.
	Hence $z > z' > x$,
	which since $z \pref z' \pref' x$ implies that
	$z \pref^\star z' \pref^\star x$.
	Then $x' \pref^\star x$ since $x' \in C_2(P') \subseteq X_2(\mathord{\prefeq^\star})$.
	Furthermore, $z \pref^\star w$ for any $w < z$ since $z \pref w$.
\end{proof}%
\renewcommand{\qedsymbol}{$\blacksquare$}

Choose $\mathord{\prefeq''} \in P'$ as per the \hyperref[claim:prefpp]{third claim}.
Let $\prefeq_\star$ be the maximum lower bound of $\{ \mathord{\prefeq}, \mathord{\prefeq''} \}$,
and note that it belongs to $P$ since $P'$ dominates $P$ in the $\SC$-induced strong set order.
We have $x' \pref_\star x$ since $x' < x$ and $x' \pref'' x$.
We furthermore have $z \pref_\star x$
by the \hyperref[theorem:characterisation]{characterisation theorem} (the version for maximum lower bounds, given in \cref{app:appendix:meets}),
since $X(\mathord{\prefeq}) = \{z\}$ and $z \pref'' w$ for every alternative $w \in \mathcal{X}$ satisfying $z > w \geq x$,
so that there is no reverse $\{\prefeq,\prefeq''\}$-chain in from $x$ to $z$.
Thus $x \notin X_2(\mathord{\prefeq_\star}) \supseteq C_2(P)$, a contradiction.
\qed

\subsection{Proof of \texorpdfstring{\hyperref[proposition:bottomtop]{\Cref*{proposition:bottomtop_weak}$^\star$}}{Proposition \ref{proposition:bottomtop_weak}*} (§\ref{sec:compstat:robust}, p. \pageref{proposition:bottomtop})}
\label{app:supp_appendix:proposition:bottomtop}

Let $\prefeq^\star$ denote the minimum upper bound of $P$.
We shall show that \ref{item:rich} is equivalent to \ref{item:tight_bound} and (separately) that \ref{item:tight_bound} is equivalent to \ref{item:consistent}.

\ref{item:rich} implies \ref{item:tight_bound} by \Cref{proposition:bottomtop_weak}.
For the converse, suppose that $P$ is not rich;
we will show that \ref{item:tight_bound} fails.
By hypothesis, there exists a non-empty menu $M = \{x_0,x_1,\dots,x_K\} \subseteq \mathcal{X}$
and preferences $\mathord{\prefeq_1},\dots,\mathord{\prefeq_K} \in P$
such that $x_0 \prefeq_1 x_1 \prefeq_2 x_2 \prefeq_3 \cdots \prefeq_K x_K$,
and yet $x_0 \notin X_M(\mathord{\prefeq})$ for every $\mathord{\prefeq} \in P$.
Let $\primgeq$ be a total order on $\mathcal{X}$ such that $x_0 \primg \dots \primg x_K$.
Then $(x_0,x_1,\dots,x_K)$ is a $P$-chain,
so $x_0 \in X_M(\mathord{\prefeq^\star})$ by the \hyperref[theorem:characterisation]{characterisation theorem}.
It follows that $\max X_M(\mathord{\prefeq^\star}) = x_0 > \max X_M(P)$, so \ref{item:tight_bound} fails.

\ref{item:tight_bound} immediately implies \ref{item:consistent}.
To prove the converse, we shall demonstrate that for any given total order $\primgeq$ on $\mathcal{X}$,
if there is a preference $\mathord{\prefeq} \in \mathcal{P}$ such that $\max X_M(P) = \max X_M(\mathord{\prefeq})$ for every non-empty menu $M \subseteq \mathcal{X}$,
then $\prefeq^\star$ is such a preference.
So fix a total order $\primgeq$ on $\mathcal{X}$,
and suppose toward a contradiction that $\mathord{\prefeq} \in \mathcal{P}$ has the requisite property
whereas $\prefeq^\star$ does not.
Then there is a non-empty menu $M \subseteq \mathcal{X}$
such that $x \coloneqq \max X_M(\mathord{\prefeq}) = \max X_M(P) \neq \max X_M(\mathord{\prefeq^\star}) \eqqcolon y$.
It must be that $x < y$, since (as argued in the proof of \Cref{proposition:bottomtop_weak}) we have $x \leq y$ by the \hyperref[theorem:MS]{MCS theorem} and the fact that $\prefeq^\star$ is an upper bound of $P$.
It follows by the \hyperref[theorem:characterisation]{characterisation theorem} that there is a $P$-chain $(w_k)_{k=1}^K$ from $y$ to $x$, so that
$\max X_{\{w_k,w_{k+1}\}}(P)
= \{w_k\}$
for every $k \in \{1,\dots,K-1\}$.
Then because (by hypothesis) $\max X_{M'}(\mathord{\prefeq}) = \max X_{M'}(P)$ for every non-empty menu $M' \subseteq \mathcal{X}$,
it must be that $y = w_1 \prefeq w_2 \prefeq \cdots \prefeq w_K = x$.
Since $x \in X_M(\mathord{\prefeq})$ by hypothesis and $y \in M$,
it follows that $y \in X_M(\mathord{\prefeq})$.
But then 
$\max X_M(\mathord{\prefeq}) \geq y > x = \max X_M(\mathord{\prefeq})$, which is absurd.
\qed

\end{appendices}



\printbibliography[heading=bibintoc]


\end{document}